\def\showauthornotes{0}
\def\showkeys{0}
\def\showdraftbox{0}
\newcommand{\Authornote}[2]{{\sf\small\color{red}{[#1: #2]}}}
\newcommand{\Authorcomment}[2]{{\sf \small\color{gray}{[#1: #2]}}}
\newcommand{\Authorfnote}[2]{\footnote{\color{red}{#1: #2}}}
\newcommand{\Authornote}[2]{}
\newcommand{\Authorcomment}[2]{}
\newcommand{\Authorfnote}[2]{}
\newcommand{\draftbox}{\begin{center}
  \fbox{%
    \begin{minipage}{2in}%
      \begin{center}%
        \begin{Large}%
          \textsc{Working Draft}%
        \end{Large}\\
        Please do not distribute%
      \end{center}%
    \end{minipage}%
  }%
\end{center}
\vspace{0.2cm}}
\newcommand{\draftbox}{}
\newtheorem{theorem}{Theorem}[section]
\newtheorem{definition}[theorem]{Definition}
\newtheorem{lemma}[theorem]{Lemma}
\newtheorem{remark}[theorem]{Remark}
\newtheorem{corollary}[theorem]{Corollary}
\newtheorem{claim}[theorem]{Claim}
\def\FullBox{\hbox{\vrule width 6pt height 6pt depth 0pt}}
\def\qed{\ifmmode\qquad\FullBox\else{\unskip\nobreak\hfil
\penalty50\hskip1em\null\nobreak\hfil\FullBox
\parfillskip=0pt\finalhyphendemerits=0\endgraf}\fi}
\def\qedsketch{\ifmmode\Box\else{\unskip\nobreak\hfil
\penalty50\hskip1em\null\nobreak\hfil$\Box$
\parfillskip=0pt\finalhyphendemerits=0\endgraf}\fi}
\newenvironment{proof}{\begin{trivlist} \item {\bf Proof:~~}}
   {\qed\end{trivlist}}
\newenvironment{proofof}[1]{\begin{trivlist} \item {\bf Proof
#1:~~}}
  {\qed\end{trivlist}}
\def\to{\rightarrow}
\def\eps{\varepsilon}
\def\epsilon{\varepsilon}
\def\e{\epsilon}
\def\d{\delta}
\def\phi{\varphi}
\def\cal{\mathcal}
\def\ra{\rightarrow}
\def\implies{\Rightarrow}
\newcommand{\defeq}{\stackrel{\mathrm{def}}=}     
\providecommand{\cod}{\mathop{\sf cod}\nolimits}
\renewcommand{\bar}{\overline} 
\newcommand{\ie}{i.e.,\xspace}
\newcommand{\etal}{et al.\xspace}
\newcommand{\R}{{\mathbb R}}
\newcommand{\E}{{\mathbb E}}
\newcommand{\C}{{\mathbb C}}
\newcommand{\N}{{\mathbb{N}}}
\newcommand{\Z}{{\mathbb Z}}
\newcommand{\F}{{\mathbb F}}
\newcommand{\B}{\{0,1\}\xspace}
\newcommand{\pmone}{\{-1,1\}\xspace}
\newcommand{\abs}[1]{\ensuremath{\left\lvert #1 \right\rvert}}
\newcommand{\smallabs}[1]{\ensuremath{\lvert #1 \rvert}}
\newcommand{\norm}[1]{\ensuremath{\left\lVert #1 \right\rVert}}
\newcommand{\ip}[1]{\left\langle #1 \right\rangle}
\newcommand{\Esymb}{\mathbb{E}}
\newcommand{\Psymb}{\mathbb{P}}
\DeclareMathOperator*{\ExpOp}{\Esymb}
\DeclareMathOperator*{\ProbOp}{\Psymb}
\renewcommand{\Pr}{\ProbOp}
\newcommand{\prob}[1]{\Pr\left[{#1}\right]}
\newcommand{\Prob}[2]{\Pr_{{#1}}\left[{#2}\right]}
\newcommand{\ex}[1]{\ExpOp\left[{#1}\right]}
\newcommand{\Ex}[2]{\ExpOp_{{#1}}\left[{#2}\right]}
\newfont{\inhead}{eufm10 scaled\magstep1}
\newcommand{\poly}{{\mathrm{poly}}}
\newcommand{\suchthat}{{\;\; : \;\;}}
\newcommand{\inparen}[1]{\left(#1\right)}             
\newcommand{\inbraces}[1]{\left\{#1\right\}}           
\newcommand{\insquare}[1]{\left[#1\right]}             
\newcommand{\Szemeredi}{Szemer\'edi\xspace}
\newcommand{\Holder}{H\"{o}lder}
\newcommand{\Plunnecke}{Pl\"unnecke}
\newcommand{\algofont}[1]{\fontshape{normal} \texttt{#1}}
\newcommand{\FindQuadratic}{\algofont{Find-Quadratic}\xspace}
\newcommand{\LinearDecomposition}{\algofont{Linear-Decomposition}\xspace}
\newcommand{\edgetest}{\algofont{Edge-Test}\xspace}
\newcommand{\bsgtest}{\algofont{BSG-Test}\xspace}
\newcommand{\bogolyubov}{\algofont{Bogolyubov}\xspace}
\newcommand{\FindQuadraticAverage}{\algofont{Find-QuadraticAverage}\xspace}
\newcommand{\ModelTest}{\algofont{Model-Test}\xspace}
\newcommand{\uthreenorm}[1]{\norm{#1}_{U^3}}
\newcommand{\snorm}[1]{\norm{#1}_{S}}
\newcommand{\comp}[1]{{#1}^*}
\newcommand{\fhat}{\hat{f}}
\newcommand{\fxhat}{\hat{f_x}}
\newcommand{\fyhat}{\hat{f_y}}
\newcommand{\fxyhat}{\widehat{f_{x+y}}}
\newcommand{\trunc}[1]{\ensuremath \algofont{Truncate}_{[-B,B]}\inparen{#1}}
\newcommand{\q}{\bar{q}}
\newcommand{\ortho}[1]{{#1}^{\perp}}
\title{%
  Quadratic Goldreich-Levin Theorems
}%
\author{%
Madhur Tulsiani\thanks{%
    Princeton University and IAS, Princeton, NJ. Work supported by NSF grant CCF-0832797.
  }%
\and Julia Wolf\thanks{%
   Centre de Math\'ematiques Laurent Schwartz, \'Ecole Polytechnique, 91128 Palaiseau, France.
}%
}
\date{\today}
\begin{document}

\sloppy

\maketitle

\draftbox
\setcounter{page}{1}
\thispagestyle{empty}
\begin{abstract}
Decomposition theorems in classical Fourier analysis enable us to express a bounded function in
terms of few linear phases with large Fourier coefficients plus a part that is pseudorandom with
respect to linear phases. 
The Goldreich-Levin algorithm \cite{GoldreichL89} can be viewed as an algorithmic analogue of 
such a decomposition as it gives a way to efficiently find the linear phases associated with large Fourier
coefficients.

In the study of ``quadratic Fourier analysis'', higher-degree analogues of such decompositions have
been developed in which the pseudorandomness property is stronger but the structured part
correspondingly weaker. For example, it has previously been shown that it is possible to express a
bounded function as a sum of a few quadratic phases plus a part that is small in the $U^3$ norm, defined by 
Gowers for the purpose of counting arithmetic progressions of length 4. We give a polynomial time algorithm for computing such a decomposition.

A key part of the algorithm is a local self-correction procedure for Reed-Muller codes of order 2
(over $\F_2^n$) for a function at distance $1/2-\e$ from a codeword. Given a function $f:\F_2^n \to
\{-1,1\}$ at fractional Hamming distance $1/2-\e$ from a quadratic phase (which is a codeword of Reed-Muller
code of order 2), we give an algorithm that runs in time polynomial in $n$ and finds a 
codeword at distance at most $1/2-\eta$ for $\eta = \eta(\e)$.
This is an algorithmic analogue
of Samorodnitsky's result \cite{Samorodnitsky07}, 
which gave a tester for the above problem. 
To our knowledge, it represents the first
instance of a correction procedure for any class of codes, beyond the list-decoding radius.

In the process, we give algorithmic versions of results from additive combinatorics used in Samorodnitsky's 
proof and a refined version of the inverse theorem for the Gowers $U^3$ norm over $\F_2^n$.
\end{abstract}
\thispagestyle{empty}

\newpage
\setcounter{page}{1}

\section{Introduction}
Higher-order Fourier analysis, which has its roots in Gowers's proof of Szemer\'edi's 
Theorem \cite{GSzT4}, has experienced a significant surge in the number of available tools as well
as applications in recent years, including perhaps most notably Green and Tao's proof 
that there are arbitrarily long arithmetic progressions in the primes.

Across a range of mathematical disciplines, classical Fourier analysis is often applied in 
form of a \emph{decomposition theorem}: one writes a bounded function $f$ as
\begin{equation}\label{decomp}
f=f_1+f_2,
\end{equation}
where $f_1$ is a structured part consisting of the frequencies with large amplitude, 
while $f_2$ consists of the remaining frequencies and resembles uniform, or random-looking, noise.
Over $\F_2^n$, the Fourier basis consists of functions of the form $(-1)^{\langle \alpha, x\rangle}$ for $\alpha
\in F_2^n$, which we shall refer to as \emph{linear phase functions}. The part $f_1$ is then a (weighted) sum of a
few linear phase functions.

From an algorithmic point of view, efficient techniques are available to 
compute the structured part $f_1$. The Goldreich-Levin \cite{GoldreichL89} theorem gives an
algorithm which computes, 
with high probability, the large Fourier coefficients of $f: \F_2^n \to \{-1,1\}$ 
in time polynomial in $n$. One way of viewing this theorem is precisely as
an algorithmic version of the decomposition theorem above, where $f_1$ is the part consisting
of large Fourier coefficients of a function and $f_2$ is random-looking with respect to any
test that can only detect large Fourier coefficients.

It was observed by Gowers (and previously by Furstenberg and Weiss in the context of ergodic theory) 
that the count of certain patterns is \emph{not} almost invariant under the addition of a noise term 
$f_2$ as defined above, and thus a decomposition such as (\ref{decomp}) is not sufficient in 
that context. In particular, for counting 4-term arithmetic progressions a more sensitive 
notion of uniformity is needed. This subtler notion of uniformity, called \emph{quadratic uniformity}, 
is expressed in terms of the $U^3$ norm, which was introduced by Gowers in \cite{GSzT4} and which we shall define below.

In certain situations we may therefore wish to decompose the function $f$ as above, but 
where the random-looking part is quadratically uniform, meaning $\|f_2\|_{U^3}$ is small. 
Naturally one needs to answer the question as to what replaces the \emph{structured part}, which in
(\ref{decomp})  was defined by a small number of linear characters.

This question belongs to the realm of what is now called \emph{quadratic Fourier analysis}.
Its central building block, largely contained in Gowers's proof of Szemer\'edi's theorem 
but refined by Green and Tao \cite{GrTu3} and Samorodnitsky \cite{Samorodnitsky07}, is the 
so-called \emph{inverse theorem for the $U^3$ norm}, which states, roughly speaking, that a function with 
large $U^3$ norm correlates with a \emph{quadratic phase function}, by which we mean a function of
the form $(-1)^{q}$ for a quadratic form $q: \F_2^n \to \F_2$.

The inverse theorem implies that the structured part $f_1$ has quadratic structure in the 
case where $f_2$ is small in $U^3$, and starting with \cite{GrML} a variety of such 
\emph{quadratic decomposition theorems} have come into existence: in one formulation \cite{GW1}, 
one can write $f$ as
\begin{equation}\label{quaddecomp}
f=\sum_i \lambda_i (-1)^{q_i} + f_2 +h,
\end{equation}
where the $q_i$ are quadratic forms, the $\lambda_i$ are real coefficients such that 
$\sum_i |\lambda_i|$ is bounded, $\|f_2\|_{U^3}$ is small and $h$ is a small $\ell_1$ 
error (that is negligible in all known applications.)

In analogy with the decomposition into Fourier characters, it is natural to think 
of the coefficients $\lambda_i$ as the \emph{quadratic Fourier coefficients} of $f$. 
As in the case of Fourier coefficients, there is a trade-off between the complexity 
of the structured part and the randomness of the uniform part. In the case of the 
quadratic decomposition above, the bound on the $\ell^1$ norm of the coefficients 
$\lambda_i$ depends inversely on the uniformity parameter $\|f_2\|_{U^3}$.
However, unlike the decomposition into Fourier characters, the decomposition in terms of quadratic
phases is not necessarily unique, as the quadratic phases do not form a basis for the space of functions on $\F_2^n$.

Quadratic decomposition theorems have found several number-theoretic applications, 
notably in a series of papers by Gowers and the second author \cite{GW1,GW2,GW4}, 
as well as \cite{Candela} and \cite{HatamiL11}.

However, all decomposition theorems of this type proved so far have been of a rather abstract nature. 
In particular, work by Trevisan, Vadhan and the first author \cite{TrevisanTV09} uses 
linear programming techniques and boosting, while Gowers and the second author \cite{GW1} gave a 
(non-constructive) existence proof using the Hahn-Banach theorem. The boosting proof is constructive
in a very weak sense (see Section \ref{sec:decompositions}) but is quite far from giving an
algorithm for computing the above decompositions. We give such an algorithm in this paper.

\vspace{-10 pt}
\paragraph{A computer science perspective.}
Algorithmic decomposition theorems, such as the weak regularity lemma of
Frieze and Kannan \cite{FriezeK99} which decomposes a matrix as a small sum of cut matrices,
have found numerous application in approximately solving constraint satisfaction problems. From the
point of view of theoretical computer science, a very natural question to ask is
if the simple description of a bounded function as a small list of quadratic phases can be computed 
efficiently. In this paper we give a probabilistic 
algorithm that performs this task, using a number of refinements of ingredients in the proof 
of the inverse theorem to make it more efficient, which will be detailed below.

\vspace{-10 pt}
\paragraph{Connections to Reed-Muller codes.}
A building block in proving the decomposition theorem is an algorithm for the following problem: 
given a function $f:\F_2^n \to \{-1,1\}$, which is at Hamming distance at most $1/2 - \e$ from an
unknown quadratic phase $(-1)^q$, find (efficiently) a quadratic phase $(-1)^{q'}$ which is at
distance at most $1/2 - \eta$ from $f$, for some $\eta = \eta(\e)$.

This naturally leads to a connection with Reed-Muller
codes since for Reed-Muller codes of order 2, the codewords are precisely the (truth-tables of) 
quadratic phases.

Note that the list decoding radius of Reed-Muller codes of order 2 is $1/4$ \cite{GopalanKZ08,
  Gopalan10}, which means that if the distance were less than $1/4$, we could find \emph{all} such $q$, 
and there would only be $\poly(n)$ many of them. The distance here is greater than $1/4$ 
and there might be exponentially many (in $n$) such functions $q$. However, 
the problem may still be tractable as we
are required to find only \emph{one} such $q$ (which might be at a slightly larger 
distance than $q'$).

The problem of \emph{testing} if there is such a $q$ was considered by Samorodnitsky 
\cite{Samorodnitsky07}. We show that in fact, the result can be turned into a 
\emph{local self corrector} for Reed-Muller codes at distance $(1/2 - \e)$. We are not 
aware of any class of codes for which such a self-correcting procedure is known, 
beyond the list-decoding radius.

\subsection{Overview of results and techniques}\label{sec:results}

We state below the basic decomposition theorem for quadratic phases, which is obtained by 
combining Theorems \ref{thm:decomposition-general} and \ref{thm:FindQuadratic} proved later.
The theorem is stated in terms of the $U^3$ norm, defined formally in Section
\ref{sec:preliminaries}.

\begin{theorem}\label{thm:decomposition-intro}
Let $\e, \delta > 0$, $n \in \N$ and $B > 1$. Then there exists $\eta = \exp((B/\e)^C)$ and a
randomized algorithm running in time $O(n^4 \log n \cdot \poly(\eta,\log(1/\delta)))$ which,
given any function $g: X \to [-1,1]$ as an oracle, outputs with probability at least $1-\delta$ a
decomposition into quadratic phases
\[ g~=~ c_1 (-1)^{q_1} + \ldots + c_k (-1)^{q_k} + e + f \] 
satisfying $k \leq 1/\eta^2$,   $\uthreenorm{f} \leq \epsilon$, $\norm{e}_1 \leq 1/2B$ and $|c_i| \leq \eta$
for all $i$.
\end{theorem}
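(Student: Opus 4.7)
The plan is to prove Theorem~\ref{thm:decomposition-intro} by plugging the local correction algorithm \FindQuadratic (Theorem~\ref{thm:FindQuadratic}) into the general decomposition framework of Theorem~\ref{thm:decomposition-general} as the underlying ``weak learner''. Concretely, I would maintain a running approximation $h_i = \sum_{j \leq i} c_j (-1)^{q_j}$ to $g$, initialized as $h_0 \equiv 0$, and at each stage examine the residual $r_i := g - \trunc{h_i}$. If $\uthreenorm{r_i} > \e$, I invoke \FindQuadratic on $r_i$ to obtain a quadratic phase $(-1)^{q_{i+1}}$ whose correlation with $r_i$ exceeds $\eta$ in absolute value, where $\eta$ is the parameter supplied by Theorem~\ref{thm:FindQuadratic}. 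I then set $h_{i+1} := h_i + \eta\, \sigma_{i+1}\, (-1)^{q_{i+1}}$, where $\sigma_{i+1} \in \{-1,+1\}$ is the sign of the correlation. Keeping each increment as a separate term guarantees $|c_i| \leq \eta$ for all $i$ in the output.

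Correctness and termination then follow from the standard boosting potential argument. One shows that the quantity $\|g - \trunc{h_i}\|_2^2$ decreases by $\Omega(\eta^2)$ at every iteration: by construction $\langle r_i,\, \eta \sigma_{i+1}(-1)^{q_{i+1}}\rangle \geq \eta^2$, while the truncation $\trunc{\cdot}$ acts as a contraction toward $g$ in $L^2$ (since $g$ itself lies in $[-1,1] \subseteq [-B,B]$), so both contributions push the potential down by $\Omega(\eta^2)$. Since $\|g\|_2^2 \leq 1$, the loop must halt after $k \leq 1/\eta^2$ iterations, at which point $f := g - \trunc{h_k}$ satisfies $\uthreenorm{f} \leq \e$ by the halting condition. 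The $\ell_1$ error $e := \trunc{h_k} - h_k$ is controlled by the choice of truncation threshold $B$: the total mass of coordinates where $|h_k|$ exceeds $B$ is bounded by $1/(2B)$ via the $\ell_2$ bound on $h_k$ together with a Markov-type estimate, which is precisely the content of Theorem~\ref{thm:decomposition-general}.

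The main obstacle --- indeed, the technical heart of the paper --- is not the boosting skeleton but the existence and efficiency of the weak learner \FindQuadratic in the regime where $r_i$ has fractional Hamming distance essentially $1/2 - \e$ from a quadratic phase, i.e., \emph{beyond} the list-decoding radius $1/4$ of the second-order Reed--Muller code. Theorem~\ref{thm:FindQuadratic}, used here as a black box, supplies this in time polynomial in $n$ via the algorithmic version of Samorodnitsky's inverse theorem for $U^3$. Granted this, the remainder is bookkeeping: setting the failure probability of each \FindQuadratic call to $\delta/k$ and union-bounding gives overall success probability at least $1-\delta$, and the total running time is $O(n^4 \log n \cdot \poly(\eta,\log(1/\delta)))$, dominated by the $O(1/\eta^2)$ invocations of \FindQuadratic. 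This yields the decomposition $g = \sum_{i=1}^k c_i (-1)^{q_i} + e + f$ with the claimed parameters.
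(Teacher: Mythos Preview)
Your high-level plan---plug \FindQuadratic into the boosting shell of Theorem~\ref{thm:decomposition-general}---is exactly what the paper does, and your identification of \FindQuadratic as the hard part is correct. But the potential argument you sketch has a genuine gap.

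You track $\|r_i\|_2^2=\|g-\trunc{h_i}\|_2^2$ and claim it drops by $\Omega(\eta^2)$ because (a) $\langle r_i,\eta\sigma(-1)^{q}\rangle\ge\eta^2$ and (b) truncation is a contraction toward $g$. The problem is that your update adds $\eta\sigma(-1)^{q}$ to $h_i$, not to $\trunc{h_i}$, so the contraction in (b) gives
\[
\|r_{i+1}\|_2^2 \;\le\; \|g-h_{i+1}\|_2^2 \;=\; \|g-h_i-\eta\sigma(-1)^q\|_2^2,
\]
whereas the inner product you control in (a) is $\langle g-\trunc{h_i},(-1)^q\rangle$, not $\langle g-h_i,(-1)^q\rangle$. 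These differ exactly on the set where $|h_i|>B$, and there is no reason for that discrepancy to be small. This is not a cosmetic issue: the paper explicitly flags that the ``standard'' boosting potential fails here and replaces it by the combined potential $\|f_t\|_2^2+2\|\Delta_t\|_1$ with $\Delta_t=f_t\cdot(h_t-f_t)$, together with the pointwise inequality of Lemma~\ref{lem:step-bound}. That same telescoping is what yields $\|e\|_1\le 1/(2B)$; it does not come from a Markov bound on $\|h_k\|_2$ as you suggest (indeed $\|h_k\|_2$ can be as large as $k\eta=1/\eta$, which gives nothing useful). Note also that the paper's iteration truncates the \emph{residual} $g-\sum\eta\q_s$, not the running sum $\sum\eta\q_s$; the two are not equivalent once truncation is in play.

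One further omission: \FindQuadratic as stated requires a Boolean input, while your residual $r_i$ is $[-B-1,B+1]$-valued. The paper handles this via the random rounding of Lemma~\ref{lem:reduceboolan}, which is where the dependence $\eta=\eta(\e/B)$ (and hence the $\exp((B/\e)^C)$ in the theorem) enters. You should invoke this explicitly rather than call \FindQuadratic on $r_i$ directly.
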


Note that in \cite{GW2} the authors had to work much harder to obtain a bound on the number of terms in the decomposition, rather than just the $\ell^1$ norm of its coefficients. Our decomposition approach gives such a bound immediately and is equivalent from a quantitative point of view: we can bound the number of terms here by $1/\eta^2$, which is
exponential in $1/\e$. 

It is possible to further strengthen this theorem by combining the quadratic phases obtained into
only $\poly(1/\e)$  \emph{quadratic averages}. Roughly speaking, each quadratic average is a sum of
few quadratic phases, which differ only in their linear part. We describe this in detail in
Section \ref{sec:refinement}.

The key component of the above decomposition theorem is the following self-correction procedure for
Reed-Muller codes of order 2 (which are simply truth-tables of quadratic phase functions).
The correlation between two functions $f$ and $g$ is defined as $\ip{f,g} = \Ex{x \in \F_2^n}{f(x)g(x)}$.
\begin{theorem}\label{thm:FindQuadratic-Intro}
Given $\e, \delta > 0$, there exists $\eta = \exp(-1/\e^{C})$ and
a randomized algorithm \FindQuadratic running in time $O(n^4 \log n \cdot \poly(1/\e, 1/\eta,
\log(1/\delta)))$ which, 
given oracle access to a function $f: \F_2^n \to \pmone$, either outputs a quadratic form $q(x)$
or $\bot$. The algorithm satisfies the following guarantee.
\begin{itemize}
\item If $\uthreenorm{f} \geq \e$, then with probability at
  least $1-\delta$ it finds a quadratic form $q$ such that $\ip{f,(-1)^q} \geq \eta$.
\item The probability that the algorithm outputs a quadratic form $q$ with $\ip{f,(-1)^q} \leq \eta/2$ is at 
most $\delta$.
\end{itemize}
\end{theorem}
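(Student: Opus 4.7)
The plan is to implement constructively the additive-combinatorial proof of the inverse theorem for the $U^3$ norm due to Samorodnitsky \cite{Samorodnitsky07}. The starting point is the derivative identity
\[ \uthreenorm{f}^8 ~=~ \Ex{y \in \F_2^n}{\norm{\Delta_y f}_{U^2}^4}, \]
where $\Delta_y f(x) := f(x)\,f(x+y)$. The hypothesis $\uthreenorm{f} \geq \e$ then forces that for a set $Y \subseteq \F_2^n$ of density $\Omega(\e^8)$, the derivative $\Delta_y f$ carries a Fourier coefficient of magnitude $\Omega(\e^2)$. A Goldreich--Levin style routine (\FSample) produces, for each such $y$ we query, a candidate $\phi(y) \in \F_2^n$ witnessing this large Fourier coefficient. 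The function $\phi$ is known only through a small sample, but its additive structure carries all the information we need.

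The core observation, obtained by a derivative identity applied inside the $U^3$ expression, is that $\phi$ is an \emph{approximate Freiman homomorphism}: the relation $\phi(y_1)+\phi(y_2)+\phi(y_3)+\phi(y_4)=0$ holds for many additive quadruples $y_1+y_2+y_3+y_4=0$ with all $y_i \in Y$. My next step would be to invoke an algorithmic Balog--Szemer\'edi--Gowers procedure (\bsgtest) to refine $Y$ to a large subset $Y' \subseteq Y$ on which $\phi$ satisfies many linear relations exactly, and then an algorithmic Bogolyubov-type lemma (\bogolyubov) to locate a subspace on which $\phi$ agrees with a linear map $L\colon \F_2^n \to \F_2^n$. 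Viewing $L$ as a matrix, one verifies that $L$ is approximately \emph{symmetric}; since any symmetric matrix over $\F_2$ is the derivative matrix of a quadratic form $q(x) = x^T L x$, this produces the candidate quadratic form.

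Having a candidate $q$ in hand, I estimate $\ip{f,(-1)^q}$ by standard random sampling, output $q$ only if this empirical inner product exceeds, say, $(3/4)\eta$, and output $\bot$ otherwise. A Chernoff bound yields both completeness (finding a $q$ with $\ip{f,(-1)^q}\geq \eta$ with probability $\geq 1-\delta$) and soundness (no $q$ with $\ip{f,(-1)^q} \leq \eta/2$ is output), at only a $\poly(1/\eta,\log(1/\delta))$ overhead. The entire procedure is run polynomially many times, with independent random samples, to boost the success probability.

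The principal obstacle is making each additive-combinatorial step algorithmic and robust. Classical BSG and Bogolyubov arguments are existential and rely on pigeonhole, Pl\"unnecke--Ruzsa inequalities, or Fourier analysis over sets we cannot write down explicitly; here (i) the refined sets must be extracted in polynomial time from only oracle access to $\phi$; (ii) every step must tolerate the fact that $\phi$ and $Y$ are known only through samples, so the Freiman-homomorphism and linearity guarantees hold only up to small sampling error that must be propagated carefully; and (iii) approximate \emph{symmetry} of $L$ must be recovered despite these errors, since this is precisely what lets one pass from a linear map to a genuine quadratic form. The composition of these lossy reductions is what forces the quantitative bound $\eta = \exp(-1/\e^C)$, each additive step contributing a polynomial degradation in the parameter.
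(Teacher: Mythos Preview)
Your overall architecture is right and matches the paper: make Samorodnitsky's proof of the $U^3$ inverse theorem algorithmic by (i) using Goldreich--Levin on the derivatives $\Delta_y f$ to define a random choice function $\phi$, (ii) applying an algorithmic Balog--Szemer\'edi--Gowers step to the graph $\{(x,\phi(x))\}$, (iii) extracting a linear map $T$, (iv) symmetrizing and integrating, and (v) verifying by sampling. The paper carries out exactly this plan in Section~\ref{sec:correlations}.

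There is, however, a genuine discrepancy at step~(iii). You propose to pass from the BSG output to a linear map via an algorithmic \emph{Bogolyubov} lemma, locating a subspace \emph{inside} the sumset on which $\phi$ is linear. The paper, for this theorem, instead applies the \emph{Freiman--Ruzsa} theorem to obtain a subspace \emph{containing} the BSG set $A_\phi^{(2)}$; this is what yields a linear choice function $x\mapsto Tx$ defined on all of $\F_2^n$, and hence a \emph{globally} correlating quadratic phase. Bogolyubov only gives you a linear map on a coset of a subspace of codimension $\poly(1/\e)$, so your sketch as written produces a quadratic phase correlating with $f$ \emph{locally}, on one coset, not globally. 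To reach Theorem~\ref{thm:FindQuadratic-Intro} you would still need to extend that local phase to $\F_2^n$, and it is precisely this extension (or, equivalently, the Freiman--Ruzsa bound $\abs{\langle A_\phi^{(2)}\rangle}\le \exp(1/\e^C)N$) that is responsible for the exponential $\eta=\exp(-1/\e^C)$; your diagnosis that ``each additive step contributes a polynomial degradation'' misattributes where the exponential actually enters. In fact, the Bogolyubov route you describe is exactly what the paper pursues in Section~\ref{sec:refinement} to prove the \emph{stronger} quadratic-average result (Theorem~\ref{thm:FindQuadraticAverage}), and even there it requires an additional ``good model'' step to ensure $\phi$ is a Freiman $8$-homomorphism so that the Bogolyubov subspace is itself the graph of a linear map.

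Two smaller points you glossed over, which the paper handles explicitly: first, over $\F_2$ the derivative matrix $M+M^T$ of a quadratic form $x^TMx$ is always symmetric \emph{with zero diagonal}, so the symmetrization step must also kill the diagonal (Lemma~\ref{lem:symmetrization}); you cannot simply take $q(x)=x^TLx$ with $L$ the symmetric matrix itself. Second, after symmetrizing you only control $\Ex{x}{\widehat{f_x}^2(Bx)}$; the integration step (Lemma~\ref{lem:integration}) shows this implies $fh$ has a large Fourier coefficient for $h(x)=(-1)^{\langle x,Mx\rangle}$, so a further Goldreich--Levin call is needed to recover the \emph{linear} part of $q$. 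This final Goldreich--Levin call is in fact the only step whose running time is genuinely $\poly(1/\eta)$ rather than $\poly(1/\e)$.
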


We remark that all the results contained here can be extended to $\F_p^n$ for any constant $p$. We
choose to present only the case of $\F_2^n$ for simplicity of notation.

Our results for computing the above decompositions comprise various components.

\vspace{-10 pt}
\paragraph{Constructive decomposition theorems.}
We prove the decomposition theorem using a procedure which, at every step, tests
if a certain function has correlation at least $1/2-\e$ with a quadratic phase.
Given an algorithm to \emph{find} such a quadratic phase, the procedure gives
a way to combine them to obtain a decomposition.

Previous decomposition theorems have also used such procedures 
\cite{FriezeK99, TrevisanTV09}. However, they required that the quadratic
phase found at each step have correlation $\eta = O(\e)$, if one exists with
correlation $\e$. In particular, they require the fact that if we scale $f$ to change
its $\ell_{\infty}$ norm, the quantities $\eta$ and $\e$ would scale the same way 
(this would not be true if, say, $\eta = \e^2$).

We need and prove a general decomposition theorem, which works even as $\eta$ degrades
arbitrarily in $1/\eps$. This requires a somewhat more sophisticated analysis and
the introduction of a third error term for which we bound the $\ell_1$ norm.
\paragraph{Algorithmic versions of theorems from additive combinatorics.}
Samorodnitsky's proof uses several results from additive combinatorics, which produce large
sets in $\F_2^n$ with certain useful additive properties. The proof of the inverse theorem
uses the description of these sets.
However, in our setting, we do not have time to look at the entire set since they
may be of size $\poly(\e) \cdot 2^n$, as in the case of the Balog-\Szemeredi-Gowers
theorem described later. We thus work by building efficient sampling
procedures or procedures for efficiently deciding membership in such sets, 
which require new algorithmic proofs. 
 
A subtlety arises when one tries to construct such a testing procedure. Since
the procedure runs in polynomial time, it often works by sampling and estimating
certain properties and the estimates may be erroneous. This leads to some noise in
the decision of any such an algorithm, resulting a noisy version of the set (actually a
distribution over sets). We get around this problem by proving a robust version
of the Balog-\Szemeredi-Gowers theorem, for which we can ``sandwich'' the output of
such a procedure between two sets with desirable properties. This technique may be
useful in other algorithmic applications.
\paragraph{Local inverse theorems and decompositions involving quadratic averages.}
Samorodnitsky's inverse theorem says that when a function $f$ has $U^3$ norm $\e$, then
one can find a quadratic phase $q$ which has correlation $\eta$ with $f$, for 
$\eta = \exp(-1/\e^C)$. A decomposition then requires $1/\eta^2$, that is exponentially many (in
$1/\e$), terms.

A somewhat stronger result was implicit in the work of Green and Tao \cite{GrTu3}. They showed that there exists a subspace of codimension
$\poly(1/\e)$ and on all of whose cosets $f$ correlates polynomially with a quadratic phase. Picking a particular coset and extending that quadratic phase to the whole space gives the previous theorem. 

It turns out that the different quadratic phases on each coset in fact have the same quadratic part
and differ only by a linear term. This was exploited in \cite{GW1} to obtain a decomposition
involving only polynomially many quadratic objects, so-called \emph{quadratic averages}, which are
described in more detail in Section \ref{sec:refinement}.

We remark that the results of Green and Tao \cite{GrTu3} do not directly extend to the case of characteristic 2 since division by 2 is used at one crucial point in the argument. We combine their ideas with those of Samorodnitsky to give an algorithmic version of a decomposition theorem involving quadratic averages.

\section{Preliminaries}
\label{sec:preliminaries}

Throughout the paper, we shall be using Latin letters such as $x$, $y$ or $z$ to denote elements of
$\F_2^n$, while Greek letters $\alpha$ and $\beta$ are used to denote members of the dual space
$\widehat{\F_2^n} \cong\F_2^n$. We shall use $\delta$ as our error parameter, while $\eps, \eta,
\gamma$ and $\rho$ are variously used to indicate correlation strength between a Boolean function
$f$ and a family of structured functions $\mathcal{Q}$. Throughout the manuscript $N$ will denote
the quantity $2^n$.
Constants $C$ may change from line to line without further notice.

We shall be using the following standard probabilistic bounds without further mention.


\begin{lemma}[Hoeffding bound for sampling \cite{TaoVu}]\label{lem:hoeffding-sample}
If $\bf X$ is a random variable with $\abs{{\bf X}} \leq 1$ and $\hat{\mu}$ is the empirical
average obtained from $t$ samples, then
\[ \prob{\abs{\ex{{\bf X}} - \hat{\mu}} ~>~ \gamma} ~\leq~ \exp(-\Omega(\gamma^2 t)) . \]
\end{lemma}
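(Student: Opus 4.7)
The plan is to follow the standard Chernoff/Hoeffding argument via the moment generating function, applied to a two-sided union bound.

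First I would set up notation: let $X_1, \ldots, X_t$ be the i.i.d.\ copies of ${\bf X}$ used to form $\hat{\mu} = \frac{1}{t}\sum_{i=1}^t X_i$, and write $\mu = \ex{{\bf X}}$ and $Y_i = X_i - \mu$, so that each $Y_i$ is mean-zero and lies in an interval of length at most $2$ (since $|X_i| \le 1$). The event $|\hat{\mu} - \mu| > \gamma$ decomposes via a union bound into the two one-sided events $\sum_i Y_i > \gamma t$ and $\sum_i Y_i < -\gamma t$, and by symmetry it suffices to bound the first.

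Next I would apply the exponential Markov inequality: for any $\lambda > 0$,
\[
\prob{\sum_i Y_i > \gamma t} \;\le\; e^{-\lambda \gamma t} \cdot \prod_{i=1}^t \ex{e^{\lambda Y_i}}.
\]
The key inequality is Hoeffding's lemma, which says that a mean-zero random variable supported in an interval of length $L$ has $\ex{e^{\lambda Y}} \le e^{\lambda^2 L^2/8}$. Since $L \le 2$ here, each factor is at most $e^{\lambda^2/2}$, giving the bound $\exp(-\lambda\gamma t + t\lambda^2/2)$. Optimizing by choosing $\lambda = \gamma$ yields $\exp(-\gamma^2 t/2)$ for the one-sided tail, and a factor of two from the union bound is absorbed into the $\Omega(\cdot)$.

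There is really no main obstacle here; the only step requiring any care is Hoeffding's lemma itself, which is a standard convexity computation: for $Y \in [a,b]$ with $\ex{Y}=0$, the function $\lambda \mapsto \log\ex{e^{\lambda Y}}$ has a second derivative bounded by $(b-a)^2/4$, and Taylor-expanding around $\lambda = 0$ gives the required Gaussian-type bound. Since the lemma is quoted from \cite{TaoVu}, I would simply cite it rather than reproving it, and present the above calculation as the content of the proof.
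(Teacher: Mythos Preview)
Your argument is correct and is the standard proof of Hoeffding's inequality via the exponential moment method. Note, however, that the paper does not actually give a proof of this lemma: it is stated as a known result with a citation to \cite{TaoVu}, so there is nothing to compare your approach against. Your write-up would serve perfectly well as a self-contained proof if one were desired.
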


A Hoeffding-type bound can also be obtained for polynomial functions of $\pm 1$-valued random variables.


\begin{lemma}[Hoeffding bound for low-degree polynomials \cite{ODonnell08}]
\label{lem:hoeffding-low-degree}
Suppose that ${\bf F} = {\bf F}({\bf X}_1, \ldots, {\bf X}_N)$ is a polynomial of degree $d$ in random variables ${\bf
X}_1, \ldots, {\bf X}_N$ taking value $\pm 1$, then
\[ \prob{\abs{{\bf F} - \ex{{\bf F}}} ~>~ \gamma} ~\leq~ \exp\inparen{-\Omega\inparen{d \cdot
  \inparen{\gamma/\sigma}^{2/d}} } ,\]
where $\sigma = \sqrt{\ex{{\bf F}^2} - \ex{\bf F}^2}$ is the standard deviation of $\bf F$.
\end{lemma}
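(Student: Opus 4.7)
The plan is to reduce to the Bonami--Beckner hypercontractive inequality, which is the standard tool for obtaining concentration of low-degree polynomials on the Boolean cube. First I would make the routine reductions: replacing $\bf F$ by $\bf F - \ex{\bf F}$ we may assume $\ex{\bf F} = 0$, so that $\norm{\bf F}_2 = \sigma$; and using the identity $\bfx_i^2 = 1$ we may assume that $\bf F$ is a multilinear polynomial of degree at most $d$ in the $\bfx_i$.

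Next I would invoke hypercontractivity: for every $q \geq 2$ and every multilinear degree-$d$ polynomial $\bf G$ in independent uniform $\pm 1$ variables,
\[
\norm{\bf G}_q ~\leq~ (q-1)^{d/2} \norm{\bf G}_2.
\]
Applied to our $\bf F$, this gives $\norm{\bf F}_q \leq (q-1)^{d/2}\sigma$ for every $q \geq 2$. Combining with Markov's inequality applied to $|\bf F|^q$ yields
\[
\prob{\abs{\bf F} > \gamma} ~\leq~ \frac{\ex{\abs{\bf F}^q}}{\gamma^q} ~\leq~ \left(\frac{(q-1)^{d/2}\sigma}{\gamma}\right)^{\!q}.
\]

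The remaining step is to optimize the free parameter $q$. Setting $q = \lfloor e^{-1}(\gamma/\sigma)^{2/d}\rfloor$ (which we may assume is at least $2$, for otherwise the claimed bound is trivial after adjusting constants) makes $(q-1)^{d/2}\sigma/\gamma \leq e^{-d/2}$, so the bound becomes $\exp(-(dq)/2) = \exp(-\Omega(d \cdot (\gamma/\sigma)^{2/d}))$, as desired.

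The only real technical content is the hypercontractive inequality itself, which is a well-established result (see e.g.\ \cite{ODonnell08}); everything else is bookkeeping. The main point to handle carefully is ensuring that the chosen $q$ satisfies $q \geq 2$ so that hypercontractivity applies; when $(\gamma/\sigma)^{2/d} < 2e$ the conclusion is vacuous after absorbing constants into the $\Omega(\cdot)$, so no generality is lost.
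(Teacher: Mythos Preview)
The paper does not give its own proof of this lemma: it is stated as a standard concentration bound and cited to \cite{ODonnell08}. Your argument via the Bonami--Beckner hypercontractive inequality followed by Markov and optimization in $q$ is correct and is exactly the standard proof one finds in that reference, so there is nothing to compare.
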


We start off by stating two fundamental results in additive combinatorics which are often applied in sequence. For a set $A\subseteq \F_2^n$, we write $A+A$ for the set of elements $a+a'$ such that $a, a' \in A$. More generally, the $k$-fold \emph{sumset}, denoted by $kA$, consists of all $k$-fold sums of elements of $A$.

First, the Balog-Szemer\'edi-Gowers theorem states that if a set has many additive quadruples, that is, elements $a_1,a_2,a_3,a_4$ such that $a_1+a_2=a_3+a_4$, then a large subset of it must have small sumset.


\begin{theorem}[Balog-Szemer\'edi-Gowers \cite{GSzT4}]\label{thm:bsg}
 Let $A\subseteq \F_2^{n}$ contain at least $|A|^3/K$ additive quadruples. Then there exists a subset $A' \subseteq A$ of size $|A'|\geq K^{-C}|A|$ with the property that $|A'+A'|\leq K^{C} |A'|$.
\end{theorem}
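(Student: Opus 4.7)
The plan is to follow the standard graph-theoretic proof of Balog--Szemer\'edi--Gowers. First I would reformulate the hypothesis in terms of additive energy. Writing $r_A(x) = \#\{(a,b) \in A^2 : a+b=x\}$, the number of additive quadruples equals $E(A) := \sum_x r_A(x)^2$, so the assumption gives $E(A) \geq |A|^3/K$.

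Next I would run a popularity argument. Let $P = \{x \in \F_2^n : r_A(x) \geq |A|/(2K)\}$ be the set of popular sums. A direct computation shows that the contribution to $E(A)$ from sums outside $P$ is at most $(|A|/(2K)) \cdot |A|^2 = |A|^3/(2K)$, so the popular sums account for at least $|A|^3/(2K)$ of the energy, and in particular the number of pairs $(a,b) \in A^2$ with $a+b \in P$ is at least $|A|^2/(2K)$. Define a graph $G=(A,E_G)$ by letting $(a,b)$ be an edge iff $a+b \in P$; then $G$ has $\alpha|A|^2$ edges with $\alpha \geq 1/(2K)$.

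The key combinatorial step is the Balog--Szemer\'edi graph lemma: in any graph on $|A|$ vertices with $\alpha|A|^2$ edges, there exists a subset $A' \subseteq A$ with $|A'| \geq \alpha^{C}|A|$ such that any two vertices $a,b \in A'$ admit at least $\alpha^{C}|A|$ paths of length two in $G$. I would prove this via the standard probabilistic approach: pick a random vertex $v$, take $A'$ to be a refinement of $N(v)$ obtained by deleting vertices that participate in too many low-common-neighborhood pairs, and use Markov/averaging to bound the expected number of bad pairs.

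Finally, I would use $A'$ to bound $|A'+A'|$. Fix $s \in A'+A'$ and a representative $(a,b) \in A' \times A'$ with $a+b=s$. For each common neighbor $c$ of $a$ and $b$ in $G$ (of which there are $\geq \alpha^C|A|$), both $a+c$ and $c+b$ lie in $P$, hence admit at least $|A|/(2K)$ representations as $a_1+a_2$ and $b_1+b_2$ respectively, with $a_i,b_i \in A$. Summing and using $2c=0$ in $\F_2^n$ gives $s = a_1+a_2+b_1+b_2$. Given the fixed $a$, the value of $c$ is recovered from $(a_1,a_2)$ via $c = a+a_1+a_2$, so this construction injects triples $(c,(a_1,a_2),(b_1,b_2))$ into the set of $4$-tuples $(a_1,a_2,b_1,b_2) \in A^4$ with $a_1+a_2+b_1+b_2 = s$. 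Thus each $s \in A'+A'$ has at least $\alpha^C|A| \cdot (|A|/(2K))^2$ representations as a sum of four elements of $A$. Since the total number of such $4$-tuples is $|A|^4$, we obtain $|A'+A'| \leq K^{O(1)}|A|$, which combined with $|A'| \geq K^{-O(1)}|A|$ yields $|A'+A'| \leq K^{O(1)}|A'|$.

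The main obstacle is the graph lemma: controlling the number of pairs in the random neighborhood $N(v)$ with few common neighbors, and showing one can remove them while preserving a set of size $\alpha^{O(1)}|A|$. The rest of the argument is essentially bookkeeping once the popularity and graph steps are in place.
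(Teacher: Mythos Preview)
The paper does not actually prove this statement: Theorem~\ref{thm:bsg} is stated in the preliminaries as a background result with a citation to Gowers, and no proof is given. Your proposal is the standard, correct graph-theoretic proof of Balog--Szemer\'edi--Gowers, so there is nothing to fault in it.

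It is worth noting, though, that the paper \emph{does} unpack a proof of a closely related version of BSG later, in Section~\ref{sec:bsg}, for algorithmic purposes. There it follows the Sudakov--Szemer\'edi--Vu presentation (via Viola): one builds a graph on $A_\phi$, picks a random vertex $u$, and takes $A'$ to be $N(u)$ with the ``bad'' vertices removed---those $v$ for which too many $v_1 \in N(u)$ have small common neighbourhood $|N(v)\cap N(v_1)|$. This is exactly the graph lemma you identify as the main obstacle, and the paper spells out the averaging argument that controls the bad pairs (Lemma~\ref{lem:bsg-ssv} and the surrounding claims). The sumset bound then follows from the many-paths-of-length-two property, just as you outline. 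So your sketch and the paper's algorithmic treatment are essentially the same argument; the paper's only additional wrinkle is making the membership test for $A'$ locally computable via sampling.
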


Freiman's theorem, first proved by Ruzsa in the context of $\F_2^n$, asserts that a set with small sumset is efficiently contained in a subspace.


\begin{theorem}[Freiman-Ruzsa Theorem \cite{ruzsa}]\label{thm:freiman}
Let $A \subseteq \F_2^n$ be such that $|A+A| \leq K |A|$. Then $A$ is contained in a subspace of size at most $2^{O(K^{C})}|A|$.
\end{theorem}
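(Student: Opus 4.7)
The plan is to follow Ruzsa's original argument, combining the \Plunnecke-Ruzsa sumset inequality with Ruzsa's covering lemma to exhibit an explicit low-dimensional subspace containing $A$. After translating so that $0 \in A$ (which preserves the doubling constant), the $\F_2$-span of $A$ can be written as $\langle A\rangle = \bigcup_{k \geq 1} kA$, and it suffices to bound $|\langle A\rangle|$ from above.

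First I would apply the \Plunnecke-Ruzsa inequality in its $\F_2^n$ form $|kA| \leq K^k|A|$ (which follows from the bound $|nA - mA| \leq K^{n+m}|A|$ together with $-A = A$) to obtain $|4A| \leq K^4|A|$, and then invoke Ruzsa's covering lemma on the pair of sets $A$ (small) and $3A$ (large). Since $|A + 3A| = |4A| \leq K^4 |A|$, the covering lemma produces a set $T \subseteq 3A$ with $|T| \leq K^4$ such that $3A \subseteq T + (A - A) = T + 2A$; the final equality is the only place characteristic $2$ enters. The heart of the argument is then a short induction on $k$ showing $kA \subseteq \langle T\rangle + 2A$ for every $k \geq 1$, where $\langle T\rangle$ denotes the $\F_2$-span of $T$. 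The cases $k \leq 3$ are immediate (from $0 \in A$ and the covering bound on $3A$), and the inductive step is the chain $(k+1)A \subseteq A + \langle T\rangle + 2A = \langle T\rangle + 3A \subseteq \langle T\rangle + T + 2A = \langle T\rangle + 2A$. Taking the union over $k$ gives $\langle A\rangle \subseteq \langle T\rangle + 2A$, and hence $|\langle A\rangle| \leq |\langle T\rangle|\cdot|2A| \leq 2^{|T|}\cdot K|A| = 2^{O(K^4)}|A|$, which establishes the statement with $C = 4$.

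The main obstacle, mild as it is, is conceptual: choosing the right instance of the covering lemma. Applying it to the pairs $(A, A)$ or $(A, 2A)$ yields only tautologies of the form $2A \subseteq X + 2A$, which carry no information. The choice $(A, 3A)$ is what makes the induction close up, because the resulting inclusion $3A \subseteq T + 2A$ is exactly what is needed to absorb the extra copy of $A$ into $\langle T\rangle$ at each inductive step.
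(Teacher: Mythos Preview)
Your argument is correct and is precisely Ruzsa's original proof of the theorem in characteristic $2$: \Plunnecke's inequality to bound $|4A|$, then the covering lemma applied to the pair $(A,3A)$ to obtain $3A \subseteq T + 2A$ with $|T| \leq K^4$, followed by the induction $kA \subseteq \langle T\rangle + 2A$. The only cosmetic point is that after translating so that $0 \in A$, the original set lies in a coset of $\langle A\rangle$ rather than $\langle A\rangle$ itself, but adjoining the single translation vector costs at most a factor of $2$, absorbed into $2^{O(K^4)}$.

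The paper does not give its own proof of this statement: Theorem~\ref{thm:freiman} is quoted as a preliminary from \cite{ruzsa} and used as a black box (in Section~\ref{sec:lin-choice}, to bound $|\langle A_\phi^{(2)}\rangle|$). So there is nothing to compare against; your write-up is exactly the standard proof and would serve if the paper wanted to be self-contained here.
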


We shall also require the notion of a \emph{Freiman homomorphism}. We say the map $l$ is a Freiman 2-homomorphism if $x+y=z+w$ implies $l(x)+l(y)=l(z)+l(w)$. More generally, a Freiman homomorphism of order $k$ is a map $l$ such that $x_1+x_2+ \dots +x_k= x_1'+x_2'+ \dots +x_k'$ implies that $l(x_1)+ \dots + l(x_k)=l(x_1')+ \dots + l(x_k')$. The order of the Freiman homomorphism measures the degree of linearity of $l$; in particular, a truly linear map is a Freiman homomorphism of all orders.

Next we recall the definition of the uniformity of $U^k$ norms introduced by Gowers in \cite{GSzT4}.


\begin{definition}\label{uknorms}
Let G be any finite abelian group. For any positive integer $k \geq 2$
and any function $f: G \rightarrow \C$, define the \emph{$U^k$-norm}
by the formula
\[\|f\|_{U^k}^{2^k} = \E_{x,h_1, ..., h_k \in G} 
\prod_{\omega \in \{0,1\}^k} C^{|\omega|}f(x+\omega\cdot h),\]
where $\omega\cdot h$ is shorthand for $\sum_i\omega_ih_i$,
and $C^{|\omega|}f=f$ if $\sum_i\omega_i$ is even and $\overline{f}$
otherwise.
\end{definition}

In the special case $k=2$, a computation shows that
\[\|f\|_{U^2}=\|\widehat{f}\|_{l^4},\]
and hence any approach using the $U^{2}$ norm is essentially equivalent to using ordinary Fourier analysis. In the case $k=3$, the $U^3$ norm counts the number of additive octuples ``contained in" $f$, that is, we average over the product of $f$ at all eight vertices of a 3-dimensional parallelepiped in $G$.

These uniformity norms satisfy a number of important properties: they are clearly nested
\[\|f\|_{U^{2}} \leq \|f\|_{U^{3}} \leq \|f\|_{U^{4}} \leq ... \]
and can be defined inductively
\[\|f\|_{U^{k+1}}^{2^{k+1}}=\E_{x}\|f_{x}\|_{U^{k}}^{2^{k}},\]
where $k\geq 2$ and the function $f_{x}$ stands for the assignment $f_{x}(y)=f(y)\overline{f(x+y)}$. Thinking of the function $f$ as a complex exponential (a phase function), we can interpret the function $f_x$ as a kind of \emph{discrete derivative} of $f$.

It follows straight from a simple but admittedly ingenious sequence of applications of the
Cauchy-Schwarz inequality that if the balanced function $1_{A}-\alpha$ of a set $A \subseteq G$ of
density $\alpha$ has small $U^{k}$ norm, then $A$ contains the expected number of arithmetic
progressions of length $k+1$, namely $\alpha^{k+1}|G|^{2}$. This fact makes the uniformity norms
interesting for number-theoretic applications.

In computer science they have been used in the context of probabilistically checkable proofs (PCP)
\cite{SamorodnitskyT06}, communication complexity \cite{ViolaW07}, as well as in the analysis of
pseudo-random generators that fool low-degree polynomials \cite{BogdanovV10}.

In many applications, being small in the $U^{k}$ norm is a desirable property for a function to
have. What can we say if this is not the case? It is not too difficult to verify that
$\|f\|_{U^k}=1$ if and only if $f$ is a polynomial phase function of degree $k-1$, i.e. a function
of the form $\omega^{p(x)}$ where $p$ is a polynomial of degree $k-1$ and $\omega$ is an appropriate
root of unity. But does every function with large $U^{k}$ norm look like a polynomial phase function
of degree $k-1$?

It turns out that any function with large $U^{k}$ norm correlates, at the very least locally, with a polynomial phase function of degree $k-1$. This is known as the inverse theorem for the $U^k$ norm, proved by Green and Tao \cite{GrTu3} for $k=3$ and $p>2$ and Samorodnitsky \cite{Samorodnitsky07} for $k=3$ and $p=2$, and Bergelson, Tao and Ziegler \cite{BTZ,TZ} for $k>3$. We shall restrict our attention to the case $k=3$ in this paper, which we can state as follows.


\begin{theorem}[Global Inverse Theorem for $U^3$ \cite{GrTu3}, \cite{Samorodnitsky07}] \label{thm:globalinverse}
Let $f:\F_p^n\ra\C$ be a function such that
$\|f\|_\infty\leq 1$ and $\|f\|_{U^3}\geq\eps$. 
Then there exists a a quadratic form $q$ and a vector $b$ such that 
\[|\E_x f(x) \omega^{q(x)+b\cdot x}| \geq \exp(-O(\eps^{-C}))\]
\end{theorem}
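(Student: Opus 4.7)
The plan is to follow the Gowers/Green-Tao/Samorodnitsky framework, where one extracts a quadratic phase from the derivatives of $f$ via additive-combinatorial tools. First I would unpack the hypothesis $\|f\|_{U^3}\geq \eps$ using the inductive definition $\|f\|_{U^3}^8 = \E_y \|f_y\|_{U^2}^4$, where $f_y(x)=f(x)\overline{f(x+y)}$ is the discrete derivative. By an averaging argument, for at least an $\eps^{O(1)}$ fraction of $y\in \F_p^n$, we have $\|f_y\|_{U^2}^4 \geq \eps^{O(1)}$. Since $\|g\|_{U^2}^4 = \|\widehat g\|_4^4$, each such derivative has a Fourier coefficient of magnitude at least $\eps^{O(1)}$, so we can choose (measurably) a frequency $\phi(y)$ with $|\widehat{f_y}(\phi(y))|\geq \eps^{O(1)}$.

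Next I would show $\phi\colon \F_p^n\to \F_p^n$ is ``approximately linear''. Expanding $\E_y |\widehat{f_y}(\phi(y))|^2$ and applying Cauchy-Schwarz several times (the standard ``additive quadruple'' argument of Gowers) yields many additive quadruples $y_1+y_2=y_3+y_4$ with $\phi(y_1)+\phi(y_2)=\phi(y_3)+\phi(y_4)$, so the graph $\Gamma=\{(y,\phi(y))\}\subseteq \F_p^n\times \F_p^n$ has a polynomial number of additive quadruples. Invoking Balog-Szemer\'edi-Gowers (Theorem \ref{thm:bsg}) on $\Gamma$ produces a large subset $\Gamma'$ with $|\Gamma'+\Gamma'|\leq K|\Gamma'|$ for $K=\eps^{-O(1)}$, and then Freiman-Ruzsa (Theorem \ref{thm:freiman}) places $\Gamma'$ inside a subspace $H$ of dimension at most $n+O(K^C)$; intersecting with the ``slice'' over each $y$ shows that on a large set $Y\subseteq \F_p^n$, $\phi$ agrees with an affine map $y\mapsto My+b$ (after some Freiman-homomorphism bookkeeping to pass from the subspace to an actual linear map).

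The third stage is to symmetrize $M$ and integrate it to a quadratic form. The key identity is the ``parallelogram'' relation $f_{y+z}(x) = f_y(x)\cdot f_z(x+y)$ (up to conjugation), which, combined with the largeness of the Fourier coefficients at $\phi(y)$, forces $\phi(y+z)\approx \phi(y)+\phi(z)$ and, more delicately, forces $M$ to be symmetric: $\langle My,z\rangle = \langle Mz,y\rangle$ on a large set, which by another BSG/Freiman round can be upgraded to everywhere. In the stated case $p>2$ one then sets $q(x)=\tfrac12\langle Mx,x\rangle$ so that $\phi(y)$ is the derivative of $q$ in direction $y$; feeding this back through the chain of Cauchy-Schwarz inequalities one gets $|\E_x f(x)\omega^{-q(x)-b\cdot x}|\geq \exp(-O(\eps^{-C}))$.

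The main obstacle in my view is precisely the symmetrization step, which has two sources of difficulty: (i) the identity $\langle My,z\rangle=\langle Mz,y\rangle$ only holds on a large but \emph{structured} subset, and extending it to a true bilinear identity requires another round of additive combinatorics; and (ii) in characteristic $p=2$ division by 2 is unavailable, so one cannot literally write $q(x)=\tfrac12\langle Mx,x\rangle$ and one must instead (as in Samorodnitsky) construct $q$ directly by choosing its values on a basis and then checking consistency via the symmetry of $M$ on pairs of basis vectors. The rest of the argument (extracting the linear shift $b$ from an inverse-$U^2$ argument once the quadratic part is found) is comparatively routine.
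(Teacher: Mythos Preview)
Your outline is correct and follows exactly the Gowers/Green--Tao/Samorodnitsky framework that the paper cites for this theorem (which is stated here as a known result rather than proved from scratch); indeed, the paper's own algorithmic version in Section~\ref{sec:correlations} traces precisely the same steps---extract a large Fourier coefficient $\phi(y)$ of each derivative $f_y$, use the additive-quadruple count to apply Balog--Szemer\'edi--Gowers and then Freiman--Ruzsa to the graph of $\phi$, linearize, symmetrize, and integrate. One small inaccuracy: the symmetrization in Samorodnitsky's argument (and in Lemma~\ref{lem:symmetrization} here) does not require a second BSG/Freiman round but is a direct linear-algebraic step, passing from $T$ to a symmetric $B$ with $\E_x \widehat{f_x}^2(Bx)\geq \eta^2$ via the observation that $\widehat{f_x}(\alpha)=0$ whenever $\langle x,\alpha\rangle\neq 0$; otherwise your identification of the characteristic-$2$ obstruction and its resolution is on the mark.
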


In Section \ref{sec:refinement} we shall discuss various refinements of the inverse theorem, including correlations with so-called \emph{quadratic averages}. These refinements allow us to obtain polynomial instead of exponential correlation with some quadratically structured object.

We discuss further potential improvements and extensions of the arguments presented in this paper in Section \ref{sec:discussion}.

First of all, however, we shall turn to the problem of constructively obtaining a decomposition assuming that one has an efficient correlation testing procedure, which is done in Section \ref{sec:decompositions}.

\section{From decompositions to correlation testing}\label{sec:decompositions}
In this section we reduce from the problem of finding a decomposition for given function to the
problem of finding a single quadratic phase or average that correlates well with the function.

We state the basic decomposition result in somewhat greater generality as we believe it may be of
independent interest. We will consider a real-valued function $g$ on a finite domain $X$ (which shall be
$\F_2^n$ in the rest of the paper). We shall decompose the function $g$ in terms of members from an
arbitrary class $\cal Q$ of functions $\q: X \to [-1,1]$. $\cal Q$ may later be taken to be the class
of quadratic phases or quadratic averages. We will assume $\cal Q$ to be closed under negation of the
functions \ie $\q \in {\cal Q} \implies -\q \in {\cal Q}$. Finally, we shall consider a semi-norm 
$\snorm{\cdot}$ defined for functions on $X$, such that if $\snorm{f}$ is large for $f:X \to \R$
then $f$ has large correlation with some function in $\cal Q$. The obvious choice for
$\snorm{\cdot}$ is $\snorm{f} = \max_{\q \in {\cal Q}} \abs{\ip{f,\q}}$, as is the case in many known
decomposition results and the general result in \cite{TrevisanTV09}. However, we will be able to
obtain a stronger algorithmic guarantee by taking $\snorm{\cdot}$ to be the $U^3$ norm.


\begin{theorem}\label{thm:decomposition-general}
Let $\cal Q$ be a class of functions as above and let $\e, \delta > 0$ and $B > 1$.
Let $A$ be an algorithm which, given oracle access to a function 
$f: X \to [-B,B]$ satisfying $\snorm{f} \geq \e$, outputs, with probability at least $1-\delta$, a function
$\q \in {\cal Q}$ such that $\ip{f,\q} \geq \eta$ for some $\eta = \eta(\e,B)$. Then there exists an
algorithm which, given any
function $g: X \to [-1,1]$, outputs with probability at least $1-\delta/\eta^2$ a
decomposition  
\[ g~=~ c_1 \q_1 + \ldots + c_k \q_k + e + f \] 
satisfying $k \leq 1/\eta^2$,   $\snorm{f} \leq \epsilon$ and $\norm{e}_1 \leq 1/2B$. 
Also, the algorithm makes at most $k$ calls to $A$.
\end{theorem}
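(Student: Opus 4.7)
The plan is a greedy iterative procedure. Initialize $f_0 := g$. At step $i$, write $f_i = \tilde{f}_i + e_i$, where $\tilde{f}_i$ is $f_i$ pointwise truncated into $[-B,B]$ and $e_i$ is the resulting overflow. Invoke $A$ on $\tilde{f}_i$ (which lies in $[-B,B]$ by construction); by closure of $\cal Q$ under negation we obtain, with probability at least $1-\delta$, a $\q_{i+1}\in\cal Q$ satisfying $\ip{\tilde{f}_i,\q_{i+1}} \geq \eta$ whenever $\snorm{\tilde{f}_i}\geq\e$. A short Hoeffding-style sampling check on $\ip{\tilde{f}_i,\q_{i+1}}$ decides whether to accept $\q_{i+1}$ and continue, or to halt (taking the failure as evidence that $\snorm{\tilde{f}_i}\leq\e$) and return
\[ g = \sum_{j\leq i} c_j\q_j + e_i + \tilde{f}_i, \]
with $f := \tilde{f}_i$ and $e := e_i$. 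On continuation, pick a scalar $c_{i+1}$ of magnitude $\eta$ with sign inherited from the correlation, set $f_{i+1} := f_i - c_{i+1}\q_{i+1}$, and iterate.

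For the iteration count I would use the $L^2$ potential $\Phi_i := \norm{f_i}_2^2$, which begins at $\Phi_0 = \norm{g}_2^2 \leq 1$. Expanding
\[ \Phi_{i+1} = \Phi_i - 2c_{i+1}\ip{f_i,\q_{i+1}} + c_{i+1}^2\norm{\q_{i+1}}_2^2, \]
splitting $\ip{f_i,\q_{i+1}} = \ip{\tilde{f}_i,\q_{i+1}} + \ip{e_i,\q_{i+1}}$, and using $\ip{\tilde{f}_i,\q_{i+1}}\geq\eta$, $|\ip{e_i,\q_{i+1}}|\leq\norm{e_i}_1$, together with $\norm{\q_{i+1}}_2 \leq \norm{\q_{i+1}}_\infty \leq 1$, gives a per-step drop $\Phi_i - \Phi_{i+1} = \Omega(\eta^2)$ for the chosen $c_{i+1}$, which caps the total number of iterations at $1/\eta^2$. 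For the $\ell_1$ bound on the error term, the elementary pointwise inequality $(|t|-B)_+ \leq t^2/(2B)$ yields $\norm{e_i}_1 \leq \Phi_i/(2B)$, so maintaining the invariant $\Phi_i \leq 1$ throughout gives $\norm{e}_1 \leq 1/(2B)$ at termination. A union bound over the at most $1/\eta^2$ calls to $A$ (each failing with probability $\delta$) yields overall success probability at least $1 - \delta/\eta^2$.

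The delicate point, and the reason the conclusion needs a third error term bounded only in $\ell_1$ rather than the two-term decomposition of Frieze-Kannan or TTV-style boosting, is the interaction between the overflow $e_i$ and the energy decrement. In the classical regime where $\eta = \Theta(\e)$ one can afford to ignore all truncation losses; here $\eta$ may be arbitrarily smaller than $\e$, so the cross term $2c_{i+1}\ip{e_i,\q_{i+1}}$ is not automatically negligible compared to the quadratic gain $c_{i+1}^2$. The saving grace is that $\norm{e_i}_1 \leq \Phi_i/(2B)$ couples the overflow directly to the potential, so by choosing $c_{i+1}$ of order $\eta$ (rather than the larger natural projection $\ip{\tilde{f}_i,\q_{i+1}}$) and maintaining $\Phi_i \leq 1$ throughout, the cross term stays subdominant and the $\Omega(\eta^2)$ decrement survives. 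Carrying out this balancing carefully is the main obstacle.
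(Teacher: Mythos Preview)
Your iterative procedure and the use of an $L^2$-type potential are exactly on the right track and match the paper's approach. However, the energy-decrement argument as you have sketched it does not close. Concretely: with $|c_{i+1}|=\eta$ you get
\[
\Phi_i-\Phi_{i+1}\ \ge\ 2\eta\,\ip{\tilde f_i,\q_{i+1}}-2\eta\,\|e_i\|_1-\eta^2\ \ge\ \eta^2-2\eta\,\|e_i\|_1,
\]
and then invoke $\|e_i\|_1\le \Phi_i/(2B)$. But this only yields $\Phi_i-\Phi_{i+1}\ge \eta^2-\eta\Phi_i/B$, which is \emph{negative} once $\Phi_i>\eta B$. In the regime of interest (the whole point of this theorem) $\eta$ may be exponentially small while $B$ is a fixed constant, so $\eta B\ll 1\le \Phi_0$. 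Thus neither the per-step $\Omega(\eta^2)$ drop nor the invariant $\Phi_i\le 1$ is actually maintained by your estimate; the cross term $2\eta\|e_i\|_1$ is \emph{not} subdominant, and the argument is circular. The pointwise bound $(|t|-B)_+\le t^2/(2B)$ is far too weak here: at $|t|\approx B$ it gives $\|e_i\|_1\lesssim B/2$ on the overflow region, not anything comparable to $\eta$.

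The paper's remedy is to change the potential. In your notation, instead of $\Phi_i=\|f_i\|_2^2$ one tracks
\[
\Psi_i\ :=\ \|\tilde f_i\|_2^2+2B\,\|e_i\|_1,
\]
which coincides with $\Phi_i-\|e_i\|_2^2$ (using $\tilde f_i\,e_i=B|e_i|$ pointwise). The key step is a \emph{pointwise} inequality (proved by a short case analysis on whether $|\tilde f_i|$ and $|\tilde f_{i+1}|$ hit the cap $B$) showing
\[
\tilde f_i^2-\tilde f_{i+1}^2+2B|e_i|-2B|e_{i+1}|+\eta^2\ \ge\ 2\eta\,\q_{i+1}\tilde f_i,
\]
which upon taking expectations gives $\Psi_i-\Psi_{i+1}\ge \eta^2$ with no cross-term loss at all. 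Summing then yields $k\eta^2+\|\tilde f_k\|_2^2+2B\|e_k\|_1\le 1$, delivering both $k\le 1/\eta^2$ and $\|e_k\|_1\le 1/(2B)$ simultaneously. The point is that folding $2B\|e_i\|_1$ into the potential (rather than treating it as an error to be absorbed) is what makes the argument go through when $\eta$ degrades arbitrarily in $\e$.
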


We prove the decomposition theorem building on an argument from \cite{TrevisanTV09}, which in
turn generalizes an argument of \cite{FriezeK99}. Both the arguments in \cite{TrevisanTV09,
FriezeK99} work well if for a function $f: X \to R$ satisfying 
$\max_{\q \in {\cal Q}}|\ip{f,\q}| \geq \e$, one can
efficiently find a $\q \in {\cal Q}$ with $\ip{f,\q} \geq \eta = \Omega(\e)$. 
It is important there that $\eta = \Omega(\e)$, or
 at least that the guarantee is independent of how $f$ is scaled.

Both proofs give an algorithm which, at each step $t$, checks if there exists $\q_t \in \cal Q$
which has good correlation with a given function $f_t$, and the decomposition is obtained by adding
the functions $\q_t$ obtained at
different steps. In both cases, the $\ell_{\infty}$ norm of the functions $f_t$ changes as the 
algorithm proceeds.

Suppose $\e' = o(\e)$ and we only had the scale-dependent guarantee that for functions 
$f:X \to [-1,1]$ with $\snorm{f} \geq \e$, we can efficiently find a $\q \in {\cal Q}$ such that $\ip{f,\q} \geq
\e^2$ (say). Then at step $t$ of the algorithm if we have $\norm{f_t}_{\infty} = M$ (say), then
$\snorm{f_t} \geq \e$ will imply $\snorm{f/M} \geq \e/M$ and one can only get a $\q_t$
satisfying $\ip{f_t,\q_t} \geq M \cdot (\e/M)^2 = \e^2/M$. Thus, the correlation of the functions $\q_t$ we
can obtain degrades as the $\norm{f_t}_{\infty}$ increases. This turns out to be insufficient to
bound the number of steps required by these algorithms and hence the number of terms in the
decomposition.

When testing correlations with quadratic phases using $\snorm{\cdot}$ as the $U^3$ norm, 
the correlation $\eta$ obtained for $f: \F_2^n \to [-1,1]$ has very bad
dependence on $\e$ and hence we run into the above problem. To get around it, we truncate the
functions $f_t$ used by the algorithm so that we have a uniform bound on their $\ell_{\infty}$
norms. However, this truncation introduces an extra term in the decomposition, for which we bound the
$\ell_1$ norm. Controlling the $\ell_1$ norm of this term requires a somewhat more sophisticated 
analysis than in \cite{FriezeK99}.  An analysis based on a similar potential function was also
employed in \cite{TrevisanTV09} (though not for the purpose of controlling the $\ell_1$ norm).

We note that a third term with bounded $\ell_1$
norm also appears in the (non-constructive) decompositions obtained in 
\cite{GW2}.

\begin{proofof}{of Theorem \ref{thm:decomposition-general}}
We will assume all calls to the algorithm $A$ correctly return a $q$ as above or declare
$\snorm{f} < \e$ as the case may be. The probability of any error in the calls to $A$ is at
most $k\delta$.

We build the decomposition by the following simple procedure.


\fbox{
\begin{minipage}{0.9\textwidth}
%
%
\begin{itemize}
\item[-] Define functions $f_1 = h_1 = g$. Set $t =1$.
\item[-] While $\snorm{f_t} \geq \epsilon$

\begin{itemize}
\item Let $\q_t$ be the output of $A$ when called with the function $f_t$. 
\item $h_{t+1} := h_t - \eta \q_t$.
\item $f_{t+1} := \trunc{h_{t+1}} = \max\{-B, \min\{ B, h_{t+1}\}\}$
\item $t := t+1$  
\end{itemize}
\end{itemize}

\end{minipage}
}
\smallskip

If the algorithm runs for $k$ steps, the decomposition it outputs is 
\[g = \sum_{t=1}^k \eta \cdot \q_t ~+~ (h_k - f_k)  ~+~ f_k \]
where we take $f = f_k$ and $e = h_k - f_k$. By construction, we have that $\snorm{f_k} \leq
\e$. It remains to show that $k \leq 1/\eta^2$ and $\norm{h_k - f_k}_1 \leq 1/2B$.

To analyze $\norm{h_t - f_t}$, we will define an additional function $\Delta_t \defeq f_t \cdot (h_t -
f_t)$. Note that $\Delta_t(x) \geq 0$ for every $x$, since $f_t$ is simply a truncation of $h_t$ and
hence $f_t = B$ when $h_t > f_t$ and $-B$ when $h_t < f_t$. This gives
\[  \norm{\Delta_t}_1 ~=~ \ex{\Delta_t} ~=~ \ex{f_t \cdot (f_t - h_t)} ~=~ \ex{B \cdot \abs{h_t - f_t}}
~=~ B \cdot \norm{h_t-f_t}_1 .\]

We will in fact bound the $\ell_1$ norm of $\Delta_k$ to obtain the required bound on
$\norm{h_k-f_k}_1$.  The following lemma states the bounds we need at every step.


\begin{lemma}\label{lem:step-bound}
For every input $x$ and every $t \leq k-1$
\[ f_t^2(x) - f_{t+1}^2(x) + 2\Delta_t(x) - 2\Delta_{t+1}(x) + \eta^2 
~\geq~ 2\eta \cdot \q_t(x) f_t(x) .\]
\end{lemma}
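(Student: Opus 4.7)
My plan is to introduce the truncation error $u_t := h_t - f_t$, which vanishes when $|h_t|\le B$ and otherwise equals $h_t\mp B$. Since $h_t = f_t + u_t$ and $\Delta_t = f_t u_t$, squaring and collecting terms gives the identity
\[ f_t^2 + 2\Delta_t \;=\; h_t^2 - u_t^2, \]
and likewise at step $t+1$. Substituting these into the LHS of the desired inequality, together with the elementary expansion $h_t^2 - h_{t+1}^2 = 2\eta \q_t h_t - \eta^2 \q_t^2$ coming from $h_{t+1} = h_t - \eta \q_t$, and then writing $h_t = f_t + u_t$, allows the term $2\eta \q_t f_t$ to cancel on both sides. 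The claim then reduces to the purely scalar inequality
\[ \eta^2(1 - \q_t^2) + u_{t+1}^2 + 2\eta \q_t u_t - u_t^2 \;\ge\; 0. \]

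I would then complete the square via $(\eta \q_t - u_t)^2 = \eta^2 \q_t^2 - 2\eta\q_t u_t + u_t^2$, and observe the two key identifications $\eta \q_t - u_t = h_{t+1} - f_t$ (from $h_{t+1} = f_t + u_t - \eta\q_t$) and $u_{t+1} = h_{t+1} - f_{t+1}$. This recasts the inequality in the clean geometric form
\[ \eta^2 + (h_{t+1} - f_{t+1})^2 \;\ge\; (h_{t+1} - f_t)^2. \]

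Finally, I would verify this last form by a short case analysis on whether truncation is active at steps $t$ and $t+1$. If $u_t = 0$ then $h_{t+1} - f_t = -\eta\q_t$ and $(h_{t+1}-f_t)^2 \le \eta^2$ directly. If $u_t \ne 0$ but $u_{t+1}=0$, assume WLOG $h_t > B$, so $f_t = B$; the constraint $h_{t+1}\le B$ forces $0 < u_t \le \eta\q_t \le \eta$, hence $|h_{t+1}-f_t| \le \eta$ once more. If both truncations are active on the same side, $f_t = f_{t+1}$ and the two squared distances are equal, so the inequality is trivial. Truncation on opposite sides would demand $|h_{t+1}-h_t|=|\eta\q_t| > 2B$, contradicting the mild assumption $\eta \le 2B$.

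The main obstacle is only the initial algebraic regrouping; once the inequality is recast as the distance statement $\eta^2 + (h_{t+1}-f_{t+1})^2 \ge (h_{t+1}-f_t)^2$, the proof is essentially geometric, using that $f_{t+1}$ is the nearest point in $[-B,B]$ to $h_{t+1}$, that $f_t\in[-B,B]$, and that consecutive $h$'s differ by at most $\eta$.
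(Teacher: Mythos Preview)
Your proof is correct and follows essentially the same approach as the paper: both reduce, via pointwise algebraic manipulation, to the same scalar inequality (your form $\eta^2 + (h_{t+1}-f_{t+1})^2 \ge (h_{t+1}-f_t)^2$ is identical to the paper's $(f_t-f_{t+1})(f_t+f_{t+1}-2h_{t+1}) \le \eta^2$ after expanding the difference of squares), and then finish by the same case analysis on whether truncation is active at steps $t$ and $t+1$. Your route via the identity $f_t^2 + 2\Delta_t = h_t^2 - u_t^2$ and the geometric rephrasing is arguably cleaner, but the substance is the same; note also that the paper implicitly uses your assumption $\eta \le 2B$ in its case $|f_t|=|f_{t+1}|=B$.
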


We first show how the above lemma suffices to prove the theorem. Taking expectations on both sides of the
inequality gives, for all $t \leq k-1$,
\[ \norm{f_t}_2^2 - \norm{f_{t+1}}_2^2 + 2\norm{\Delta_t}_1 - 2\norm{\Delta_{t+1}}_1 + \eta^2 
~\geq~ 2 \eta \cdot \ip{\q_t, f_t} ~\geq~ 2\eta^2 .\]
Summing over all $t \leq  k-1$ gives
\begin{equation*}
\norm{f_1}_2^2 - \norm{f_{k}}_2^2 + 2\norm{\Delta_1}_1 - 2\norm{\Delta_{k}}_1 ~\geq~ k \cdot
\eta^2 
~\Longrightarrow~
k \cdot \eta^2 + \norm{f_k}_2^2 + 2\norm{\Delta_k}_1 ~\leq~ 1
\end{equation*}
since $\norm{f_1}_2^2 = \norm{g}_2^2 \leq 1$ and $\Delta_1 = 0$. However, this gives 
$k \leq 1/\eta^2$ and $\norm{\Delta_k}_1 \leq 1/2$, 
which in turn implies $\norm{h_k-f_k}_1 \leq 1/2B$, completing the proof of Theorem \ref{thm:decomposition-general}.
\end{proofof}

We now return to the proof of Lemma \ref{lem:step-bound}.
\begin{proofof}{of Lemma \ref{lem:step-bound}}
We shall fix an input $x$ and consider all functions only at $x$. We
start by bringing the RHS into the desired form and collecting terms.
\begin{align*}
2\eta \q_t \cdot f_t &~=~  2(h_t - h_{t+1}) \cdot f_t \\
&~=~ 2(h_t - f_t) \cdot f_t - 2(h_{t+1} - f_{t+1}) \cdot f_{t+1} + 2f_t^2 - 2f_{t+1}^2 
- 2h_{t+1} \cdot f_t + 2 h_{t+1} \cdot f_{t+1} \\
&~=~ 2\Delta_t - 2\Delta_{t+1} + f_t^2 - f_{t+1}^2 + \inparen{f_t^2 - f_{t+1}^2 -2h_{t+1} (f_t - f_{t+1})}
\end{align*}
It remains to show that $f_t^2 - f_{t+1}^2 -2h_{t+1} (f_t - f_{t+1}) 
= (f_t - f_{t+1}) (f_t + f_{t+1} - 2h_{t+1})
\leq \eta^2$. We first note that if $\abs{f_{t+1}} < B$, then $h_{t+1} = f_{t+1}$ and the expression
becomes $(f_t - f_{t+1})^2$, which is at most $\eta^2$. Also, if $|f_t| = |f_{t+1}| = B$, then $f_t$ and $f_{t+1}$ must be equal (as $f_t$ only changes in steps of $\eta$) and the expression is 0.

Finally, in the case when $|f_t| < B$ and $|f_{t+1}| = B$, we must have that 
$\abs{f_t - h_{t+1}} = \abs{h_t - h_{t+1}} \leq \eta$. We can then bound the expression as
\[ (f_t - f_{t+1}) (f_t + f_{t+1} - 2h_{t+1}) ~\leq~ \inparen{\frac{(f_t - f_{t+1}) + (f_t + f_{t+1} -
    2h_{t+1})}{2}}^2
~=~ (f_t - h_{t+1})^2 ~\leq~ \eta^2,\]
which proves the lemma.
\end{proofof}

We next show that in the case when $\snorm{\cdot}$ is the $U^3$ norm and $\cal Q$ contains at most $\exp{(o(2^n))}$ functions, it is sufficient to test the correlations only for Boolean functions $f:
\F_2^n \to \{-1,1\}$. This can be done by simply scaling a function taking values in $[-B,B]$ to
$[-1,1]$ and then randomly rounding the value independently at each input to 
$\pm 1$ with appropriate probability.


\begin{lemma}\label{lem:reduceboolan}
Let $\eps,\d>0$. Let $A$ be an algorithm, which, given oracle access to a function $f: \F_2^n \to \{-1,1\}$ 
satisfying $\uthreenorm{f} \geq \e$, outputs, with probability at least
$1-\delta$, a function $\q \in {\cal Q}$ such that $\ip{f,\q} \geq \eta$ for some $\eta = \eta(\e)$.
In addition, assume that the running time of $A$ is $\poly(n, 1/\eta, \log(1/\delta))$.

Then there exists an algorithm $A'$ which, given oracle access to a function 
$f: \F_2^n \to [-B,B]$ satisfying $\uthreenorm{f} \geq \e$, outputs,
with probability at least $1-2\delta$, an element $\q \in {\cal Q}$ satisfying $\ip{f,\q} \geq \eta'$
for $\eta' = \eta'(\e,B)$. Moreover, the running time of $A'$ is $\poly(n, 1/\eta', \log(1/\delta))$.
\end{lemma}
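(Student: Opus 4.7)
The plan is to reduce the bounded-valued case to the Boolean case by independently rounding the rescaled function $f/B$ at each point. Concretely, define a random Boolean function $F : \F_2^n \to \{-1,1\}$ by setting, independently for each $x$, $F(x) = +1$ with probability $(1+f(x)/B)/2$ and $F(x) = -1$ otherwise, so that $\E[F(x)] = f(x)/B$. The algorithm $A'$ simulates the oracle for $F$ on the fly (when $A$ queries the $x$-th value, $A'$ reads $f(x)$ and flips the appropriate biased coin, caching the answer so repeated queries are consistent), runs $A$ on this simulated oracle, and outputs whatever $\q \in \cal Q$ is returned. Correctness reduces to two claims: that $\uthreenorm{F}$ remains large with high probability, and that any $\q$ correlating well with $F$ also correlates well with $f$.

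For the first claim, expanding the $U^3$ norm gives
\[
\E_F\bigl[\uthreenorm{F}^8\bigr] ~=~ \E_{x,h_1,h_2,h_3}\E_F\biggl[\prod_{\omega \in \{0,1\}^3} F(x + \omega \cdot h)\biggr].
\]
When $h_1,h_2,h_3$ are linearly independent the eight points $x+\omega\cdot h$ are distinct, so independence of the $F(y)$'s lets us factor the inner expectation as $\prod_\omega f(x+\omega\cdot h)/B^8$; linearly dependent tuples form an $O(2^{-n})$ fraction of all $(h_1,h_2,h_3)$ and contribute at most $O(2^{-n})$ in absolute value. Hence $\E_F[\uthreenorm{F}^8] = \uthreenorm{f/B}^8 + O(2^{-n}) \geq (\e/B)^8/2$ once $n$ is large compared to $\log(B/\e)$. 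Since $\uthreenorm{F}^8$ is a degree-$8$ polynomial in the independent $\pm 1$ variables $\{F(x)\}_x$, Lemma \ref{lem:hoeffding-low-degree} concentrates it around this expectation, yielding $\uthreenorm{F} \geq c\e/B$ for some absolute constant $c$, except with probability at most $\delta/2$.

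For the second claim, write $\eta_0 = \eta(c\e/B)$; on the first good event, $A$ returns $\q$ with $\ip{F,\q} \geq \eta_0$ except with probability $\delta$. The subtlety is that $\q$ depends on $F$, so we cannot just take expectations inside $\ip{F,\q}$. Instead, for every \emph{fixed} $\q' \in \cal Q$, $\ip{F,\q'} = N^{-1}\sum_x F(x)\q'(x)$ is an average of $N = 2^n$ independent $[-1,1]$-valued variables with mean $\ip{f/B,\q'}$, and the Hoeffding bound of Lemma \ref{lem:hoeffding-sample} gives $\Pr[|\ip{F,\q'} - \ip{f/B,\q'}| > \eta_0/2] \leq \exp(-\Omega(\eta_0^2 N))$. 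Using the hypothesis $|\cal Q| \leq \exp(o(2^n))$ and the fact that $\eta_0$ does not depend on $n$, a union bound shows that \emph{every} $\q' \in \cal Q$ satisfies this deviation bound simultaneously except with probability $\exp(-\Omega(2^n))$. On the intersection of all good events (total failure at most $2\delta$), the $\q$ returned by $A$ satisfies $\ip{f,\q} = B\ip{f/B,\q} \geq B\eta_0/2$, so we may take $\eta' = \eta(c\e/B)/2$.

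The main obstacle is exactly the $F$-dependence of $\q$: the naive identity $\E_F[\ip{F,\q}] = \ip{f/B,\q}$ fails once $\q$ is a function of $F$, and the fix is the union bound over $\cal Q$ described above, which is where the size hypothesis on $\cal Q$ is essential. For $n$ smaller than a threshold depending on $\e$ and $B$, one can simply enumerate $\cal Q$ in time $\exp(o(2^n))$, which is absorbed into the $\poly(1/\eta')$ factor of the stated runtime. The overall runtime is dominated by the single invocation of $A$ on the simulated oracle, giving the required $\poly(n,1/\eta',\log(1/\delta))$ bound.
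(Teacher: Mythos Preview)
Your proposal is correct and follows essentially the same approach as the paper: randomly round $f/B$ to a Boolean function, invoke $A$ on the rounding, and use concentration for degree-$8$ polynomials together with a union bound over $\cal Q$ (which is where the hypothesis $|{\cal Q}| \leq \exp(o(2^n))$ enters). You are in fact more careful than the paper on two points it glosses over: you explicitly handle the contribution of linearly dependent $(h_1,h_2,h_3)$ when computing $\E_F[\|F\|_{U^3}^8]$, and you explicitly flag that $\q$ depends on $F$ so that a union bound over $\cal Q$ is genuinely needed rather than a single application of Hoeffding.
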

\begin{proof}
Consider a random Boolean function $\tilde{f} : \F_2^n \to \{-1,1\}$ such that $\tilde{f}(x)$ is 1
with probability $(1 + f(x)/B)/2$ and $-1$ otherwise. $A'$ simply calls $A$ with the function
$\tilde{f}$ and parameters $\e/2B, \delta$. This means that whenever $A$ queries the value of
the function at $x$, $A'$ generates it independently of all other points by looking at
$f(x)$. It then outputs the $\q$ given by $A$.

If $\|\tilde{f}\|_{U^3} \geq \e/2B$, then $A$ outputs a $\q$ satisfying $\langle\tilde{f},\q\rangle \geq
\eta(\e/2B)$. If for the same $q$ we also have $\ip{f, \q} \geq B \cdot \eta(\e/2B) / 2 =
\eta'(\e,B)$, then the output of $A'$ is as desired. However, $\|\tilde{f}\|_{U^3}$ is a
polynomial of degree 8 and the correlation with any $\q$ is a linear polynomial in the $2^n$ random
variables $\{\tilde{f}(x)\}_{x \in \F_2^n}$. Thus, by Lemma \ref{lem:hoeffding-low-degree},
the probability that $\|\tilde{f}\|_{U^3} < \uthreenorm{f}/B - \e/2B$, or $\langle\tilde{f},\q\rangle \geq
\ip{f,\q}/B - \eta(\e/2B)/2$ for any $\q \in {\cal Q}$, is at most 
$\exp\inparen{-\Omega_{\e,B}\inparen{-|{\cal Q}| \cdot 2^n}} \leq \delta$.
\end{proof}

Thus, to compute the required decomposition into quadratic phases, one only needs to give an
algorithm for finding a phase $\q = (-1)^q$ satisfying $\ip{f,(-1)^q} \geq \eta$ when $f : \F_2^n \to \{-1,1\}$ 
is a \emph{Boolean} function satisfying $\uthreenorm{f} \geq \e$.

\section{Finding correlated quadratic phases over $\F_2^n$}
\label{sec:correlations}

In this section, we show how to obtain an algorithm for finding a quadratic phase which has good 
correlation with a given function Boolean $f:\F_2^n \to \{-1,1\}$ (if one exists). 
For an $f$ satisfying $\uthreenorm{f} \geq \e$, we want to find a quadratic form $q$ such
that $\ip{f,(-1)^q} \geq \eta(\e)$. 
The following theorem provides such a guarantee.


\begin{theorem}\label{thm:FindQuadratic}
Given $\e, \delta > 0$, there exists $\eta = \exp(-1/\e^{C})$ and
a randomized algorithm \FindQuadratic running in time $O(n^4 \log n \cdot \poly(1/\e, 1/\eta,
\log(1/\delta)))$ which, 
given oracle access to a function $f: \F_2^n \to \pmone$, either outputs a quadratic phase $(-1)^{q(x)}$
or $\bot$. The algorithm satisfies the following guarantee.
\begin{itemize}
\item If $\uthreenorm{f} \geq \e$, then with probability at
  least $1-\delta$ it finds a quadratic form $q$ such that $\ip{f,(-1)^q} \geq \eta$.
\item The probability that the algorithm outputs a quadratic form $q$ with $\ip{f,(-1)^q} \leq \eta/2$ is at 
most $\delta$.
\end{itemize}
\end{theorem}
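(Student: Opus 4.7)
The approach is to give an algorithmic version of Samorodnitsky's proof \cite{Samorodnitsky07} of the inverse theorem for the $U^3$ norm (Theorem \ref{thm:globalinverse}), making each additive-combinatorial ingredient constructive. Once a candidate quadratic form $q$ is produced, the second bullet is immediate: estimate $\ip{f,(-1)^q}$ with $O(\log(1/\delta)/\eta^2)$ samples via Lemma \ref{lem:hoeffding-sample} and output $\bot$ if the estimate falls below $3\eta/4$. So the rest of the plan focuses on the first bullet.

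First I would exploit the inductive identity $\uthreenorm{f}^{8}=\Ex{x}{\|f_x\|_{U^2}^4}$ from Section \ref{sec:preliminaries}, where $f_x(y)=f(y)f(x+y)$. If $\uthreenorm{f}\geq \e$, a $\poly(\e)$-fraction of $x$ satisfy $\|f_x\|_{U^2}\geq \e^{O(1)}$, and for any such $x$ there is a character $\phi(x)\in \F_2^n$ with $|\widehat{f_x}(\phi(x))|\geq \e^{O(1)}$, since $\|f_x\|_{U^2}^4=\sum_\alpha |\widehat{f_x}(\alpha)|^4$ is dominated by its largest summand. Call this set of ``good'' $x$ the set $B$. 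Oracle access to $f$ yields oracle access to $f_x$, so one application of the Goldreich--Levin algorithm \cite{GoldreichL89} in time $\poly(n,1/\e)$ produces, with high probability, the value $\phi(x)$ whenever $x\in B$, and in particular an efficient membership/sampling procedure for $B$.

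Next, a Cauchy--Schwarz expansion of $\uthreenorm{f}^{8}$ shows that the graph of $\phi$ contains $\Omega(\poly(\e)\cdot |B|^{3})$ additive quadruples: for many $(x_1,x_2,x_3,x_4)$ with $x_1+x_2=x_3+x_4$ we also have $\phi(x_1)+\phi(x_2)=\phi(x_3)+\phi(x_4)$. To exploit this additive structure I would apply algorithmic analogues of Balog--\Szemeredi--Gowers (Theorem \ref{thm:bsg}) and Freiman--Ruzsa (Theorem \ref{thm:freiman}), both flagged in Section \ref{sec:results} as ingredients to be developed in the paper. Since $|B|$ may be as large as $\poly(\e)\cdot 2^{n}$, I cannot enumerate $B$; the goal is instead a sampling/membership procedure for a refined subset $B'\sub B$ on which $\phi|_{B'}$ behaves like a Freiman $2$-homomorphism into a subspace $V\leq \F_2^n$ of codimension $\poly(1/\e)$. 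Any such probabilistic procedure is intrinsically noisy, so I would prove and invoke a robust form of BSG that ``sandwiches'' the realized set between two deterministic sets of comparable additive structure, as promised in the introduction.

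Finally, from the Freiman-homomorphism-like map $\phi|_{B'}$ I would extract, via an algorithmic Bogolyubov-type argument applied to iterated sumsets of $B'$, a genuinely linear map $M:\F_2^n\to \F_2^n$ that agrees with $\phi$ on a large subset. This $M$ plays the role of the ``second derivative'' $y\mapsto \nabla_y q$ of the sought quadratic form $q$, and the bulk of the difficulty now lies in building $q$ from $M$ in characteristic $2$: the Green--Tao \cite{GrTu3} recipe $q(x)=\tfrac12\ip{Mx,x}$ is unavailable, which forces us to follow Samorodnitsky's symmetry argument to exhibit a symmetric bilinear form associated to $M$, and then determine the linear/diagonal correction by a short enumeration combined with the sampling test from the first paragraph. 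Once $q$ is constructed, the verification step discards false candidates, establishing both bullets. The main obstacles I anticipate are precisely (i) making BSG and Freiman--Ruzsa work from sampling oracles with the sandwiching workaround, and (ii) the characteristic-$2$ reconstruction of $q$ from its linearized derivative $M$.
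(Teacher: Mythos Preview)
Your proposal is essentially correct and closely tracks the paper's argument: Goldreich--Levin to sample the choice function $\phi$, a robust/sandwiched algorithmic Balog--\Szemeredi--Gowers to obtain a subset with small doubling, Samorodnitsky's symmetry argument in characteristic $2$, and a final verification by sampling. The one noticeable divergence is in how you pass from $\phi|_{B'}$ to a genuinely linear map. For Theorem~\ref{thm:FindQuadratic} the paper does \emph{not} use Bogolyubov: it applies Freiman--Ruzsa only to bound $|\langle A_\phi^{(2)}\rangle|\leq \exp(1/\e^{C})N$, samples $O(n^2)$ points on which \bsgtest accepts, and reads off a linear map $x\mapsto Tx$ on all of $\F_2^n$ directly by Gaussian elimination on these samples (Lemma~\ref{lem:lin-choice}); this already yields $\Ex{x}{\widehat{f_x}^2(Tx)}\geq\eta$ with $\eta=\exp(-1/\e^C)$. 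The Bogolyubov route you sketch---finding a subspace inside an iterated sumset and exploiting the Freiman-homomorphism property of $\phi$ there---is precisely what the paper develops \emph{later}, in Section~\ref{sec:refinement}, to obtain the stronger local conclusion with polynomial correlation (quadratic averages). Both routes can be made to reach Theorem~\ref{thm:FindQuadratic}, but the direct Freiman--Ruzsa plus Gaussian-elimination step is shorter here; the Bogolyubov detour only pays off when one wants to avoid the exponential loss from globalizing. One small caution: your ``short enumeration'' for the linear correction must in fact be another Goldreich--Levin call on $f\cdot(-1)^{\langle x,Mx\rangle}$ (as in Lemma~\ref{lem:integration}), since a literal enumeration over $\F_2^n$ is not affordable.
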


The fact that $\uthreenorm{f} \geq \e$ implies the \emph{existence} of a quadratic phase $(-1)^q$ with
$\ip{f,(-1)^q} \geq \eta$ was proven by Samorodnitsky \cite{Samorodnitsky07}. We give an algorithmic 
version of his proof, starting with the proofs of the results from additive combinatorics contained therein.

Note that $\uthreenorm{f}^8$ is simply the expected value of the product 
$\prod_{\omega \in \B^3} f(x+\omega \cdot h)$ for random $x,h_1,h_2,h_3 \in \F_2^n$.
Hence,  Lemma \ref{lem:hoeffding-sample} implies that $\uthreenorm{f}$ can be easily estimated
by sampling sufficiently many values of $x,h_1,h_2,h_3$ and taking the average of the
products for the samples.

\begin{corollary}\label{cor:u3sample}
By making $O((1/\gamma^2) \cdot \log(1/\delta))$ queries to $f$, one can obtain an estimate
$\hat{U}$ such that
\[ \prob{|\uthreenorm{f}-\hat{U}| > \gamma} \leq \delta.\]
\end{corollary}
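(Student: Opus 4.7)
The plan is a straightforward Monte Carlo estimator: use the fact that $\uthreenorm{f}^8$ is defined as an expectation, estimate it by averaging over independent samples, and then extract the norm itself by taking the eighth root. By Definition \ref{uknorms},
\[ \uthreenorm{f}^8 ~=~ \Ex{x, h_1, h_2, h_3 \in \F_2^n}{\prod_{\omega \in \{0,1\}^3} f(x + \omega \cdot h)}, \]
and since $f$ takes values in $\{-1, 1\}$, the eight-fold product inside the expectation is itself a $\pm 1$-valued random variable, in particular bounded by $1$ in absolute value.

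The first step is to draw $t$ independent uniform tuples $(x^{(i)}, h_1^{(i)}, h_2^{(i)}, h_3^{(i)}) \in (\F_2^n)^4$, query the eight corresponding values of $f$ for each tuple, form the product, and average to obtain $\hat{V}$. Each sample uses $8$ queries, and by Lemma \ref{lem:hoeffding-sample} the choice $t = O(\log(1/\delta) / \gamma'^2)$ ensures $|\hat{V} - \uthreenorm{f}^8| \leq \gamma'$ except with probability at most $\delta$.

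The second step is to output $\hat{U} = \max(0, \hat{V})^{1/8}$; truncating a negative empirical value up to zero cannot hurt since $\uthreenorm{f} \geq 0$. The elementary inequality $|a^{1/8} - b^{1/8}| \leq |a - b|^{1/8}$ for $a, b \geq 0$ (which reduces after factoring $u^8 - v^8$ to $(u-v)^7 \leq u^7 + u^6 v + \cdots + v^7$) then gives $|\hat{U} - \uthreenorm{f}| \leq \gamma'^{1/8}$, so setting $\gamma' = \gamma^8$ yields the desired accuracy $\gamma$ on the norm.

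There is no genuine obstacle here; the only discrepancy worth flagging is between the stated query bound of $O(\log(1/\delta)/\gamma^2)$ and what the route above actually produces, namely $O(\log(1/\delta)/\gamma^{16})$, the factor of $16$ arising from the eighth-root step. I read the corollary as being deliberately loose about the polynomial dependence on $1/\gamma$, which is harmless in the sequel since the target error $\gamma$ used in \FindQuadratic will in any case be polynomial in much smaller parameters.
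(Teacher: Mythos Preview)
Your approach is exactly the one the paper sketches in the sentence preceding the corollary: estimate $\uthreenorm{f}^8$ as an average of sampled products and invoke the Hoeffding bound of Lemma~\ref{lem:hoeffding-sample}. Your observation about the query count is also correct---the stated $O((1/\gamma^2)\log(1/\delta))$ only delivers accuracy $\gamma$ on the eighth power, and the paper is simply being loose about the polynomial dependence in $1/\gamma$, which is immaterial in the subsequent application.
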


The main algorithm begins by checking if $\hat{U} \geq 3\e/4$ and rejects if this is not the case. If
$\hat{U} \geq 3\e/4$, then the above claim implies that $\uthreenorm{f} \geq \e/2$ with high
probability. So our algorithm will actually return a $q$ with correlation $\eta(\e')$ with $\e' = \e/2$. We shall
ignore this and just use $\e$ in the sequel for the sake of readability.

\subsection{Picking large Fourier coefficients in derivatives} \label{sec:large-Fourier}
The first step of the proof in \cite{Samorodnitsky07} is to find a choice function 
$\phi: \F_2^n \to \F_2^n$ which is ``somewhat linear''. The choice
function is used to pick a Fourier coefficient for the derivative $f_y$. 
The intuition is that if $f$ were indeed a quadratic phase of the form $(-1)^{\langle x, M x\rangle}$, then 
\[f_y(x) = f(x)f(x+y) = (-1)^{\langle x,(M+M^T) y\rangle} \cdot  (-1)^{\langle y, M y\rangle}  \ .\]
Thus, the largest Fourier coefficient (with absolute value 1) would be $\fyhat((M+M^T) y)$. 
Hence, there is a function 
$\phi(y) \defeq (M+M^T) y$, which is given by multiplying $y$ by a \emph{symmetric matrix}
$M + M^T$, which selects a large Fourier coefficient for $f_y$. 
The proof attempts to construct such a symmetric matrix for any $f$ with $\uthreenorm{f}
\geq \e$.

Expanding the $U^3$ norm and using \Holder's inequality gives the following lemma.


\begin{lemma}[Corollary 6.6 \cite{Samorodnitsky07}]
Suppose that $f: \F_2^n \ra \{-1,1\}$ is such that $\uthreenorm{f} \geq \e$. Then

\[\Ex{x,y}{\sum_{\alpha,\beta} \fxhat^2(\alpha) \cdot \fyhat^2(\beta) \cdot
  \fxyhat^2(\alpha+\beta)} ~\geq~ \e^{16}.\]
\end{lemma}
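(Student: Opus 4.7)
The plan is to reduce both sides to identities involving only the Fourier coefficients of the derivative functions $f_u(x) := f(x)f(x+u)$. By the inductive definition of the uniformity norms,
\[\|f\|_{U^3}^{8} = \E_u \|f_u\|_{U^2}^4 = \E_u \sum_\gamma \hat{f_u}(\gamma)^4,\]
so the hypothesis $\|f\|_{U^3} \geq \eps$ gives $\E_u \sum_\gamma \hat{f_u}(\gamma)^4 \geq \eps^8$. The key identity I aim to establish is
\[S \;:=\; \E_{x,y}\sum_{\alpha,\beta}\hat{f_x}^2(\alpha)\hat{f_y}^2(\beta)\hat{f_{x+y}}^2(\alpha+\beta) \;=\; \E_u \sum_\gamma \hat{f_u}(\gamma)^6.\]

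To establish this identity, I would expand each of the three Fourier-squared factors via $\hat{f_y}^2(\alpha) = \E_{z,u} f_y(z)f_y(z+u)(-1)^{\alpha u}$ (which follows from the definition of $\hat{f_y}$ and the substitution $w = z+u$), using three independent pairs $(z_i,u_i)$ for $i=1,2,3$. Summing over $\alpha$ and $\beta$ produces orthogonality factors forcing $u_1 = u_2 = u_3 =: u$. Using the symmetry $f_x(z)f_x(z+u) = f_u(z)f_u(z+x)$ (both equal $f(z)f(z+x)f(z+u)f(z+x+u)$), the resulting six-fold product rewrites as $f_u(z_1)f_u(z_1+x)\cdot f_u(z_2)f_u(z_2+y)\cdot f_u(z_3)f_u(z_3+x+y)$, which factorizes over $z_1,z_2,z_3$. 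Each factor is an autocorrelation of $f_u$, equal by Plancherel to $\sum_\gamma \hat{f_u}^2(\gamma)(-1)^{\gamma\cdot(\text{shift})}$, and the final average over $x,y$ collapses the three independent $\gamma$-variables into one via the conditions $\gamma_1+\gamma_3 = 0 = \gamma_2+\gamma_3$, yielding $\sum_\gamma \hat{f_u}^6(\gamma)$.

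Once the identity is in hand, the lower bound follows from two applications of Cauchy-Schwarz (a degenerate case of H\"older). Pointwise in $u$, using Parseval $\sum_\gamma \hat{f_u}^2(\gamma) = \|f_u\|_2^2 = 1$,
\[\Big(\sum_\gamma \hat{f_u}^4(\gamma)\Big)^2 = \Big(\sum_\gamma \hat{f_u}^3(\gamma)\cdot\hat{f_u}(\gamma)\Big)^2 \leq \Big(\sum_\gamma \hat{f_u}^6(\gamma)\Big)\Big(\sum_\gamma \hat{f_u}^2(\gamma)\Big) = \sum_\gamma \hat{f_u}^6(\gamma),\]
and then, taking expectation in $u$ and applying Cauchy-Schwarz once more,
\[S = \E_u \sum_\gamma \hat{f_u}^6(\gamma) \;\geq\; \E_u\Big(\sum_\gamma \hat{f_u}^4(\gamma)\Big)^2 \;\geq\; \Big(\E_u \sum_\gamma \hat{f_u}^4(\gamma)\Big)^2 = \|f\|_{U^3}^{16} \geq \eps^{16}.\]
The bulk of the work lies in the Fourier bookkeeping that produces the identity $S = \E_u \sum_\gamma \hat{f_u}^6(\gamma)$; once this is pinned down, the Cauchy-Schwarz chain is entirely routine.
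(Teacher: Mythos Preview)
Your proof is correct and matches the approach the paper alludes to. The paper does not actually prove this lemma but cites it from \cite{Samorodnitsky07}, prefacing it only with the remark that it follows by ``expanding the $U^3$ norm and using \Holder's inequality''; your argument does precisely this, establishing the identity $S = \E_u \sum_\gamma \hat{f_u}(\gamma)^6$ by Fourier expansion and then applying Cauchy--Schwarz (a special case of \Holder) twice together with Parseval.
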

Choosing a random function $\phi(x) = \alpha$ with 
probability $\fxhat^2(\alpha)$ satisfies
\[ \Prob{x,y}{\phi(x) + \phi(y) = \phi(x+y)} ~=~ 
\sum_{\alpha,\beta} \fxhat^2(\alpha) \cdot \fyhat^2(\beta) \cdot
  \fxyhat^2(\alpha+\beta) . \] 
Thus, when $\uthreenorm{f} \geq \e$ , the above lemma gives that
\[ \Prob{\phi,x,y}{\phi(x) + \phi(y) = \phi(x+y)} ~=~
\Ex{x,y}{\sum_{\alpha,\beta} \fxhat^2(\alpha) \cdot \fyhat^2(\beta) \cdot
  \fxyhat^2(\alpha+\beta)}
~\geq~ \e^{16} . \] 

The proof in \cite{Samorodnitsky07} works with a random function $\phi$
as described above. 
We define a slightly different random function $\phi$, since we need its value at any
input $x$ to be samplable in time polynomial in $n$. Thus, we will only sample $\alpha$ for which
the corresponding Fourier coefficients are sufficiently large. 
In particular, we need an algorithmic
version of the decomposition of a function into linear phases, which follows from the
Goldreich-Levin theorem.


\begin{theorem}[Goldreich-Levin \cite{GoldreichL89}]\label{thm:linear-decomposition}
Let $\gamma,\delta>0$. There is a randomized algorithm \LinearDecomposition, which, given oracle access to a function $f: \F_2^n \to \pmone$, runs in time
$O(n^2\log n \cdot \poly(1/\gamma, \log(1/\delta)))$ and outputs a decomposition 
\[f = \sum_{i=1}^k
c_i \cdot (-1)^{\langle \alpha_i,x\rangle} + f'\]
with the following guarantee:
\begin{itemize}
\item $k = O(1/\gamma^2)$.
\item $\prob{\exists i~ \smallabs{c_i - \fhat(\alpha_i)} > \gamma/2} \leq \delta$.
\item $\prob{\forall \alpha ~\text{such that}~
    \smallabs{\fhat(\alpha)} \geq \gamma,~~ \exists i~ \alpha_i = \alpha} \geq 1-\delta$.
\end{itemize}
\end{theorem}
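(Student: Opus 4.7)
The plan is to implement the standard recursive subspace-pruning (``bucketing by prefixes'') algorithm of Goldreich and Levin and then add one final Hoeffding-averaging step to produce the coefficients $c_i$. The central observation is that for any prefix $\beta \in \F_2^k$, writing $x = (x',x'') \in \F_2^k \times \F_2^{n-k}$ and summing Parseval on the first block, one has the identity
\[
p_k(\beta) \;\defeq\; \sum_{\alpha \colon \alpha|_{[k]} = \beta} \fhat(\alpha)^2
\;=\; \Ex{x_1',x_2' \in \F_2^k,\; x'' \in \F_2^{n-k}}{f(x_1',x'')\, f(x_2',x'')\, (-1)^{\langle \beta,\, x_1'+x_2'\rangle}}.
\]
Since $|f| \equiv 1$, the random variable inside the expectation lies in $\pmone$, so by Lemma \ref{lem:hoeffding-sample} we can estimate any $p_k(\beta)$ to additive accuracy $\gamma^2/8$ with confidence $1-\delta'$ using $O(\gamma^{-4}\log(1/\delta'))$ oracle queries and $O(n \cdot \gamma^{-4}\log(1/\delta'))$ time.

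The algorithm then maintains, for each level $k=0,1,\dots,n$, a list $L_k \subseteq \F_2^k$ of ``live'' prefixes. It starts with $L_0 = \{\emptyset\}$. At step $k \to k+1$, for every $\beta \in L_k$ it computes the two estimates $\hat p_{k+1}(\beta 0)$ and $\hat p_{k+1}(\beta 1)$ and keeps the extension in $L_{k+1}$ iff the estimate is at least $3\gamma^2/4$. After $n$ steps the candidate list is $L_n$; for each $\alpha_i \in L_n$ the algorithm estimates $\fhat(\alpha_i)$ directly (as $\Ex{x}{f(x)(-1)^{\langle \alpha_i,x\rangle}}$) to additive accuracy $\gamma/4$ by Hoeffding, and outputs the pairs $(c_i,\alpha_i)$.

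Correctness of the pruning step is the heart of the argument. By Parseval $\sum_\beta p_k(\beta) \leq 1$ at every level, so at most $2/\gamma^2$ prefixes can satisfy $p_k(\beta) \geq \gamma^2/2$; hence, if all the estimates at level $k$ are within $\gamma^2/8$ of their true values, then $|L_k| \leq 2/\gamma^2$ and $L_k$ contains every $\alpha$ with $|\fhat(\alpha)| \geq \gamma$. The total number of estimates performed across all levels is $O(n/\gamma^2)$, so setting $\delta' = \delta / (Cn/\gamma^2)$ and taking a union bound, the probability that any estimate is off by more than $\gamma^2/8$ (or, for the final $c_i$ estimates, by more than $\gamma/4$) is at most $\delta$. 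On this good event, the three conclusions of the theorem all hold: the list size is $k = |L_n| = O(1/\gamma^2)$; every $c_i$ satisfies $|c_i - \fhat(\alpha_i)| \leq \gamma/2$; and every $\alpha$ with $|\fhat(\alpha)| \geq \gamma$ survives into $L_n$.

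The running-time bookkeeping is routine: there are $n$ levels, at most $O(1/\gamma^2)$ prefixes at each level, each giving rise to $2$ estimates using $O(\gamma^{-4}\log(n/(\gamma\delta)))$ samples of $O(n)$-time each, plus $O(1/\gamma^2)$ final coefficient estimates, giving the claimed $O(n^2 \log n \cdot \poly(1/\gamma,\log(1/\delta)))$ bound. The main subtlety I expect will be making sure that the accuracy parameters of the various Hoeffding estimates and the pruning threshold are chosen consistently so that a single union bound handles both the ``no false negatives'' statement (every large coefficient survives) and the ``no spurious survivors with $|\fhat(\alpha_i)| < \gamma/2$'' statement implicit in the second bullet of the theorem; this is a matter of picking margins of $\gamma^2/8$ versus the pruning threshold $3\gamma^2/4$, and margin $\gamma/4$ in the final $c_i$ estimate, rather than any deeper difficulty.
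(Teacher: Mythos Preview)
The paper does not actually prove this theorem; it is quoted from \cite{GoldreichL89} and the only proof-relevant content is Remark~\ref{rem:Goldreich-Levin}, which notes that the version stated here (where the third bullet holds simultaneously for \emph{all} large Fourier coefficients) follows from the usual per-coefficient Goldreich--Levin guarantee by repeating the algorithm $O(\log(1/\gamma))$ times and union-bounding over the at most $O(1/\gamma^2)$ large coefficients.

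Your proposal, by contrast, is a full from-scratch reconstruction of the standard recursive prefix-pruning proof, and it is correct: the identity for $p_k(\beta)$, the Hoeffding estimates, the pruning thresholds, and the running-time bookkeeping are all fine. One minor circularity worth tightening: your union bound assumes there are $O(n/\gamma^2)$ estimates in total, which presupposes that $|L_k| = O(1/\gamma^2)$ at every level---exactly what the union bound is meant to establish. The usual fix is to hard-cap each $|L_k|$ at (say) $4/\gamma^2$ and abort if the cap is exceeded; on the good event this never triggers, and it makes the number of estimates deterministically bounded so the union bound is clean. This is a one-line implementation detail, not a real gap.
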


\begin{remark}\label{rem:Goldreich-Levin}
Note that the above is a slightly non-standard version of the Goldreich-Levin theorem. The usual
one makes $O(n\log n \cdot \poly(1/\gamma, \log(1/\delta)))$ queries to
$f$ (where each query takes $O(n)$ time to write down) and guarantees that for any specific
$\alpha$ such that $\smallabs{\fhat(\alpha)} \geq \gamma$, there exists an $i$ with 
$\alpha_i = \alpha$, with probability at least $1-\delta$. By repeating the algorithm
$O(\log(1/\gamma))$ 
times, we can take a union bound over all $\alpha$ as in the last property guaranteed by the above theorem.
\end{remark}

It follows that in order to sample $\phi(x)$, instead of sampling from all Fourier coefficients of $f_x$, we only
sample from the large Fourier coefficients using the above decomposition. We shall denote
the quantity $\e^{16}/4$ that appears below by $\rho$.
%

\begin{lemma}\label{lem:sample-phi}
There exists a distribution over functions $\phi: \F_2^n \to \F_2^n$ such that $\phi(x)$ is
independently chosen for each $x \in \F_2^n$, and is samplable in time 
$O(n^3\log n \cdot \poly(1/\e))$  given oracle access to $f$. Moreover, if 
$\uthreenorm{f} \geq \e$, then we have
\[ \Prob{\phi}{ \Prob{x, y}{\phi(x) + \phi(y) = \phi(x+y)} \geq \e^{16}/4 } \geq \e^{16}/4 . \]
\end{lemma}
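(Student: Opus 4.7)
The plan is to replace the idealized sampling $\phi(x)=\alpha$ with probability $\fxhat^2(\alpha)$ over all $\alpha\in\F_2^n$ (which would cost time $2^n$) by a sampler that uses \LinearDecomposition to locate and estimate only the ``heavy'' Fourier coefficients of the derivative $f_x(z)=f(z)f(z+x)$. Concretely, I would fix a threshold $\gamma=c\,\e^{16}$ for a sufficiently small absolute constant $c$. To draw a single value $\phi(x)$, the sampler independently invokes Theorem \ref{thm:linear-decomposition} on $f_x$ with accuracy $\gamma$ and failure probability $\delta'\ll\e^{16}$, obtaining pairs $(\alpha_i,c_i)$; it then outputs $\phi(x)=\alpha_i$ with probability $c_i^2$ and a fixed default element (say $0\in\F_2^n$) with the residual mass $1-\sum_i c_i^2$. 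This uses one invocation of \LinearDecomposition per sample, costing $O(n^3\log n\cdot\poly(1/\e))$, and the $\phi(x)$ are plainly independent across $x$.

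Next I would compare this sampler with the ideal one. Starting from the inequality stated just before the lemma,
\[\Ex{x,y}{\sum_{\alpha,\beta}\fxhat^2(\alpha)\fyhat^2(\beta)\fxyhat^2(\alpha+\beta)}\geq \e^{16},\]
I would restrict the inner sum to the region where all three of $|\fxhat(\alpha)|$, $|\fyhat(\beta)|$, $|\fxyhat(\alpha+\beta)|$ are at least $\gamma$. Each of the three symmetric ``small coefficient'' terms satisfies
\[\sum_{|\fxhat(\alpha)|<\gamma,\,\beta}\fxhat^2(\alpha)\fyhat^2(\beta)\fxyhat^2(\alpha+\beta)\leq \gamma\sum_{\alpha,\beta}|\fxhat(\alpha)|\fyhat^2(\beta)\fxyhat^2(\alpha+\beta)\leq \gamma,\]
where the final bound follows from a Cauchy-Schwarz step applied to the $(\alpha,\beta)$-sum combined with Parseval's identity $\sum_\alpha\fxhat^2(\alpha)=1$ for each derivative. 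Taking $c$ small, the restricted sum is at least $\tfrac{3}{4}\e^{16}$ in expectation over $x,y$. The Goldreich-Levin estimation error $|c_i^2-\fxhat^2(\alpha_i)|$ on heavy characters and the failure probability $\delta'$ contribute at most an additional $O(\gamma+\delta')$ slack, while extra matches involving the default element only help. Consequently $\Ex{\phi}{\Prob{x,y}{\phi(x)+\phi(y)=\phi(x+y)}}\geq \tfrac{1}{2}\e^{16}$, and the standard observation that a $[0,1]$-valued random variable with mean at least $t$ exceeds $t/2$ with probability at least $t/2$ delivers the claimed bound $\e^{16}/4$.

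The main obstacle will be the error bookkeeping: the truncation loss $3\gamma$, the Goldreich-Levin estimation error, the failure probability $\delta'$, and any contribution from the default element all have to be packed inside $\tfrac{1}{2}\e^{16}$. All four are controllable once $\gamma$ and $\delta'$ are chosen as suitably small polynomial multiples of $\e^{16}$, and everything ultimately reduces to the Cauchy-Schwarz calculation above together with Parseval's identity for each derivative.
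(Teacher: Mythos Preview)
Your proposal is correct and follows essentially the same construction as the paper: sample $\phi(x)$ by running \LinearDecomposition on $f_x$ with threshold and failure probability a small multiple of $\e^{16}$, output $\alpha_i$ with probability $c_i^2$, and dump the leftover mass on a default element; then compare to the ideal spectral sampler and apply the averaging argument to pass from an expectation bound to the stated probability bound.

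The only real difference is in the error accounting. The paper does it in one line by observing that, on a successful run of \LinearDecomposition, each heavy $\alpha$ is hit with probability at least $[\fxhat^2(\alpha)-\gamma]_0$, so
\[
\Prob{\phi,x,y}{\phi(x)+\phi(y)=\phi(x+y)}\ \ge\ (1-2\gamma)^3\,\Ex{x,y}{\sum_{\alpha,\beta}[\fxhat^2(\alpha)-\gamma]_0[\fyhat^2(\beta)-\gamma]_0[\fxyhat^2(\alpha+\beta)-\gamma]_0}\ \ge\ \e^{16}-9\gamma,
\]
with $\gamma=\delta=\e^{16}/18$. Your route---first restricting to the region where all three coefficients are heavy and bounding each of the three light-coefficient tails by $\gamma$ via Cauchy--Schwarz and Parseval---is a perfectly valid alternative and arrives at the same $\e^{16}/2$ expectation (hence $\e^{16}/4$ after averaging). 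Your observation that collisions through the default element can only help is also correct and worth keeping.
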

\begin{proof}
We sample $\phi(x)$ at each input $x$ as follows. We run \LinearDecomposition for $f_x$ 
with $\gamma = \delta = \e^{16}/18$ and sample $\phi(x)$ to
be $\alpha_i$ with probability $c_i^2$. If $\sum c_i^2 < 1$, we answer
arbitrarily with the remaining probability. By Theorem \ref{thm:linear-decomposition}, 
with probability at least 
$1- 2\gamma$ over the run of \LinearDecomposition, each $\alpha \in \F_2^n$ with 
$\smallabs{\fxhat(\alpha)} \geq \gamma$ is sampled with probability at least 
$(\fxhat(\alpha) - \gamma/2)^2 \geq \fxhat^2(\alpha) - \gamma$. Let $[z]_0$ denote
$\max\{0,z\}$. We have
\begin{align*}
\Prob{\phi, x, y}{\phi(x) + \phi(y) = \phi(x+y)}  &\geq 
\Ex{x,y}{\sum_{\alpha,\beta} (1-2\gamma)^3 \insquare{\fxhat^2(\alpha) -
    \gamma}_0 \insquare{\fyhat^2(\beta) - \gamma}_0 \insquare{\fxyhat^2(\alpha+\beta) - \gamma}_0} \\
&\geq 
\e^{16} - 9\gamma,
\end{align*}
which by our choice of parameters is at least $\eps^{16}/2$.
This immediately implies that
$\Prob{\phi}{ \Prob{x, y}{\phi(x) + \phi(y) = \phi(x+y)} \geq \e^{16}/4 } \geq \e^{16}/4$.
\end{proof}

Thus, with probability $\rho=\e^{16}/4$ one gets a good $\phi$ which is somewhat linear. This $\phi$ is then
used to recover an appropriate quadratic phase. We will actually delay sampling the
function on all points and only query $\phi(x)$ when needed in the construction of the quadratic
phase (which we show can be done by querying $\phi$ on polynomially many points). Consequently, the
construction procedures that follow will only work with a small probability, i.e. when we are actually
working with a good $\phi$. However, we can test the quadratic phase we obtain in the end and repeat
the entire process if the phase does not correlate well with $f$. Also, note that we store the
$(x,\phi(x))$ already sampled in a data structure and re-use them if and when the same $x$ is queried
again.

\subsection{Applying the Balog-\Szemeredi-Gowers theorem} \label{sec:bsg}
The next step of the proof uses $\phi$ to obtain a linear choice function $Dx$ 
for some matrix $D$. This step uses certain results from additive combinatorics, 
for which we develop algorithmic versions below. In particular, it applies the 
Balog-\Szemeredi-Gowers (BSG) theorem to the set 
\[A_{\phi} \defeq \inbraces{(x,\phi(x)) \suchthat \smallabs{\fxhat(\phi(x))} \geq \gamma},\]
where we will choose $\gamma = O(\e^{16})$ as in Lemma \ref{lem:sample-phi}.

For any set $A \in \B^{n}$ that is somewhat linear, the 
Balog-\Szemeredi-Gowers theorem allows us to find a subset $A' \subseteq A$ which is large 
and does not grow too much when added to itself. We state the following version from
\cite{BalogS94}, which is particularly suited to our application.

%
\begin{theorem}[Balog-Szemer\'edi-Gowers Theorem \cite{BalogS94}]\label{thm:bsg2}
Let $A \subseteq \F_2^n$ be such that $\Prob{a_1, a_2 \in A}{a_1 + a_2 \in A} \geq \rho$.
Then there exists $A' \subseteq A$, $|A|' \geq \rho |A|$ such that 
$|A'+ A'| \leq (2/\rho)^{8} |A|$.
\end{theorem}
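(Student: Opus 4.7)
The plan is to reduce to the classical Balog-Szemer\'edi-Gowers theorem (Theorem \ref{thm:bsg}) by first converting the sumset-probability hypothesis into a lower bound on additive energy. Let $r(z) = |\{(a_1, a_2) \in A \times A : a_1 + a_2 = z\}|$ denote the representation function. The assumption $\Prob{a_1, a_2}{a_1 + a_2 \in A} \geq \rho$ says $\sum_{z \in A} r(z) \geq \rho |A|^2$, and by Cauchy-Schwarz,
\[
\sum_z r(z)^2 \;\geq\; \sum_{z \in A} r(z)^2 \;\geq\; \frac{\left(\sum_{z \in A} r(z)\right)^2}{|A|} \;\geq\; \rho^2 |A|^3.
\]
Since $\sum_z r(z)^2$ exactly counts additive quadruples $(a_1,a_2,a_3,a_4) \in A^4$ with $a_1+a_2=a_3+a_4$, we have at least $|A|^3/K$ additive quadruples for $K = \rho^{-2}$, putting us in position to apply BSG.

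The combinatorial heart of BSG is a graph-theoretic lemma applied to the graph $G$ on vertex set $A$ whose edges are pairs $(a,b)$ with $a+b \in A$; this graph has at least $\rho |A|^2$ ordered edges, equivalently average degree at least $\rho |A|$. I would extract from $G$ a subset $A' \subseteq A$ with $|A'| \geq \rho |A|$ such that \emph{every} pair $x,y \in A'$ admits at least $(\rho/2)^8 |A|$ common neighbors in $G$. The standard approach is dependent random choice: sample random test vertices $u_1, \ldots, u_t$ from $A$ and take $A' = \bigcap_i N_G(u_i)$. The expected size is roughly $\rho^t |A|$, while the expected number of pairs with too few common neighbors is small, so a Markov-based pruning step removes the exceptional vertices and yields the uniform common-neighbor guarantee on the rest.

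The final step converts common neighbors into the sumset bound. For any $x,y \in A'$ and any common neighbor $z$ in $G$, we have $x+z \in A$ and $y+z \in A$, and the identity $x+y = (x+z) + (y+z)$ exhibits $x+y$ as a sum of two elements of $A$; different $z$'s give different representations. Therefore each element of $A' + A'$ admits at least $(\rho/2)^8 |A|$ distinct representations in $A + A$, while the total number of ordered pairs in $A \times A$ is $|A|^2$. This yields
\[
|A' + A'| \;\leq\; \frac{|A|^2}{(\rho/2)^8 |A|} \;=\; (2/\rho)^8 |A|,
\]
as required.

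The main obstacle is purely quantitative: tracking the exponents through dependent random choice to obtain simultaneously $|A'| \geq \rho |A|$ and $(\rho/2)^8 |A|$ common neighbors per pair, so that the exponent in the final bound is exactly $8$. This demands a careful balancing of $t$ against the pruning threshold for ``bad'' vertices; the version in \cite{BalogS94} tunes these parameters to yield exactly the stated constants, and we import that calibration directly.
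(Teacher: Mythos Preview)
The paper does not actually prove this theorem; it is quoted from \cite{BalogS94} as a black box. What the paper \emph{does} do, in Section~\ref{sec:bsg}, is spell out the specific proof strategy of Sudakov--\Szemeredi--Vu \cite{SudakovSV05} (as presented in \cite{Viola07}) in order to make it algorithmic. That strategy is different from yours: one picks a \emph{single} random vertex $u$, sets $A' = T(u) = N(u)\setminus S(u)$, where $S(u)$ is the explicitly defined set of ``bad'' vertices $v\in N(u)$ for which too many $v_1\in N(u)$ have $|N(v)\cap N(v_1)|$ small. Lemma~\ref{lem:bsg-ssv} records that with probability $\rho/2$ over $u$ this works with the stated constants. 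The point is that membership in $T(u)$ is a local condition --- a few nested probabilities over random vertices --- and this is exactly what allows the paper to build the sampling-based procedure \bsgtest.

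Your route is the other standard proof of BSG: dependent random choice with several test vertices $u_1,\ldots,u_t$, followed by pruning. The common-neighbour-to-sumset step you wrote is correct (and identical in spirit to the final step of SSV). Two remarks, though. First, the detour through additive energy and Theorem~\ref{thm:bsg} is unnecessary and costs you: the hypothesis already hands you a graph of edge density $\rho$, and going via $K=\rho^{-2}$ in Theorem~\ref{thm:bsg} would degrade the exponents. Second, the sentence ``we import that calibration directly'' is where the actual work lives; getting $|A'|\ge \rho|A|$ and $(\rho/2)^8|A|$ common neighbours \emph{simultaneously} is precisely the content of the lemma, and the single-vertex SSV version is what delivers those specific constants. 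For the purposes of this paper the SSV formulation is not merely a stylistic choice: the explicit description of $T(u)$ is what makes the subsequent algorithmic analysis (Lemma~\ref{lem:bsgtest}) possible.
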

We are interested in finding the set $A_{\phi}'$ which results from applying the 
above theorem to the set $A_{\phi}$. 
However, since the set $A_{\phi}'$ is of exponential size, we do not have time to
write down the entire set (even if we can find it). Instead, we will need an efficient 
algorithm for testing membership in the set.
To get the required algorithmic version, we follow the proof by 
Sudakov, \Szemeredi and Vu \cite{SudakovSV05} and the presentation by Viola \cite{Viola07}.

In this proof one actually constructs a graph on the set $A_{\phi}$ and then selects a subset of 
the neighborhood of a random vertex as $A_{\phi}'$, after removing certain problematic vertices.
It can be deduced that the set $A_{\phi}'$ can be found in time polynomial in the 
size of the graph. However, as discussed above, this is 
still \emph{exponential} in $n$ and hence inadequate for our purposes.
Below, we develop a test to check if a certain element $(x,\phi(x))$ is in $A_{\phi}'$.

We first define a (random) graph on the vertex set
\footnote{Since $\phi$ is random, the vertex set
  of the graph as defined is random. However, since $\phi$ is a function, the vertex set is
  isomorphic to $\F_2^n$ and one may think of the graph as being defined on a fixed set of vertices with
  edges chosen according to a random process.} 
$\inbraces{(x,\phi(x)) \mid x \in \F_2^n}$ and edge set $E_{\gamma}$ for $\gamma > 0$, defined as
\[
E_{\gamma} ~\defeq~ \inbraces{(x,\phi(x)), (y,\phi(y)) ~\left\lvert~
\begin{array}{c}
\phi(x) + \phi(y) = \phi(x+y) \\  ~\text{and}~ \\
\smallabs{\fxhat(\phi(x))}, \smallabs{\fyhat(\phi(y))}, 
\smallabs{\fxyhat(\phi(x+y))} \geq \gamma
\end{array}
\right.}.\]

Lemma \ref{lem:sample-phi} implies that over the choice of $\phi$, with probability 
at least $\rho=\e^{16}/4$, the graph defined with $\gamma = \e^{16}/18$, has density at 
least $\rho$. However, if a $\phi$ is good for a certain 
value of $\gamma$, then it is also good for all values $\gamma' \leq \gamma$ (as
the density of the graph can only increase). For the remaining argument, we will assume 
that we have sampled $\phi$ completely and that it is good. We will later choose
$\gamma \in [\e^{16}/180, \e^{16}/18]$.

Since we will be examining the properties of certain neighborhoods in this graph, we first write a procedure 
to test if two vertices in the graph have an edge between them.

\fbox{
\begin{minipage}{0.9\textwidth}

\smallskip

\edgetest(u,v,$\gamma$)
\begin{itemize}
\item[-] Let $u = (x,\phi(x))$ and $v=(y,\phi(y))$. 
\item[-] Estimate $\smallabs{\fxhat(\phi(x))}, \smallabs{\fyhat(\phi(y))}$ and 
$\smallabs{\fxyhat(\phi(x+y))}$ using $t$ samples for each.
\item[-] Answer 1 if $\phi(x)+\phi(y) = \phi(x+y)$ and 
all estimates are at least $\gamma$, and 0 otherwise. 
\end{itemize}

\end{minipage}
}
\smallskip

Unfortunately, since we are only estimating the Fourier coefficients, we will only be able to 
test if two vertices have an edge between them with a slight error
in the threshold $\gamma$, and with high probability. Thus, if the estimate is at least $\gamma$,
we can only say that with high probability, the Fourier coefficient must be at least 
$\gamma - \gamma'$ for a small error $\gamma'$. This leads to the following guarantee on
\edgetest.
%

\begin{claim}\label{clm:edgetest}
Given $\gamma', \delta > 0$, the output of \edgetest($u,v,\gamma$) with 
$t = O(1/\gamma'^2 \cdot \log(1/\delta))$ queries, satisfies the following
guarantee with probability at least $1-\delta$.
\begin{itemize}
\item $\edgetest(u,v,\gamma) = 1  ~\Longrightarrow (u,v)~ \in E_{\gamma - \gamma'}$.
\item $\edgetest(u,v,\gamma) = 0  ~\Longrightarrow (u,v)~ \notin E_{\gamma + \gamma'}$.
\end{itemize}
\end{claim}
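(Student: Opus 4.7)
The plan is to show that with $t = O(\gamma'^{-2} \log(1/\delta))$ samples, each of the three Fourier coefficient estimates computed by \edgetest is additively within $\gamma'$ of its true value, and then perform a simple case analysis on the output.

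First, I would observe that each of the relevant quantities is naturally the expectation of a $\pm 1$-valued random variable. For example,
\[
\fxhat(\phi(x)) ~=~ \Ex{z}{f(z)\,f(x+z)\,(-1)^{\langle \phi(x), z\rangle}},
\]
and analogous formulas hold for $\fyhat(\phi(y))$ and $\fxyhat(\phi(x+y))$. Each sample can be computed from $O(1)$ queries to $f$ (using the already-sampled values $\phi(x), \phi(y), \phi(x+y)$, which are stored). Applying Lemma \ref{lem:hoeffding-sample} with $t = O(\gamma'^{-2} \log(1/\delta))$, each empirical average is within additive error $\gamma'$ of the true expectation with probability at least $1 - \delta/3$. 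By the reverse triangle inequality, the same bound holds for the absolute values, and a union bound gives that all three estimates are $\gamma'$-accurate with probability at least $1-\delta$.

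Conditioning on this good event, I would then consider the two cases. If \edgetest outputs $1$, then by construction $\phi(x)+\phi(y)=\phi(x+y)$ and each of the three empirical estimates is at least $\gamma$; by $\gamma'$-accuracy, each true Fourier coefficient magnitude is at least $\gamma - \gamma'$, so $(u,v) \in E_{\gamma-\gamma'}$. Conversely, if \edgetest outputs $0$, then either the additive relation $\phi(x)+\phi(y)=\phi(x+y)$ fails (in which case the edge does not lie in $E_{\gamma''}$ for any threshold $\gamma''$) or at least one empirical estimate is strictly less than $\gamma$; in the latter case, $\gamma'$-accuracy forces the corresponding true Fourier coefficient magnitude to be strictly less than $\gamma + \gamma'$, so $(u,v) \notin E_{\gamma+\gamma'}$.

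There is no real obstacle here: the argument is a clean Hoeffding-plus-union-bound calculation, and the only subtlety is to be precise about the fact that the one-sided threshold slack $\gamma'$ propagates correctly through the absolute values and through both directions of the test. The estimates of $\phi(x)$, $\phi(y)$, and $\phi(x+y)$ themselves are treated as fixed (they were sampled earlier and cached), so the only randomness is in the Hoeffding sampling, which keeps the analysis self-contained.
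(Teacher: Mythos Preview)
Your proposal is correct and is precisely the argument the paper has in mind: the paper's own proof consists of the single sentence ``The claim follows immediately from Lemma \ref{lem:hoeffding-sample} and the definitions of $E_{\gamma - \gamma'}$, $E_{\gamma + \gamma'}$,'' and you have simply unpacked that sentence carefully (Hoeffding on each of the three $\pm 1$-valued estimators, union bound, then the two-case threshold comparison).
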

\begin{proof}
The claim follows immediately from Lemma \ref{lem:hoeffding-sample} and the definitions of
$E_{\gamma - \gamma'}$, $E_{\gamma + \gamma'}$.
\end{proof}

The approximate nature of the above test introduces a subtle issue. Note that the outputs
1 and 0 of the test correspond to the presence or absence of edges in \emph{different graphs} with
edge sets $E_{\gamma-\gamma'}$ and $E_{\gamma+\gamma'}$. 
The edge sets of the two graphs are related as
$E_{\gamma+\gamma'} \subseteq E_{\gamma-\gamma'}$. 
But the proof of Theorem \ref{thm:bsg2} uses somewhat more complicated
subsets of vertices, which are defined using both upper and lower bounds on the
sizes of certain neighborhoods. Since the upper and lower bounds estimated using
the above test will hold for slightly different graphs, we need to be
careful in analyzing any algorithm that uses \edgetest as a primitive.

We now return to the argument as presented in \cite{SudakovSV05}. It considers the neighborhood 
of a random vertex $u$ and removes
vertices that have too few neighbors in common with other vertices in the graph. 
Let the size of the vertex set be $N = 2^n$. 
For a vertex $u$, we define the following sets:
\begin{align*}
N(u) &\defeq~ \inbraces{v \suchthat (u,v) \in E_{\gamma}}\\
S(u) &\defeq~ \inbraces{v \in N(u) \suchthat \Prob{v_1}{v_1 \in N(u) ~\text{and}~\abs{N(v) \cap N(v_1)}
    \leq \rho^3 N} \geq \rho^2}\\
     &=~~ \inbraces{v \in N(u) \suchthat \Prob{v_1}{v_1 \in N(u) ~\text{and}~ 
     \Prob{v_2}{v_2 \in N(v) \cap N(v_1)} \leq \rho^3} > \rho^2}\\
T(u) &\defeq~ N(u) \setminus S(u) \\
     &=~~ \inbraces{v \in N(u) \suchthat \Prob{v_1}{v_1 \in N(u) ~\text{and}~ 
     \Prob{v_2}{v_2 \in N(v) \cap N(v_1)} \leq \rho^3} \leq \rho^2}
\end{align*}

It is shown in \cite{SudakovSV05} (see also \cite{Viola07}) that if the graph has density $\rho$,
then picking $A_{\phi}' = T(u)$ for a
random vertex $u$ is a good choice\footnote{Note that here we are choosing $A_{\phi}'$ to be the
neighborhood of \emph{any} vertex in the graph, instead of vertices in $A_{\phi}$. However, 
this is not a problem since the only vertices with non-empty neighborhoods are the ones in
$A_{\phi}$.}.


\begin{lemma}\label{lem:bsg-ssv}
Let the graph with edge set $E_{\gamma}$ have density at least $\rho$ and let $A_{\phi}' = T(u)$
for a random vertex $u$. Then, with probability at least $\rho/2$ over the choice of $u$, 
the set $A_\phi'$ satisfies
\[ \abs{A_{\phi}'} \geq \rho N \quad \text{and} 
\quad \abs{A_{\phi}' + A_{\phi}'} \leq (2/\rho)^8 N.\]
\end{lemma}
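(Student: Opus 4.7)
The plan is to prove the lemma via the standard Sudakov-Szemer\'edi-Vu graph-theoretic presentation of the Balog-Szemer\'edi-Gowers theorem \cite{SudakovSV05}, as exposited by Viola \cite{Viola07}, specialized to the graph on $V = \{(x, \phi(x)) : x \in \F_2^n\}$ with edge set $E_\gamma$ of density at least $\rho$. There are two independent claims to establish: a lower bound on $|T(u)|$ and an upper bound on the sumset $|T(u) + T(u)|$ in $\F_2^n \times \F_2^n$, both holding with probability $\geq \rho/2$ over a uniformly random vertex $u$.

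\textbf{Size of $T(u)$.} First I would establish $\mathbb{E}_u |N(u)| \geq \rho N$ from the density assumption, hence $|N(u)| \geq \rho N / 2$ with probability $\Omega(\rho)$ by reverse Markov. For $|S(u)|$, call a pair $(v, v_1)$ \emph{bad} if $|N(v) \cap N(v_1)| \leq \rho^3 N$, and double-count:
\[
\sum_u \#\{\text{bad } (v, v_1) \in N(u) \times N(u)\} \;=\; \sum_{(v,v_1) \text{ bad}} |N(v) \cap N(v_1)| \;\leq\; \rho^3 N^3,
\]
so the expected number of bad pairs inside $N(u) \times N(u)$ is at most $\rho^3 N^2$. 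Since each $v \in S(u)$ contributes $\geq \rho^2 N$ bad pairs with other endpoint in $N(u)$, this yields $\mathbb{E}_u |S(u)| \cdot \rho^2 N \leq \rho^3 N^2$, that is $\mathbb{E}_u |S(u)| \leq \rho N$. A joint tail bound on $(|N(u)|, |S(u)|)$ — obtained either by considering only $u$ of unusually large degree or by slightly retuning the threshold constants $\rho^2, \rho^3$ in the definition of $S(u)$ — then implies that with probability $\geq \rho/2$ one simultaneously has $|N(u)|$ large and $|S(u)| \leq |N(u)|/2$, so $|T(u)| \geq \rho N$.

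\textbf{Sumset bound.} For $v, v' \in T(u)$, the defining condition supplies at least $|N(u)| - 2\rho^2 N \geq \rho N / 4$ vertices $v_1 \in N(u)$ with $|N(v) \cap N(v_1)|, |N(v') \cap N(v_1)| \geq \rho^3 N$; for each such $v_1$ pick any $w \in N(v) \cap N(v_1)$ and $w' \in N(v') \cap N(v_1)$. Summing the four edge relations along $v \to w \to v_1 \to w' \to v'$ in $\F_2^n \times \F_2^n$, the middle endpoints cancel in characteristic two and one obtains
\[
v + v' \;=\; (v + w) + (w + v_1) + (v_1 + w') + (w' + v'),
\]
where each of the four summands lies in $A_\phi$ (the Fourier-coefficient threshold is built into the definition of $E_\gamma$). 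Hence every $s \in T(u) + T(u)$ has at least $(\rho N/4) \cdot (\rho^3 N)^2 \geq \rho^7 N^3 / 4$ ordered representations as a sum of four elements of $A_\phi$, all distinct since given $(v, v')$ and the 4-tuple one recovers $(w, v_1, w')$ uniquely. Since there are at most $|A_\phi|^4 \leq N^4$ ordered 4-tuples in $A_\phi^4$, a Pl\"unnecke-style double count gives
\[
|T(u) + T(u)| \;\leq\; \frac{4N}{\rho^7} \;\leq\; (2/\rho)^8 N
\]
for $\rho$ bounded by a small constant. The main obstacle is Step 1: the naive moment calculation gives $\mathbb{E}|S(u)| \leq \rho N$, which is the same order as $\mathbb{E}|N(u)|$, so securing $|T(u)| \geq \rho N$ with probability $\geq \rho/2$ requires the careful joint-tail argument alluded to above. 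Step 2, by contrast, is essentially a one-line path-counting computation once the four edge relations are assembled.
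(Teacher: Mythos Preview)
Your proposal follows precisely the Sudakov--Szemer\'edi--Vu argument that the paper cites (without giving a proof) for this lemma; the path-counting sumset bound in your second step is correct and is the same computation referenced via \cite{Viola07}.

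Your caution about the size bound is justified and worth flagging explicitly. With the thresholds $\rho^3$ and $\rho^2$ exactly as in the paper's definition of $S(u)$, the bad-pair double count gives only $\E_u|S(u)| \leq \rho N$, which merely matches $\E_u|N(u)| \geq \rho N$ and yields the vacuous $\E_u|T(u)| \geq 0$; no joint-tail argument recovers $|T(u)| \geq \rho N$ with probability $\rho/2$ from this alone. The paper does not confront this because it treats Lemma~\ref{lem:bsg-ssv} as a black-box citation, but when it reproduces the argument for its own purposes in the proof of Lemma~\ref{lem:bsgtest} (see Claim~\ref{clm:badpairs}), it perturbs the thresholds to $11\rho^3/10$ and $9\rho^2/10$ and obtains the weaker conclusion $|A_\phi^{(1)}| \geq (\rho/6)N$ with probability $\rho/6$. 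So the constants in the lemma statement should be read as ``up to absolute constant factors,'' and your proposed fix of retuning the thresholds is exactly what is needed---and is what the paper itself does downstream.
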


We now translate the condition for membership in the set $T(u)$ into an algorithm. Note that
we perform different edge tests with different thresholds, the values of which will be chosen later.


\fbox{
\begin{minipage}{0.9\textwidth}

\smallskip

\bsgtest ($u, v, \gamma_1, \gamma_2, \gamma_3, \rho_1, \rho_2$) \hspace{1.1 in} (Approximate test to check if $v \in T(u)$)
\begin{itemize}
\item[-] Let $u=(x,\phi(x))$ and $v=(y,\phi(y))$.
\item[-] Sample $(z_1, \phi(z_1)), \ldots, (z_r, \phi(z_r))$.
\item[-] For each $i \in [r]$, sample $(w_1^{(i)}, \phi(w_1^{(i)})), \ldots, (w_s^{(i)}, \phi(w_s^{(i)}))$.
\item[-] If \edgetest(u,v,$\gamma_1$) = 0, then output 0.
\item[-] For $i \in [r], j \in [s]$,  let 
\begin{align*}
X_i   &~=~ \edgetest\inparen{(x,\phi(x)), (z_i,\phi(z_i)),\gamma_2}\\
Y_{ij} &~=~ \edgetest\inparen{(y,\phi(y)), \inparen{w_j^{(i)}, \phi\inparen{w_j^{(i)}}},\gamma_3}\\
Z_{ij} &~=~ \edgetest\inparen{(z_i,\phi(z_i)), \inparen{w_j^{(i)}, \phi\inparen{w_j^{(i)}}},\gamma_3}
\end{align*}
\item[-] For each $i$, take $B_i = 1$ if $\frac{1}{s} \sum_j Y_{ij} \cdot Z_{ij} \leq \rho_1$ and 0
  otherwise.
\item[-] Answer 1 if $\frac{1}{r} \sum_i X_i \cdot B_i \leq \rho_2$ and 0 otherwise.
\end{itemize}

\end{minipage}
}
\smallskip

\paragraph{Choice of parameters for \bsgtest:}
We shall choose the parameters for the above test as follows. Recall that $\rho = \e^{16}/4$.
We take $\rho_1 = 21\rho^3/20$ and $\rho_2 = 19\rho^2/20$. Given an error parameter
$\delta$, we take $r$ and $s$ to be $\poly(1/\rho, \log(1/\delta))$, so that with 
probability at least $1-\delta$, the error in the last two estimates is at most $\rho^3/100$.
Also, by using $\poly(1/\rho, \log(1/\delta))$ samples in each call to \edgetest, we can assume
that the error in all estimates used by \edgetest is at most $\rho^3/100$.

To choose $\gamma_1,\gamma_2,\gamma_3$, we divide the interval $[\e^{16}/180,\e^{16}/18]$ into
$4/\rho^2$ consecutive sub-intervals of size $\rho^3/20$ each. We then randomly choose a
sub-interval and choose positive 
parameters $\gamma, \mu$ so that $\gamma-\mu$ and $\gamma+\mu$ are endpoints of
this interval. We set $\gamma_1 = \gamma_3 = \gamma + \mu/2$ and $\gamma_2 = \gamma - \mu/2$.

\medskip
To analyze \bsgtest, we ``sandwich''  the elements on which it answers 1 between a
large set and a set with small doubling.
%
%
\begin{lemma}\label{lem:bsgtest}
Let $\delta > 0$ and parameters $\rho_1, \rho_2, r, s$ 
be chosen as above. Then for every $u = (x,\phi(x))$ and every choice of $\gamma_1, \gamma_2, \gamma_3$ as above, there exist two sets $A_{\phi}^{(1)}(u) \subseteq A_{\phi}^{(2)}(u)$, 
such that the output of \bsgtest satisfies the following
with probability at least $1-\delta$.
\begin{itemize}
\item $\bsgtest(u,v,\gamma_1, \gamma_2, \gamma_3, \rho_1, \rho_2) = 1 
\quad\Longrightarrow\quad v \in A_{\phi}^{(2)}(u)$.
\item $\bsgtest(u,v,\gamma_1, \gamma_2, \gamma_3, \rho_1, \rho_2) = 0 
\quad\Longrightarrow\quad v \notin A_{\phi}^{(1)}(u)$.
\end{itemize}
Moreover, with probability $\rho^3/24$ over the choice of $u$ and $\gamma_1,\gamma_2,\gamma_3$, 
we have
\[ |A_{\phi}^{(1)}(u)| \geq (\rho/6) \cdot N \quad \text{and} 
\quad |A_{\phi}^{(2)}(u) + A_{\phi}^{(2)}(u)| \leq (2/\rho)^8 \cdot N.\]
\end{lemma}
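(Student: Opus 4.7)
The plan is to identify $A_\phi^{(1)}(u)\subseteq A_\phi^{(2)}(u)$ as the ``strictest'' and ``loosest'' natural approximations to $T(u)$ consistent with the statistical and edge-testing noise in \bsgtest, and then transfer the size and doubling conclusions of Lemma \ref{lem:bsg-ssv} using the randomization over the sub-interval. First I condition on the good event that, under Claim \ref{clm:edgetest} with slack $\gamma'=\rho^3/100$ and Lemma \ref{lem:hoeffding-sample} with precision $\eta'=\rho^3/100$, all $O(rs)$ \edgetest calls and the two empirical averages $\tfrac 1 s\sum_j Y_{ij}Z_{ij}$, $\tfrac 1 r\sum_i X_i B_i$ are accurate to within their respective tolerances; with $r,s$ and the per-query sample count all $\poly(1/\rho,\log(1/\delta))$, this event holds with probability at least $1-\delta$. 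Under it each of $X_i, Y_{ij}, Z_{ij}$ is pointwise sandwiched between the indicator of the corresponding edge in the strict graph (with threshold $\gamma_k+\gamma'$) and in the loose graph (with threshold $\gamma_k-\gamma'$).

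Writing $p^{\text{strict}}(v,z)=\Pr_w[w\in N_{\gamma_3+\gamma'}(v)\cap N_{\gamma_3+\gamma'}(z)]$ and $p^{\text{loose}}(v,z)$ analogously with $\gamma_3-\gamma'$, Hoeffding applied inside $B_i$ gives $B_i\geq\mathbf 1[p^{\text{loose}}(v,z_i)\leq\rho_1-\eta']$ and $B_i\leq\mathbf 1[p^{\text{strict}}(v,z_i)\leq\rho_1+\eta']$. These pointwise bounds suggest the definitions
\begin{align*}
A_\phi^{(2)}(u)&=\left\{v: (u,v)\in E_{\gamma_1-\gamma'}\text{ and }\Prob{z}{z\in N_{\gamma_2+\gamma'}(u)\wedge p^{\text{loose}}(v,z)\leq\rho_1-\eta'}\leq\rho_2+\eta'\right\},\\
A_\phi^{(1)}(u)&=\left\{v: (u,v)\in E_{\gamma_1+\gamma'}\text{ and }\Prob{z}{z\in N_{\gamma_2-\gamma'}(u)\wedge p^{\text{strict}}(v,z)\leq\rho_1+\eta'}\leq\rho_2-\eta'\right\}.
\end{align*}
The sandwich property then follows directly: if $\bsgtest=1$ then $\tfrac 1 r\sum X_i B_i\leq\rho_2$, and combining the lower pointwise bound on $X_i B_i$ with Hoeffding on the outer average places $v$ in $A_\phi^{(2)}$; the case $\bsgtest=0$ is symmetric, using the upper pointwise bound. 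The inclusion $A_\phi^{(1)}\subseteq A_\phi^{(2)}$ holds because the event defining $A_\phi^{(2)}$'s probability is pointwise contained in the event defining $A_\phi^{(1)}$'s (since $N_{\gamma_2+\gamma'}\subseteq N_{\gamma_2-\gamma'}$, and $p^{\text{loose}}\leq\rho_1-\eta'$ combined with $p^{\text{strict}}\leq p^{\text{loose}}$ yields $p^{\text{strict}}\leq\rho_1+\eta'$), together with $\rho_2-\eta'\leq\rho_2+\eta'$.

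The hard part is the size and doubling claim, which I expect to require the random choice of sub-interval. My plan is to apply Lemma \ref{lem:bsg-ssv} to the canonical graph $G_\gamma$ with edges $E_\gamma$: since $\phi$ is good and $\gamma\leq\e^{16}/18$, this graph has density at least $\rho$, yielding $|T_\gamma(u)|\geq\rho N$ and $|T_\gamma(u)+T_\gamma(u)|\leq(2/\rho)^8 N$ for a $\rho/2$ fraction of $u$. A careful comparison of definitions shows $A_\phi^{(1)}(u)\subseteq T_\gamma(u)$, because the test thresholds $\rho_1=21\rho^3/20$, $\rho_2=19\rho^2/20$ are buffered by $\eta'$ so that the strict $A_\phi^{(1)}$-event implies the SSV-event with parameters $\rho^3,\rho^2$. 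The remaining work is to show that the ``boundary'' sets $T_\gamma(u)\setminus A_\phi^{(1)}(u)$ and $A_\phi^{(2)}(u)\setminus T_\gamma(u)$ are small. Both consist of vertices $v$ for which one of the (at most three) Fourier coefficients relevant to $T$-membership falls in a band of width $O(\mu+\gamma')=O(\rho^3)$ around $\gamma$. Since the ambient interval $[\e^{16}/180,\e^{16}/18]$ is partitioned into $4/\rho^2$ such sub-intervals and any fixed Fourier coefficient can lie in at most one, a double-counting argument shows the expected boundary size at a given $u$ is $O(\rho^2 N)$. By Markov, for a $\rho^2/12$ fraction of sub-intervals the boundary at $u$ is at most $(\rho/3)N$ vertices, yielding $|A_\phi^{(1)}(u)|\geq(\rho/6)N$ directly, and permitting control of $|A_\phi^{(2)}+A_\phi^{(2)}|$ via $|T_\gamma+T_\gamma|$ together with the small symmetric difference through a Ruzsa--Plünnecke-style bound. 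Multiplying the $\rho/2$ fraction of good $u$ by the $\rho^2/12$ fraction of good sub-intervals gives the claimed joint probability $\rho^3/24$.
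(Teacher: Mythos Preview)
Your sandwich construction and the containment $A_\phi^{(1)}\subseteq A_\phi^{(2)}$ via monotonicity match the paper. For the doubling bound you propose a ``Ruzsa--Pl\"unnecke-style'' control of $|A_\phi^{(2)}+A_\phi^{(2)}|$ from $|T_\gamma+T_\gamma|$ together with a small symmetric difference; no such inequality exists (sumsets are unstable under adjoining even a few arbitrary points), but in fact you do not need one. With the asymmetric choice $\gamma_1=\gamma_3=\gamma+\mu/2$, $\gamma_2=\gamma-\mu/2$ and $\rho_1,\rho_2$ set off-centre, a direct parameter check shows that your $A_\phi^{(2)}$ is already \emph{contained} in $T_\gamma=T(u,\gamma,\gamma,\gamma,\rho^3,\rho^2)$, so $A_\phi^{(2)}+A_\phi^{(2)}\subseteq T_\gamma+T_\gamma$ and the doubling is inherited trivially. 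The paper makes this transparent by simply \emph{defining} $A_\phi^{(2)}(u)=T(u,\gamma,\gamma,\gamma,\rho^3,\rho^2)$.

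The genuine gap is your lower bound on $|A_\phi^{(1)}|$. The claim that $T_\gamma\setminus A_\phi^{(1)}$ consists of vertices with some Fourier coefficient in a band of width $O(\rho^3)$ around $\gamma$ is not correct: membership in a $T$-set is decided by whether a nested probability over auxiliary vertices $v_1,v_2$ crosses a threshold, and a vertex $v$ can lie in the boundary simply because this probability sits near its threshold, or because perturbing the inner event shifts it across---neither of which forces any Fourier coefficient attached to $v$ itself into a short interval. A fixed $v$ may therefore belong to $T_\gamma\setminus A_\phi^{(1)}$ for \emph{every} sub-interval, and your double-counting collapses. The paper does not compare $A_\phi^{(1)}$ to $T_\gamma$ here at all: it writes $A_\phi^{(1)}=N_{\gamma+\mu}(u)\setminus S'(u)$, uses $\E_u|N_{\gamma+\mu}(u)|\ge\rho N$ from the density hypothesis, and bounds $\E_u|S'(u)|$ by a bad-pair count (Claim~\ref{clm:badpairs}) which splits into pairs $(v,v_1)$ with $v_1\in N_{\gamma+\mu}(u)$, handled by the standard SSV estimate, and pairs with $v_1\in N_{\gamma-\mu}(u)\setminus N_{\gamma+\mu}(u)$, bounded by $N\cdot(\E_u|N_{\gamma-\mu}(u)|-\E_u|N_{\gamma+\mu}(u)|)$. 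Only this last difference invokes the sub-interval randomisation, via pigeonhole on the monotone map $\gamma\mapsto\E_u|N_\gamma(u)|$.
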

\begin{proof}
To deal with the approximate nature of \edgetest, we define the following sets:
\begin{align*}
N_{\gamma}(u) &\defeq \inbraces{v \suchthat (u,v) \in E_{\gamma}}\\
T(u,\gamma_1,\gamma_2,\gamma_3,\rho_1,\rho_2) &\defeq
\inbraces{v \in N_{\gamma_1}(u) \suchthat \Prob{v_1}{v_1 \in N_{\gamma_2}(u) ~\&~ 
     \Prob{v_2}{v_2 \in N_{\gamma_3}(v) \cap N_{\gamma_3}(v_1)} \leq \rho_1} \leq \rho_2}
\end{align*}
Going through the definitions and recalling that $E_{\gamma} \subseteq E_{\gamma - \gamma'}$ 
for $\gamma' > 0$, it can be checked that the sets 
$T(u, \gamma_1,\gamma_2,\gamma_3,\rho_1,\rho_2)$ are monotone in the various parameters. 
In particular, for $\gamma_1', \gamma_2', \gamma_3', \rho_1', \rho_2' > 0$
\[ T(u, \gamma_1,\gamma_2,\gamma_3,\rho_1,\rho_2) ~\subseteq~ 
T(u, \gamma_1 - \gamma_1',\gamma_2 + \gamma_2',\gamma_3-\gamma_3',\rho_1-\rho_1',\rho_2+\rho_2').\]

Recall that we have $\gamma_1 = \gamma_3 = \gamma + \mu/2$ and $\gamma_2 = \gamma - \mu/2$,
where $[\gamma-\mu,\gamma+\mu]$ is a sub-interval of $[\e^{16}/180,\e^{16}/18]$ of length $\rho^3/20$.


We define the sets $A_{\phi}^{(1)}(u)$ and $A_{\phi}^{(2)}(u)$ as below.
\begin{align*}
A_{\phi}^{(1)}(u) &~\defeq~ T(u,\gamma+\mu, \gamma-\mu, \gamma+\mu, 11\rho^3/10, 9\rho^2/10) \\
A_{\phi}^{(2)}(u) &~\defeq~ T(u,\gamma, \gamma, \gamma, \rho^3, \rho^2)
\end{align*}
By the monotonicity property noted above, we have that $A_{\phi}^{(1)}(u) \subseteq
A_{\phi}^{(2)}(u)$. Also, by the choice of parameters $r$, $s$ and the number of samples in 
\edgetest, we know that with probability $1-\delta$, the error in all estimates used in
\bsgtest is at most $\rho^3/100$. Hence, we get that with probability at least $1-\delta$,
if \bsgtest answers 1, then the input is in $A_{\phi}^{(2)}$ and if \bsgtest answers 0, then 
it is not in $A_{\phi}^{(1)}$. It remains to prove the bounds on the size and doubling of
these sets.

By our choice of parameters, $A_{\phi}^{(2)}(u)$ is the same set as the one defined in Sudakov \etal 
\cite{SudakovSV05}.  They show that if $u$ is such that $\smallabs{A_{\phi}^{(2)}(u)} \geq
3 \cdot (\rho/2)^2 N$, then 
$\smallabs{A_{\phi}^{(2)}(u) + A_{\phi}^{(2)}(u)} \leq (2/\rho)^8 \cdot N$ (see Lemma 3.2 in
\cite{Viola07}  for a simplified proof of the version mentioned here). To show the lower bound on
the size of $A_{\phi}^{(2)}(u)$, we will show that in fact with probability at least $\rho^3/24$
over the choice of $u$ and $\gamma_1,\gamma_2,\gamma_3$, we will have 
$\smallabs{A_{\phi}^{(1)}(u)} \geq (\rho/6) \cdot N$. 
Since $A_{\phi}^{(1)}(u) \subseteq A_{\phi}^{(2)}(u)$, this suffices for the proof.

We consider a slight modification of the argument of \cite{SudakovSV05}, showing 
an upper bound on the expected size of the set $S'(u)$ defined as
\begin{align*} 
S'(u) &\defeq
N_{\gamma + \mu}(u) \setminus T(u, \gamma+\mu, \gamma - \mu, \gamma+\mu, 11\rho^3/10, 9\rho^2/10)\\
&= \inbraces{v \in N_{\gamma + \mu}(u) \suchthat \Prob{v_1}{v_1 \in N_{\gamma - \mu}(u) ~\&~ 
     \Prob{v_2}{v_2 \in N_{\gamma + \mu}(v) \cap N_{\gamma + \mu}(v_1)} \leq 11\rho^3/10} \geq
   9\rho^2/10} .
\end{align*}
We know from Lemma \ref{lem:sample-phi}  that since $\gamma+\mu \leq \e^{16}/18$, the quantity
$\Ex{u}{|N_{\gamma+\mu}(u)|}$, which is the average degree of the graph, is at least
$\rho N$ (assuming that we are working with a good function $\phi$). Combining this
with an upper bound on $\Ex{u}{|S'(u)|}$ will give the required lower bound on the size of
$A_{\phi}^{(1)}(u)  =  
T(u, \gamma+\mu, \gamma - \mu, \gamma+\mu, 11\rho^3/10, 9\rho^2/10)$.

We call a pair $(v,v_1)$ \emph{bad} if 
$\abs{N_{\gamma+\mu}(v) \cap N_{\gamma+\mu}(v)} \leq 11\rho^3N/10$.
We need the following bound.

\begin{claim}\label{clm:badpairs}
There exists a choice for the sub-interval 
$[\gamma-\mu,\gamma+\mu]$  of length $\rho^3/20$ in $[\e^{16}/180, \e^{16}/18]$
such that
\[\Ex{u}{\#\inbraces{\text{bad pairs}~ (v,v_1) 
~:~ v \in N_{\gamma + \mu}(u) ~\&~ v_1 \in N_{\gamma - \mu}(u)}} ~\leq~ 3\rho^3N^2/5\]
\end{claim}

We first prove Lemma \ref{lem:bsgtest} assuming the claim. From the definition of $S'(u)$,
\[\#\{\text{bad pairs}~ (v,v_1) 
~:~ v \in N_{\gamma + \mu}(u) ~\&~ v_1 \in N_{\gamma - \mu}(u)\} ~\geq~ 
|S'(u)| \cdot (9\rho^2N/10).\]
Claim \ref{clm:badpairs} gives $\Ex{u}{|S'(u)|} \leq (3\rho^3N^2/5)/(9\rho^2N/10)  =  
(2\rho/3)N$, for at least one choice of the interval $[\gamma-\mu,\gamma+\mu]$.
Since there are $4/\rho^2$
choices for the sub-interval, this happens with probability at least $\rho^2/4$. 

For this choice of $\gamma$ and $\mu$ (and hence of $\gamma_1,\gamma_2,\gamma_3$),
we also have $\Ex{u}{\smallabs{N_{\gamma+\mu}(u)}} \geq \rho N$. Since 
$S'(u) =N_{\gamma + \mu}(u) \setminus A_\phi^{(1)}$, we get that 
$\Ex{u}{\smallabs{A_\phi^{(1)}}} \geq \rho N - (2\rho/3) N = (\rho/3) N$.
Hence, with probability at least $\rho/6$ over the choice of $u$, 
$\smallabs{A_\phi^{(1)}} \geq (\rho/6) N$. Thus, we obtain the desired outcome with
probability at least $\rho^3/24$ over the choice of $u$ and $\gamma_1,\gamma_2,\gamma_3$.
\end{proof}
\begin{proofof}{of Claim \ref{clm:badpairs}}
We begin by observing that the expected number of bad pairs $(v,v_1)$ such that $v \in N_{\gamma +
  \mu}(u) ~\&~ v_1 \in N_{\gamma - \mu}(u)$ is equal to
\begin{eqnarray*}
&\Ex{u}{\#\inbraces{\text{bad pairs}~ (v,v_1) 
~:~ v \in N_{\gamma + \mu}(u) ~\&~ v_1 \in N_{\gamma + \mu}(u)}}\\ &+ 
\Ex{u}{\#\inbraces{\text{bad pairs}~ (v,v_1) 
~:~ v \in N_{\gamma + \mu}(u) ~\&~ v_1 \in N_{\gamma - \mu} (u)\setminus N_{\gamma + \mu}(u)}}.
\end{eqnarray*}
Note that for each of the $\binom{N}{2}$
choices for ${v,v_1}$, if they form a bad pair, then each $u$ is in 
$N_{\gamma + \mu}(v) \cap N_{\gamma + \mu}(v_1)$ with probability
at most $11\rho^3/10$. Hence, the first term is at most $(11\rho^3/20) N^2$.
Also, the second term is at most
\[ N \cdot \Ex{u}{\abs{N_{\gamma-\mu}(u) \setminus N_{\gamma + \mu}(u)}}
~=~ N \cdot \inparen{\Ex{u}{\abs{N_{\gamma-\mu}(u)}} - \Ex{u}{\abs{N_{\gamma+\mu}(u)}} } \]
We know that $\Ex{u}{\abs{N_{\gamma}(u)}}$ is monotonically decreasing in $\gamma$.
Since it is at most $N$ for $\gamma = \e^{16}/180$, there is at least one interval of size 
$\rho^3/20$ in $[\e^{16}/180, \e^{16}/18]$, where the change is at most $\rho^3 N/20$.
Taking $\gamma + \mu$ and $\gamma - \mu$ to be the endpoints of this interval finishes 
the proof.
\end{proofof}

\subsection{Obtaining a linear choice function} \label{sec:lin-choice}
Using the subset given by the Balog-\Szemeredi-Gowers theorem, one can use the somewhat linear
choice function
$\phi$ to find an \emph{linear transformation}  
$x \mapsto Tx$ which also selects large Fourier coefficients in derivatives. 
In particular, it satisfies $\Ex{x}{\fxhat^2(Tx)} \geq \eta$ for some $\eta = \eta(\e)$.
This map $T$ can then be used to find an appropriate quadratic phase. 

In this subsection, we  give an algorithm for finding such a transformation, using the procedure
\bsgtest developed above. In the lemma below, we assume as before that $\phi$ is a good function satisfying
the guarantee in Lemma \ref{lem:sample-phi}. We also assume that we have chosen a good vertex $u$ 
and parameters $\gamma_1,\gamma_2,\gamma_3$ satisfying the
guarantee in Lemma \ref{lem:bsgtest}.


\begin{lemma}\label{lem:lin-choice}
Let $\phi$ be as above and $\delta > 0$. Then there exists an $\eta = \exp(-1/\e^{C})$ and an 
algorithm which makes $O(n^2 \log n \cdot \poly(1/\eta, \log(1/\delta)))$ calls to \bsgtest 
and uses additional running time $O(n^3)$ to
output a linear map $T$ or the symbol $\bot$. 
If \bsgtest is defined using a good $u$ and parameters $\gamma_1,\gamma_2,\gamma_3$ 
as above, then with probability at
least $1-\delta$ the algorithm outputs a map $T$ satisfying $\Ex{x}{\fxhat^2(Tx)} \geq \eta$.
\end{lemma}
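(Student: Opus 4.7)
The plan is to combine the small-doubling property from Lemma \ref{lem:bsgtest} with the Freiman-Ruzsa theorem (Theorem \ref{thm:freiman}) to parametrize $A_\phi^{(2)}(u)$ by a linear map, and then extract that map algorithmically via Gaussian elimination on a batch of random samples. Since Lemma \ref{lem:bsgtest} gives $|A_\phi^{(2)}(u)| \leq (2/\rho)^8 N$ with doubling at most $(2/\rho)^8$, Theorem \ref{thm:freiman} places $A_\phi^{(2)}(u)$ inside a subspace $H \subseteq \F_2^n \times \F_2^n$ of size at most $2^{n + O(1/\e^C)}$. Writing $H_0 = \{y : (0,y) \in H\}$ and $H_1$ for the first-coordinate projection of $H$, the subspace $H$ is the graph of a linear map $L : H_1 \to \F_2^n/H_0$. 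The bound $|A_\phi^{(1)}(u)| \geq (\rho/6) N$ forces $|H_1| \geq (\rho/6) N$, hence $|H_0| \leq 2^{O(1/\e^C)}$. For every $(x,\phi(x)) \in A_\phi^{(1)}(u)$ one has $\phi(x) \in L(x) + H_0$; extending $L$ arbitrarily to all of $\F_2^n$ and adding a fixed shift $h \in H_0$ chosen by averaging produces a linear map $T$ with $Tx = \phi(x)$ on a $1/|H_0|$ fraction of $A_\phi^{(1)}(u)$. For such $x$ we have $\fxhat^2(Tx) \geq \gamma^2$, giving
\[
\Ex{x}{\fxhat^2(Tx)} ~\geq~ \gamma^2 \cdot \frac{\rho/6}{|H_0|} ~=~ \exp(-1/\e^C) ~=:~ \eta.
\]

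To realize this algorithmically, sample $O(n)$ random inputs $x_i$, query $\phi(x_i)$ via Lemma \ref{lem:sample-phi}, and call \bsgtest on each to filter to those approximately in $A_\phi'(u)$. Since $A_\phi^{(1)}(u)$ has density at least $\rho/6$, a $\Omega(\rho)$-fraction of samples survive and, by Gaussian elimination on their first coordinates, one obtains a basis $x_{i_1}, \dots, x_{i_d}$ of a subspace of codimension $O(\log(1/\rho))$ with high probability once $O(n)$ successful samples accumulate. Define $T x_{i_j} = \phi(x_{i_j})$ on the basis and extend arbitrarily. Repeat for $\poly(1/\eta, \log(1/\delta))$ independently drawn batches so as to cover every coset representative of $H_0$, and verify each candidate $T$ by Hoeffding-estimating $\Ex{x}{\fxhat^2(Tx)}$ using $\poly(1/\eta, \log(1/\delta))$ invocations of \LinearDecomposition on $f_x$ for random $x$; output the best $T$ whose estimate exceeds $\eta$, or $\bot$ if none does. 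The total \bsgtest budget is $O(n^2 \log n \cdot \poly(1/\eta, \log(1/\delta)))$ (a batch of $O(n)$ samples for each of $\poly(1/\eta, \log(1/\delta))$ repetitions, times an $O(\log n)$ amplification per test), and the Gaussian elimination plus verification adds only $O(n^3)$ extra time.

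The main obstacle will be the sandwiched nature of \bsgtest: accepted samples are guaranteed only to lie in $A_\phi^{(2)}(u) \supseteq A_\phi^{(1)}(u)$, while rejected ones are only certified outside $A_\phi^{(1)}(u)$. The argument resolves this by using the two guarantees for different purposes: the density bound on $A_\phi^{(1)}(u)$ ensures enough survivors to assemble a basis of $H_1$, and the small-doubling bound on $A_\phi^{(2)}(u)$ is what invokes Freiman-Ruzsa, while both sets lie inside the \emph{same} subspace $H$, so every accepted sample is consistent with a single affine structure $L$. A secondary subtlety is that $\phi$ is linear on $A_\phi^{(1)}(u)$ only modulo $H_0$ rather than as a genuine linear map; this forces the algorithm to enumerate (or average over) shifts $h \in H_0$, and it is precisely this enumeration that pays the factor $1/|H_0|$ and sets the final value $\eta = \exp(-1/\e^C)$.
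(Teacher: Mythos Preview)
Your overall architecture matches the paper's proof closely: apply Freiman--Ruzsa to $A_\phi^{(2)}(u)$ to obtain a containing subspace of dimension $n + O(1/\eps^C)$, sample points accepted by \bsgtest, run Gaussian elimination to read off a linear map, and handle the residual shift coming from the ``vertical'' part $H_0$. The gap is in the sampling step. You take $O(n)$ accepted samples per batch and then assert the bound $\Ex{x}{\fxhat^2(Tx)} \geq \gamma^2 \cdot (\rho/6)/|H_0|$, which implicitly requires that your $T$ satisfy $Tx \equiv \phi(x) \pmod{H_0}$ for \emph{every} $(x,\phi(x)) \in A_\phi^{(1)}(u)$. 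But the algorithmically constructed $T$ only agrees with $L$ modulo $H_0$ on the span $S$ of the sampled first coordinates; for $x \in A^{(1)} \setminus S$ the extension is arbitrary and there is no reason for $(x,Tx)$ to lie in $H$. With $O(n)$ samples you can force $\cod(S) = O(\log(1/\rho))$, but that gives no control over $|S \cap A^{(1)}|$: the set $A^{(1)}$ need not lie inside any proper subspace, so it could be almost entirely outside $S$. The paper closes exactly this gap by taking $t = 4n^2 + O(\log(1/\delta))$ samples and union-bounding over all $O(2^{4n^2})$ subspaces of $\F_2^{2n}$ to conclude that the span of the sampled points already contains at least half of $A_\phi^{(1)}$ (Claim~\ref{clm:large-span}). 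This is precisely what drives the $n^2$ factor in the \bsgtest budget; your $O(n)$-sample batches are not enough, and the repetition over $\poly(1/\eta)$ batches does not help since each batch has the same defect.

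Two smaller points. First, your plan to ``repeat over batches to cover every coset representative of $H_0$'' is muddled: a fresh batch does not sweep through $H_0$ in any controlled way, since $T$ is determined by the $\phi$-values on the chosen basis, not by a random shift. Explicitly enumerating the at most $|H_0| = \poly(1/\eta)$ shifts and testing each would work, but the paper does something cleaner: once one has any $T$ and any $y_0$ with $\Ex{x}{\fxhat^2(Tx+y_0)} \geq \eta$, Samorodnitsky's observation (the claim from Lemma~6.10 of \cite{Samorodnitsky07} quoted in the proof) gives $\Ex{x}{\fxhat^2(Tx)} \geq \Ex{x}{\fxhat^2(Tx+y_0)}$, so the shift can simply be dropped. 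Second, your verification step proposes estimating $\Ex{x}{\fxhat^2(Tx)}$ via \LinearDecomposition on random derivatives; this works, but note that each such call costs $O(n^2\log n)$ time, which you have not accounted for in the ``additional running time $O(n^3)$'' claim.
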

\begin{proof}
Let $t = 4n^2 + \log(10/\delta)$.
We proceed by first sampling $K = 100t/\rho$ 
elements $(x,\phi(x))$ and running 
\bsgtest($u, \cdot$) on each of them with parameters as in Lemma \ref{lem:bsgtest} and 
$\delta' = \delta/(5K)$. We retain only the points $(x,\phi(x))$ on which \bsgtest outputs 1.
Since $\delta' = \delta/(5K)$, $\bsgtest$ does not satisfy the guarantee of Lemma 
\ref{lem:bsgtest} on some query with probability at most $\delta/5$. We assume this does not
happen for any of the points we sampled.

If \bsgtest outputs 1 on fewer than $t$ of the queries, we stop and output $\bot$. The following
claim shows that the probability of this happening is at most $\delta/5$. In fact, the claim shows
that with probability $1-\delta/5$ there must be at least $t$ samples from 
$A_{\phi}^{(1)}$ itself, on which we assumed that \bsgtest outputs 1.


\begin{claim}\label{clm:unisamples}
With probability at least $1-\delta/5$, the sampled points contain at least $t$ samples 
from $A_{\phi}^{(1)}$.
\end{claim}
\begin{proof}
Since $|A_{\phi}^{(1)}| \geq \rho N/6$, the expected number of samples from $A_{\phi}^{(1)}$
is at least $\rho K/6$. By a Hoeffding bound, the probability that this number is less than $t$ is at most
$\exp(-\Omega(\rho K)) \leq \delta/5$ if $\rho K = \Omega(\log(1/\delta))$.
\end{proof}

Note that conditioned on being in $A_{\phi}^{(1)}$, the sampled points are in fact \emph{uniformly}
distributed in $A_{\phi}^{(1)}$. We show that then they must span a subspace of large dimension, and
that their span must cover at least half of $A_{\phi}^{(1)}$.


\begin{claim}\label{clm:large-span}
Let $z_1, \ldots, z_t \in A_{\phi}^{(1)}$ be uniformly sampled points. Then for 
$t \geq 4n^2 + O(\log(1/\delta))$ it is true with
probability $1-\delta/5$ that
\begin{itemize}
\item $|<z_1, \ldots, z_t> \cap A_{\phi}^{(1)}| \geq (1/2) |A_{\phi}^{(1)}|$
\item $\dim(<z_1, \ldots, z_t>) \geq n - \log(12/\rho)$.
\end{itemize}
\end{claim}
\begin{proof}
For the first part, we consider the span $<z_1, \ldots, z_t>$, which is a subspace of $\F_2^n$. The 
probability that it has small intersection with $A_{\phi}^{(1)}$ is 
\[\sum_{|S \cap A_{\phi}^{(1)}| \leq |A_{\phi}^{(1)}|/2} \prob{z_1, \ldots, z_t \in S} \cdot
\prob{<z_1, \ldots, z_t> ~=~ S \mid z_1, \ldots, z_t \in S},\]
where the sum is taken over all
subspaces $S$ of $\F_2^n$.
Since $|S \cap A_{\phi}^{(1)}| \leq |A_{\phi}^{(1)}|/2$, we have that 
$\prob{z_1, \ldots, z_t \in S} \leq (1/2)^t$. Thus, the required probability 
bounded  above by
\[\sum_{|S \cap A_{\phi}^{(1)}| \leq |A_{\phi}^{(1)}|/2} (1/2)^t \cdot 1 
~~\leq~ 2^{-t}  O(2^{4n^2}) .\]
The last bound uses the fact that the number of subspaces
of $\F_2^{2n}$ is $O(2^{4n^2})$. Thus, for $t = 4n^2 + \log(10/\delta)$, the
probability is at most $\delta/10$.

We now bound the probability that the sampled points $z_1, \ldots, z_t$ span a subspace of 
dimension at most $n-k$. The probability that a random of $A_{\phi}^{(1)}$ lies in 
a \emph{specific} subspace of dimension $n-k$ is at most $(2^{-k}/(\rho/6))$. Hence, the
probability that all $t$ points lie in any subspace of dimension $n-k$ is bounded above by
\[ \inparen{\frac{2^{-k}}{\rho/6}}^t \cdot \#\{\text{subspaces of dim}~  n-k\} 
~\leq~ \inparen{\frac{2^{-k}}{\rho/6}}^t \cdot 2^{n(n-k)}  .\]
%
For $t \geq n^2 + O(\log(1/\delta))$ and $k = \log(12/\rho)$, this probability is at 
most $\delta/10$. Hence the dimension of the span of the sampled vectors is at least 
$n - \log(12/\rho)$ with high probability. 
\end{proof}

Next, we \emph{upper bound} the dimension of the span of the retained points (on which 
\bsgtest answered 1). 
By the assumed correctness of \bsgtest, we 
get that all the points must lie inside $A_{\phi}^{(2)}$. 
Applying the Freiman-Ruzsa Theorem (Theorem \ref{thm:freiman}), it follows that
\[ |< A_\phi^{(2)}>| ~\leq~ \exp(1/\rho^C) N. \]
The above implies that all the points are inside a space of dimension
at most $n + \log(1/\nu)$, where we have written $\nu = \exp(-1/\rho^C)$. From here, we can proceed in a similar fashion to \cite{Samorodnitsky07}.

Let $V$ denote the span of the retained points and let $v_1, \ldots, v_{r}$ be a basis for $V$.
We can add vectors to complete it to $v_1, \ldots, v_{s}$ so that the projection onto the first $n$
coordinates has full rank. Let $V' = <v_1, \ldots, v_s>$.
We can also assume, by a change of basis, that for $i \leq n$ we have the coordinate vectors
$v_i = (e_i, u_i)$. This can
all be implemented by performing Gaussian elimination, which takes time $O(n^3)$.

Consider the $2n \times s$ matrix with $v_1, \ldots, v_s$ as columns. By the 
previous discussion, this matrix is of the form 
\[P = \left( \begin{array}{cc} I & 0 \\ T & U \end{array}\right),\]
where $I$ is the $n \times n$ identity matrix, and $T$ and $U$ are $n \times n$ and 
$n \times (s-n)$ matrices, respectively. By Claim \ref{clm:large-span}, we know that
$v'$ contains $|A_{\phi}^{(1)}|/2 \geq (\rho/12) N$ vectors of the form $(x,\phi(x))^T$.
For each such vector, there exists a $w \in \F_2^{s}$ such that
$P \cdot w = (x, \phi(x))^T$. Because of the form of $P$, we must have that $w = (x,z)$
for $z \in \F_2^{s-n}$. Thus, we get that for each vector $(x,\phi(x))$, we in fact have $\phi(x) = Tx + Uz$
for some $z \in \F_2^{s-n}$.

Therefore, for at least one $z_0 \in \F_2^{s-n}$ and $y_0 = Uz_0$ we find that
\[ \Prob{x \in \F_2^n}{\phi(x) = Tx + y_0} ~\geq~ (\rho/12) \cdot 2^{-(s-n)}.\]
We next upper bound $s-n$. Note that $s \leq r + k$ since by Claim \ref{clm:large-span}, $V$ had 
dimension at least $n-k$ for $k = \log(12/\rho)$. Also, we know 
that $r \leq n + \log(1/\nu)$  by the bound on $|< A_{\phi}^{(2)}>|$, implying that 
$s \leq n + \log(12/\rho) + \log(1/\nu)$. We conclude that  $2^{-(s-n)} \geq (\rho/12) \nu$.

Moreover, for each element of the form $(x,\phi(x)) \in A_{\phi}^{(1)}$, we know that
$\smallabs{\fxhat{(\phi(x))}} \geq \gamma \geq \e^{16}/180$. This implies that
\[ \Ex{x \in \F_2^n}{\fxhat^2(Tx + y_0)} \geq \gamma^2 \cdot (\rho/12) \cdot (\rho \nu/12) .\]
Samorodnitsky shows that we can in fact take $y_0$ to be 0. In fact, he shows the following
general claim.
\begin{claim}[Consequence of Lemma 6.10 \cite{Samorodnitsky07}]
For any matrix $T$ and $y \in \F_2^n$, 
$\Ex{x \in \F_2^n}{\fxhat^2(Tx + y)} \leq \Ex{x \in \F_2^n}{\fxhat^2(Tx)}$.
\end{claim}
Thus, we simply output the matrix $T$ constructed as above. For $\eta = \gamma^2 \rho^2 \nu /144$,
it satisfies $\Ex{x \in \F_2^n}{\fxhat^2(Tx)} \geq \eta$. Finally, we calculate the probability that
the algorithm outputs $\bot$ or outputs a $T$ not satisfying this guarantee. This can happen only
when the guarantee on \bsgtest is not satisfied for one of the sampled points, or when the
guarantees in Claims \ref{clm:unisamples} and \ref{clm:large-span} are not satisfied. Since each of
these happen with probability at most $\delta/5$, the probability of error is at most $3\delta/5 <
\delta$.
\end{proof}

\subsection{Finding a quadratic phase function} \label{sec:quadratic-phase}
Once we have identified the linear map $T$ above, the remaining argument is identical to the one in 
\cite{Samorodnitsky07}. 

Equipped with $T$, one can find a symmetric matrix $B$ with zero diagonal
that satisfies a slightly weaker guarantee. This step is usually referred to as the \emph{symmetry argument}, and we shall encounter a modification of it in Section \ref{sec:refinement}. The only algorithmic steps
used in the process are Gaussian elimination and finding a basis for a subspace, which
can both be done in time $O(n^3)$.


\begin{lemma}[Proof of Theorem 2.3 \cite{Samorodnitsky07}]\label{lem:symmetrization}
Let $T$ be as above. Then in time $O(n^3)$ one can find a symmetric matrix $B$ with zero 
diagonal such that $\Ex{x \in \F_2^n}{\fxhat^2(Bx)} \geq \eta^2$.
\end{lemma}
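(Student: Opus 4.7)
\begin{proofof}{plan for Lemma \ref{lem:symmetrization}}
The plan is to take $B := T + T^{T}$ and verify that this choice already does the job, with essentially no further algorithmic work. Over $\F_2$, the matrix $T+T^{T}$ is manifestly symmetric, and its diagonal entries satisfy $B_{ii} = T_{ii}+T_{ii}=0$, so $B$ automatically has zero diagonal. Given $T$ in the natural matrix representation, computing $B$ requires only $O(n^{2})$ field operations (the $O(n^{3})$ allowance in the statement is used only because we have already invoked Gaussian elimination upstream in Lemma \ref{lem:lin-choice}). Hence the lemma reduces purely to the analytic bound $\Ex{x}{\fxhat^{2}(Bx)} \geq \eta^{2}$.

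For the analytic part the plan is to imitate Samorodnitsky's symmetry argument verbatim. Starting from the Fourier identity
\[
\fxhat^{2}(\alpha) ~=~ \Ex{y,h}{f(y)\,f(y+x)\,f(y+h)\,f(y+x+h)\,(-1)^{\langle \alpha,h\rangle}},
\]
define $Q(x,h) := \Ex{y}{f(y)f(y+x)f(y+h)f(y+x+h)}$, which is manifestly symmetric in its two arguments, $Q(x,h)=Q(h,x)$. Averaging over $x$ and exploiting $\langle Tx,h\rangle = \langle x, T^{T}h\rangle$ together with the swap $(x,h)\leftrightarrow(h,x)$ gives the identity
\[
\Ex{x}{\fxhat^{2}(Tx)} ~=~ \Ex{x,h}{Q(x,h)(-1)^{\langle Tx,h\rangle}} ~=~ \Ex{x,h}{Q(x,h)(-1)^{\langle T^{T}x,h\rangle}} ~=~ \Ex{x}{\fxhat^{2}(T^{T}x)}.
\]

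With the symmetry in hand, the second step is to square $\Ex{x}{\fxhat^{2}(Tx)}$ and apply Cauchy--Schwarz so that the two phases $(-1)^{\langle Tx,h\rangle}$ and $(-1)^{\langle T^{T}x,h\rangle}$ appear together in a single parallelepiped average, combining into the single phase $(-1)^{\langle (T+T^{T})x,h\rangle}$. After unfolding the resulting expression back into the Fourier representation, this yields
\[
\eta^{2} ~\leq~ \left(\Ex{x}{\fxhat^{2}(Tx)}\right)^{2} ~\leq~ \Ex{x}{\fxhat^{2}((T+T^{T})x)} ~=~ \Ex{x}{\fxhat^{2}(Bx)},
\]
which is exactly the required bound.

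The main obstacle is purely bookkeeping, namely carrying out the Cauchy--Schwarz step so that the two linear phases combine cleanly into $B=T+T^{T}$; this is exactly the calculation carried out in the proof of Samorodnitsky's Theorem 2.3, and we simply import it. No new algorithmic ideas are needed: once $T$ is in hand, $B$ is obtained by a single matrix addition, and the quadratic phase $(-1)^{q}$ with $q(x)=\langle x, Ux\rangle$ for an upper-triangular $U$ satisfying $U+U^{T}=B$ can then be extracted in the next step (Section \ref{sec:quadratic-phase}).
\end{proofof}
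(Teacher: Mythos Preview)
Your choice $B := T+T^{T}$ does not work, and the claimed inequality
\[
\bigl(\Ex{x}{\fxhat^{2}(Tx)}\bigr)^{2} ~\leq~ \Ex{x}{\fxhat^{2}\bigl((T+T^{T})x\bigr)}
\]
is false in general. Over $\F_2$ the map $T\mapsto T+T^{T}$ kills any symmetric matrix, so whenever the linear choice function $T$ handed to you by Lemma~\ref{lem:lin-choice} happens to already be symmetric, your $B$ is the zero matrix. Concretely, take $f(x)=(-1)^{\langle x,Mx\rangle}$ for some $M$ with $M+M^{T}$ of full rank (possible whenever $n$ is even), and let $T=M+M^{T}$. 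Then $\fxhat(\alpha)$ is supported at the single point $\alpha=(M+M^{T})x$, so $\Ex{x}{\fxhat^{2}(Tx)}=1$ and hence $\eta=1$; but $B=T+T^{T}=0$, and $\Ex{x}{\fxhat^{2}(0)}=\Prob{x}{(M+M^{T})x=0}=2^{-n}$, which is nowhere near $\eta^{2}=1$. So the ``Cauchy--Schwarz so that the two phases combine into $(-1)^{\langle(T+T^{T})x,h\rangle}$'' step cannot be made to work; no amount of bookkeeping will rescue it.

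The actual symmetry argument (and the reason the paper mentions Gaussian elimination and \emph{finding a basis for a subspace} as the only algorithmic steps) does something quite different: it keeps $T$ rather than replacing it by $T+T^{T}$, but restricts attention to the subspace on which $T$ already acts symmetrically. The key observation is Samorodnitsky's Lemma~6.11, namely that $\fxhat(\alpha)\neq 0$ forces $\langle x,\alpha\rangle=0$, so with $g(x)=(-1)^{\langle x,Tx\rangle}$ and $F(x)=\fxhat^{2}(Tx)$ one has $\Ex{x}{F(x)}=\Ex{x}{g(x)F(x)}$. A Plancherel/Cauchy--Schwarz computation involving $g*g$ then exposes the subspace $W'=\{x:(T+T^{T})x=0\}$ (computable by Gaussian elimination in time $O(n^{3})$), and one takes $B$ to be $T$ restricted to $W'$, followed by a further one-dimensional restriction to zero out the diagonal. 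This is spelled out in full in the local version, Lemma~\ref{lem:findsymmetric}; the global case here is the specialisation $V=\F_2^{n}$, $c_1=z_c=0$.
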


Now that we have correlation of the derivative $f_x$ of the function with a truly linear map, it remains to ``integrate" this relationship to obtain that $f$ itself correlates with a quadratic map. Following Green and Tao, we shall henceforth refer to this part of the argument as the \emph{integration step}.

Having obtained $B$ above, we can find a matrix $M$ such that $M + M^T = B$.  We take the
quadratic part of the phase function to be $h(x) = (-1)^{\langle x,M x\rangle}$. The following claim
helps establish the linear part.


\begin{lemma}[Corollary 6.4 \cite{Samorodnitsky07}]\label{lem:integration}
Let $B$ and $h$ be as above. Then there exists $\alpha \in \F_2^n$ such that 
$\smallabs{\widehat{fh}(\alpha)} \geq \eta^2$.
\end{lemma}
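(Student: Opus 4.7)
The plan is to prove this via a direct Fourier-analytic computation that converts the assumed lower bound on $\Ex{x}{\fxhat^2(Bx)}$ into a lower bound on a fourth moment of $\widehat{fh}$, and then extract a single large coefficient using Parseval.

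First I would establish the key algebraic identity relating $h$ at a sum to $h$ at the summands. Expanding the quadratic form $\langle x+y, M(x+y)\rangle$ by bilinearity produces the diagonal terms $\langle x, Mx\rangle + \langle y, My\rangle$ together with the two cross terms $\langle x, My\rangle + \langle y, Mx\rangle$. Rewriting the second cross term using the transpose (and working in $\F_2$, where transposition is formally the same operation) combines them into $\langle x, (M+M^T) y\rangle = \langle x, By\rangle$. This yields
\[ h(x+y) ~=~ h(x)\, h(y)\, (-1)^{\langle x, By\rangle}. \]
The zero-diagonal assumption on $B$ is precisely what allows such an $M$ to exist over $\F_2$, and symmetry of $B$ gives $(-1)^{\langle x, By\rangle} = (-1)^{\langle Bx, y\rangle}$.

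Next I would substitute into the definition of $\fxhat(Bx)$. We have
\[ \fxhat(Bx) ~=~ \Ex{y}{f(y) f(x+y) (-1)^{\langle Bx, y\rangle}} ~=~ h(x)\cdot \Ex{y}{(fh)(y)\,(fh)(x+y)}, \]
where the last step uses the identity above to rewrite $(-1)^{\langle Bx, y\rangle}$ as $h(x)h(y)h(x+y)$ and absorbs the $h$'s into $fh$. Expanding the autocorrelation in Fourier gives
\[ \Ex{y}{(fh)(y)\,(fh)(x+y)} ~=~ \sum_{\alpha} \widehat{fh}(\alpha)^2 \,(-1)^{\langle \alpha, x\rangle}. \]
So $\fxhat(Bx) = h(x) \cdot \sum_\alpha \widehat{fh}(\alpha)^2 (-1)^{\langle \alpha, x\rangle}$.

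Finally I would square this identity and take expectation over $x$. Since $|h(x)|=1$, the $h(x)^2$ factor disappears, and orthogonality of characters collapses the double sum to a single sum, yielding
\[ \Ex{x}{\fxhat^2(Bx)} ~=~ \sum_{\alpha} \widehat{fh}(\alpha)^4. \]
By hypothesis this is at least $\eta^2$. Combined with Parseval, $\sum_\alpha \widehat{fh}(\alpha)^2 = \|fh\|_2^2 = 1$, the elementary bound
\[ \sum_\alpha \widehat{fh}(\alpha)^4 ~\leq~ \bigl(\max_\alpha \widehat{fh}(\alpha)^2\bigr)\cdot \sum_\alpha \widehat{fh}(\alpha)^2 ~=~ \max_\alpha \widehat{fh}(\alpha)^2 \]
then gives $\max_\alpha \smallabs{\widehat{fh}(\alpha)} \geq \eta \geq \eta^2$, which is the desired conclusion. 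There is no real obstacle here beyond bookkeeping; the only place a choice is needed is recognizing that the cross-term rewriting forces the appearance of exactly $B = M+M^T$, which is what makes the identity $h(x+y) = h(x)h(y)(-1)^{\langle x, By\rangle}$ click into place.
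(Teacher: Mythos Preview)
Your proof is correct and follows essentially the standard argument; the paper does not reprove this lemma but cites it from \cite{Samorodnitsky07}, and your computation matches the approach the paper itself uses in the local analogue (Lemma~\ref{lem:FCtoCorr2}). Note that you actually obtain the stronger bound $\max_\alpha \smallabs{\widehat{fh}(\alpha)} \geq \eta$ rather than just $\eta^2$, which is consistent with Samorodnitsky's original statement.
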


An appropriate $\alpha$ can be found using the algorithm \LinearDecomposition with parameter 
$\gamma' = \eta^2$ (by picking any element from the list it outputs). 
We take $q(x) = {\langle x,M x\rangle + \langle\alpha, x\rangle + c}$ where $(-1)^{c}$ is the sign of the coefficient for
$(-1)^{\langle \alpha,  x\rangle }$ in the linear decomposition. The running time of this 
step is $O(n^3 \log n \cdot \poly(1/\eta, \log(1/\delta)))$, where $\delta$ is the probability of
error we want to allow for this invocation of \LinearDecomposition. 

Note that of all the steps involved
in finding a quadratic phase, finding the \emph{linear} part of the phase is the only step for which
running time depends exponentially on $\e$ (since $\eta = \exp(-1/\e^{\Omega(1)})$). The running
time of all other steps depends polynomially on $1/\e$.

\subsection{Putting things together} \label{sec:final-quadratic}
We are now ready to finish the proof of Theorem \ref{thm:FindQuadratic}.
\begin{proofof}{of Theorem \ref{thm:FindQuadratic}}
For the procedure \FindQuadratic the function $\phi(x)$ will be sampled using Lemma 
\ref{lem:sample-phi} as required. We start with a random $u = (x,\phi(x))$
and a random choice for the parameters $\gamma_1,\gamma_2,\gamma_3$ as
described in the analysis of \bsgtest. We run the algorithm in Lemma \ref{lem:lin-choice} 
using \bsgtest with the above parameters and with error parameter $1/2$.

If the algorithm outputs  a quadratic form $q(x)$,
we estimate $\abs{\ip{f,(-1)^q}}$ using $O((1/\eta^4) \cdot \log^2(\rho/\delta))$
samples. If the estimate is less than $\eta^2/2$, or if the algorithm stopped with output 
$\bot$ we discard $q$ and repeat
the entire process. 
For a $M$ to be chosen later, if we do not find
a quadratic phase in $M$ attempts, we stop and output $\bot$.

With probability $\rho/2$, all samples of $\phi(x)$ (sampled with error $1/n^5$)
correspond to a good function $\phi$. Conditioned on this, we have a good choice of
$u$ and $\gamma_1,\gamma_2,\gamma_3$ for \bsgtest with probability $\rho^3/24$. 
Conditioned on both the above, 
the algorithm in Lemma \ref{lem:lin-choice} finds a good transformation with probability
$1/2$. Thus, for $M = O((1/\rho^4 ) \cdot \log(1/\delta))$, the algorithm stops in 
$M$ attempts with probability at least $1-\delta/2$. By choice of the number of
samples above, the probability that we estimate $\abs{\ip{f,(-1)^q}}$ incorrectly at any step is at most 
$\delta/2M$. Thus, with probability at least $1-\delta$, we output a good quadratic phase.

One call to the algorithm in Lemma \ref{lem:lin-choice} requires $O(n^2)$ calls to 
\bsgtest, which in turn requires
$\poly(1/\e)$ calls to \LinearDecomposition, each taking time $O(n^2 \log n)$. This
dominates the running time of the algorithm, which is 
$O(n^4 \log n \cdot \poly(1/\e, 1/\eta, \log(1/\delta)))$. 
\end{proofof}


\section{A refinement of the inverse theorem}\label{sec:refinement}

In this section we shall work with a number of refinements of the inverse theorem as stated in
Theorem \ref{thm:globalinverse}. For the purposes of the preliminary discussion we shall think of
$p$ being any prime, and later specialize to the case $p=2$.

It was observed (but not exploited) by Green and Tao \cite{GrTu3} that a slightly stronger form of
the inverse theorem holds. If $V$ is a subspace of $\F_p^n$ and $y\in\F_p^n$, then one can define a
seminorm $\|.\|_{u^3(y+V)}$ on functions from $\F_p^n$ to $\C$ by setting 
$$\|f\|_{u^3(y+V)}=\sup_q |\E_{x\in y+V}f(x)\omega^{-q(x)}|,$$ 
where the supremum is taken over all quadratic forms $q$ on $y+V$ and $\omega$ denotes a $p$th root
of unity. This semi-norm measures the correlation over a coset of the subspace $V$. We shall
be interested in the co-dimension of the subspace, which we shall denote by $\cod V$.
With this notation, the inverse theorem in \cite{GrTu3} can be stated as follows.


\begin{theorem}[Local Inverse Theorem for $U^3$ \cite{GrTu3}] \label{thm:localinverse}
Let $p>2$, and let $f:\F_p^n\ra\C$ be a function such that
$\|f\|_\infty\leq 1$ and $\|f\|_{U^3}\geq\eps$. 
Then there exists a subspace $V$ of $\F_p^n$ such that $\cod V \leq \eps^{-C}$ and
\[\E_{y \in \comp{V}}\|f\|_{u^3(y+V)}\geq\eps^{C}.\]
\end{theorem}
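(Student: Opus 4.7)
The plan is to follow the structural outline of the global inverse theorem (Theorem \ref{thm:globalinverse}) as carried out in Section \ref{sec:correlations}, but to substitute the use of Freiman--Ruzsa by a more refined sumset argument, thereby preserving only polynomial losses at the cost of obtaining correlation on cosets of a subspace rather than on the whole of $\F_p^n$. First I would establish, exactly as in Section \ref{sec:large-Fourier}, that $\|f\|_{U^3}\geq\eps$ implies the existence of a ``choice function'' $\phi\from \F_p^n\to\F_p^n$ such that $|\widehat{f_y}(\phi(y))|$ is typically $\geq\eps^C$ and the number of additive quadruples $(y_1,y_2,y_3,y_4)$ with $y_1+y_2=y_3+y_4$ and $\phi(y_1)+\phi(y_2)=\phi(y_3)+\phi(y_4)$ is at least $\eps^C |\F_p^n|^3$.

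Next, apply the Balog--Szemer\'edi--Gowers theorem (Theorem \ref{thm:bsg}) to obtain a set $A\sub\F_p^n$ of density $\geq\eps^C$ on which $\phi$ restricts to a Freiman $2$-homomorphism and $|A+A|\leq\eps^{-C}|A|$. By the Pl\"unnecke--Ruzsa inequalities, $|kA-kA|\leq\eps^{-Ck}|A|$ for every fixed $k$, so the iterated sumsets of $A$ remain polynomially controlled. This is the step where I would deviate from the route taken in Section \ref{sec:correlations}: rather than applying Theorem \ref{thm:freiman} to contain $A$ in an enveloping subspace (which costs $\exp(\eps^{-C})$), I would use a Ruzsa covering argument to show that $2A-A$ is covered by $\eps^{-C}$ translates of $A-A$, and then work inside the subgroup $V^\perp:=\langle A-A\rangle$. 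Because the Freiman $2$-homomorphism property on $A$ extends consistently to $A-A$ and $2A-A$, one obtains an affine-linear map $T$ agreeing with $\phi$ on a fraction $\geq\eps^C$ of a subspace $V^*$ of codimension $\poly(1/\eps)$, rather than an affine map on all of $\F_p^n$ purchased at exponential cost.

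With $T$ in hand I would run the symmetry argument of Lemma \ref{lem:symmetrization}, which is precisely where the hypothesis $p>2$ is used: in characteristic $\neq 2$ one can form $\frac{1}{2}(T+T^*)$ to symmetrize, a step unavailable over $\F_2^n$ (and the reason Samorodnitsky's proof is substantially more delicate). This yields a symmetric $B$ with $\E_x \widehat{f_x}(Bx)^2\geq\eps^C$. For the integration step, choose $M$ with $M+M^T=B$ and consider $g(x):=f(x)\omega^{-\langle x,Mx\rangle}$. On each coset $y+V$, the derivative $g_h$ for $h\in V$ has a large Fourier coefficient at $0$; an averaging argument then forces $\|g\|_{u^2(y+V)}\geq\eps^C$ for a $\geq\eps^C$-fraction of cosets $y$, which after absorbing the quadratic $\omega^{\langle x,Mx\rangle}$ into the supremum gives $\E_y\|f\|_{u^3(y+V)}\geq\eps^C$ as claimed.

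The main obstacle I expect is the Ruzsa-covering step: one must argue that after BSG, the partial Freiman $2$-homomorphism $\phi|_A$ can be extended to a genuine affine map on a subspace $V^*$ of \emph{polynomial} codimension, while keeping enough of $\phi$'s selection property intact on cosets of a related subspace $V$. This demands careful bookkeeping to ensure that the subspace of codimension $\eps^{-C}$ playing the role of $V$ in the conclusion is independent of coset, and that the quadratic form extracted in the integration step is uniformly defined on each coset up to a linear correction that can be absorbed into the supremum defining $\|\cdot\|_{u^3(y+V)}$.
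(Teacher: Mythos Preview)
Your overall architecture is right, and in fact closely tracks the Green--Tao argument that the paper adapts algorithmically in Section \ref{sec:refinement}: BSG, then replace Freiman--Ruzsa by something that only loses polynomially, then symmetrize (using $p>2$), then integrate coset by coset. But the step where you deviate from Freiman--Ruzsa contains a genuine gap.

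You propose to use Ruzsa covering to show $2A-A$ is covered by $\eps^{-C}$ translates of $A-A$, and then ``work inside the subgroup $\langle A-A\rangle$''. The problem is that $\langle A-A\rangle$ can itself be exponentially large compared to $|A|$: bounding $|\langle A-A\rangle|/|A|$ is exactly the content of Freiman--Ruzsa, and the bound there is $\exp(K^C)$ in the doubling constant $K$. So this route does not bypass the exponential loss; it reinstates it. Ruzsa covering controls how many translates of $A-A$ are needed to cover iterated sumsets, but it does not produce a subspace of small codimension.

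The correct replacement, as the paper explains at the start of Section \ref{sec:refinement} and implements in Lemma \ref{lem:bogolyubov}, is Bogolyubov's lemma: rather than seeking a subspace \emph{containing} $A$, one finds a subspace $V_0$ \emph{inside} $4A$ (equivalently $2A-2A$), of codimension $O(\rho^{-3})=\eps^{-C}$. Because $V_0\subseteq 4A$, the map $\phi$ can be pushed down to a genuine linear map on $V_0$ --- but only after one first passes to a model set on which $\phi$ is a Freiman homomorphism of order at least $8$ (Section \ref{sec:model}), not merely order $2$ as BSG provides. Without this rectification step, the value $\phi(x_1)+\phi(x_2)+\phi(x_3)+\phi(x_4)$ need not be determined by $x_1+x_2+x_3+x_4$, and your extension to an affine $T$ is ill-defined. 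Once these two ingredients (Bogolyubov and the model lemma) are in place, the rest of your outline --- symmetrization via division by $2$ and coset-wise integration --- goes through essentially as you describe.
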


Here we have denoted the set of coset representatives of $V$ by $\comp{V}$, so that $V \oplus\comp{V}=\F_2^n$. Actually, the theorem as usually stated involves an averages over the whole of $\F_p^n$ as opposed
to just $\comp{V}$, but the result can be obtained with this modification without difficulty by
averaging over coset representatives throughout the proof.

One can deduce the usual inverse theorem from this version without too much effort: by an averaging
argument, there must exist
$y$ such that $f$ correlates well on $y+V$ with some quadratic
phase function $\omega^q$; this function can be extended to a 
function on the whole of $\F_p^n$ in many different ways, and
a further averaging argument yields the usual bounds. However, extending the quadratic phase results
in an exponential loss in correlation. (See, for example, Proposition 3.2 in \cite{GrTu3}.)

It turns out that, as Green and Tao remark, an even more precise
theorem holds. The result as stated tells us that for each $y$ we
can find a local quadratic phase function $\omega^{q_y}$ defined on $y+V$
such that the average of $|\E_{x\in y+V}f(x)\omega^{q_y(x)}|$ is at
least $\eps^{C}$. However, it is actually possible to do this in 
such a way that the quadratic parts of the quadratic phase 
functions $q_y$ are the same. More precisely, it can be done in 
such a way that each $q_y(x)$ has the form $q(x-y)+l_y(x-y)$ for
a single quadratic function $q:V\ra\F_p$ (that is independent of $y$) and 
some Freiman 2-homomorphisms $l_y:V\ra\F_p$.

This \emph{parallel correlation} was heavily exploited by Gowers and the second author
\cite{GW2,GW4} in a series of papers on what they called the \emph{true complexity} of a system of
linear equations, leading to radically improved bounds compared with the original approach in
\cite{GW1}, which was based on an ergodic-style decomposition theorem due to Green and Tao
\cite{GrML}.

For $p=2$, the equivalent of Theorem \ref{thm:localinverse} follows directly neither from Green and Tao's nor
Samorodnitsky's approach but instead requires a merging of the two. The Green-Tao approach is not
directly applicable since the so-called \emph{symmetry argument} in that paper uses division by 2,
while Samorodnitsky's approach loses the local information after an application of Freiman's
theorem.
Section \ref{sec:refinement} is dedicated to showing how to obtain this \emph{local
  correlation}\footnote{The term ``local correlation" may be slightly confusing. It is often used to
  refer to the fact that in $\Z/N\Z$, no global quadratic correlation with a quadratic phase can be
  guaranteed. Indeed, such a phase function must be restricted to a Bohr set, or the correlation
  assumed to only take place on a long arithmetic progression, as in Gowers's original
  work. However, in $\F_p^n$, the setting we are working in here, there should be no ambiguity.} in
the case where the characteristic is equal to 2. We shall therefore restrict our attention to this
case for the remainder of the discussion, bearing in mind that it applies almost verbatim to 
general $p$.

In order to be able to refer to the parallel correlation property more concisely, we shall use the
concept of \emph{quadratic averages} introduced in \cite{GW2}. As explained above, for each coset
$y+V, y \in \comp{V}$, we can specify a quadratic phase $q_y(x)=q(x-y)+l_y(x-y)$. We extend the
definition of $q_y$ to all $y \in \F_p^n$ by setting them equal to $q_{\hat{y}}$ where $\hat{y} \in
\comp{V}$ is such that $y \in \hat{y}+V$.
Now we can define a quadratic average via the formula
\[Q(x)=\E_{y\in x-V}(-1)^{q_y(x)}.\]
Notice that the $q_y$ are the same whenever the $y$ lie in the same coset of $V$. So in fact, since
all the $q_y$s occurring here are such that $y \in x+V$, they are all identical. Thus the value of
the quadratic average only depends on the coset of $V$ that $x$ lies in. More precisely, we can
write
\[Q(x)=\sum_{y \in \comp{V}}1_{y+V}(x)(-1)^{q_y(x)}.\]
This tells us that at most $|\comp{V}|$ many linear phases are needed to specify the quadratic
average.

Combining the Green-Tao approach with Samorodnitsky's symmetry argument in characteristic 2, we shall obtain an algorithmic version of the analogue of the Local Inverse Theorem
(Theorem \ref{thm:localinverse}) for $p=2$.
In order to use this result in our decomposition algorithm Theorem
\ref{thm:decomposition-general}, we in fact state it as an algorithm for finding a \emph{quadratic average}
$Q(x)=\sum_{y \in \comp{V}}1_{y+V}(x)(-1)^{q_y(x)}$, 
which has correlation $\poly(\e)$ with the given function. Using this, Theorem \ref{thm:decomposition-general} will then yield a decomposition into
$\poly(1/\e)$ quadratic averages.

Following \cite{GW1}, we shall call the codimension of $V$ the \emph{complexity} of the 
quadratic average. We will find quadratic averages with complexity $\poly(1/\e)$. 
Note that while this means that the description of a quadratic average is still of size $\exp(1/\e)$, the different
quadratic forms appearing in a quadratic average only differ in the linear part.

\begin{theorem}\label{thm:FindQuadraticAverage}
Given $\e, \delta > 0$ and $n \in \N$, there exist $K,C = O(1)$ and
a randomized algorithm \FindQuadraticAverage running in time 
$O(n^4 \log^2 n \cdot \exp(1/\e^K) \cdot \log(1/\delta))
$, which,  given oracle access to a function $f: \F_2^n \to \pmone$, either outputs a quadratic average $Q(x)$
of complexity $O(\e^{-C})$, or the symbol $\bot$. The algorithm satisfies the following guarantee:
\begin{itemize}
\item If $\uthreenorm{f} \geq \e$, then with probability at least $1-\delta$  it finds a quadratic
  average $Q$ of complexity $O(\e^{-C})$ such that $\ip{f,Q} \geq \eps^{C}$.
\item The probability that the algorithm outputs a $Q$ which has $\ip{f,Q} \leq \eps^{C}/2$ is at 
most $\delta$.
\end{itemize}
\end{theorem}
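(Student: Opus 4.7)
The plan is to refine Theorem \ref{thm:FindQuadratic} so as to retain local (coset-level) information, avoiding the exponential loss in correlation incurred by the use of Freiman-Ruzsa. This amounts to combining Samorodnitsky's symmetrization (which works in characteristic 2) with Green-Tao's coset-based approach (whose direct application to $p=2$ breaks because of division by 2). The high-level strategy has four stages: (i) set up $\phi$ and \bsgtest as in \FindQuadratic; (ii) extract from the matrix $P$ of Lemma \ref{lem:lin-choice} a subspace $V \subseteq \F_2^n$ of codimension $\poly(1/\e)$; (iii) symmetrize globally to obtain a common quadratic part; (iv) on each coset, find the linear part via \LinearDecomposition.

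More concretely, I would sample $\phi$ and set up implicit access to $A_\phi^{(1)} \subseteq A_\phi^{(2)}$ exactly as in \FindQuadratic. Rather than invoking Freiman-Ruzsa to obtain a single global map $T$ at a $|W|$-fold loss in correlation, I would retain the full column span $V'$ of the $2n \times s$ matrix $P$ from Lemma \ref{lem:lin-choice}. The subspace $W = \mathrm{im}(U) \subseteq \F_2^n$ has dimension $s - n \leq \poly(1/\e)$, and the containment $(x,\phi(x)) \in V'$ is equivalent to $\phi(x) \equiv Tx \pmod{W}$. Taking $V$ to be a subspace complementary to $W$ in $\F_2^n$, with $V \oplus W = \F_2^n$, yields $\cod V = \poly(1/\e)$ such that on each coset $y+V$ there is a unique shift $w_y \in W$ with $\phi(x) = Tx + w_y$ holding for many $x$. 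Next, I would carry out Samorodnitsky's symmetrization (Lemma \ref{lem:symmetrization}) on $T$ to obtain a symmetric $B$ and an $M$ with $M+M^T = B$, now with polynomial (rather than exponential) correlation guarantees because the $|W|$-fold loss has been avoided. This produces the common quadratic part $q(x) = \langle x, Mx \rangle$. Finally, for each coset $y+V$, I would run \LinearDecomposition on the restriction of $f \cdot (-1)^q$ to the coset, finding some $\alpha_y$ with large local Fourier coefficient via a local version of Lemma \ref{lem:integration}, yielding local quadratic phases $q_y(x) = \langle x, Mx \rangle + \langle \alpha_y, x \rangle$. Assembling $Q(x) = \sum_{y \in \comp{V}} 1_{y+V}(x) (-1)^{q_y(x)}$ and verifying $\langle f, Q \rangle \geq \e^C$ by sampling gives the theorem; the $|W| = \exp(\poly(1/\e))$ cosets account for the $\exp(1/\e^K)$ factor in the running time.

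The main obstacle is the symmetrization step: one must verify that Samorodnitsky's symmetry argument, stripped of its reliance on a Freiman-Ruzsa subspace and adapted to the decomposition $\F_2^n = V \oplus W$, still produces a \emph{single global} symmetric matrix $B$ compatible with this splitting, so that coset-dependence appears only in the linear shifts $\alpha_y$ and the quadratic part is truly global. This is precisely the point at which Green-Tao invoke division by 2. Executing it requires tracing through Samorodnitsky's symmetrization identities using the coset-exact relation $\phi(x) = Tx + w_y$ and averaging consistently over cosets. A secondary difficulty is the bookkeeping of error parameters across the many stages so that the final estimate $\langle f, Q \rangle \geq \e^C$ holds with the required probability $1 - \delta$, especially since \LinearDecomposition must now be applied on a subspace whose size may be much smaller than $N$.
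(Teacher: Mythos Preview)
Your high-level strategy is sound and matches the paper's intent, but step (ii) contains a genuine gap, and the paper's route to the subspace $V$ is substantially different from what you propose.

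The specific problem is the assertion that ``on each coset $y+V$ there is a unique shift $w_y \in W$ with $\phi(x) = Tx + w_y$ holding for many $x$.'' The information you extract from the matrix $P$ in Lemma~\ref{lem:lin-choice} is only that $\phi(x) - Tx \in W$ for those $x$ with $(x,\phi(x))$ in the span; there is no mechanism forcing $\phi(x)-Tx$ to be constant along cosets of your $V$. The space $W = \mathrm{im}(U)$ lives in the \emph{target} of $\phi$, and choosing $V$ complementary to it in the \emph{domain} creates no relationship between the coset of $x$ and the value of $\phi(x)-Tx$. Without such a relationship, passing to cosets of $V$ does not avoid the pigeonhole over $|W|$ shifts, and you are back to the exponential loss you were trying to avoid.

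The paper circumvents this by a genuinely different construction of $V$. First (Section~\ref{sec:model}) it performs a \emph{Freiman model} step: it intersects $A_\phi^{(1)},A_\phi^{(2)}$ with a random affine constraint $\Gamma(\phi(x))=c$ to force $\phi$ to be a Freiman $8$-homomorphism on the resulting set $A^{(2)}$. Second (Section~\ref{sec:lin-locchoice}), rather than using Freiman--Ruzsa to find a subspace \emph{containing} $A_\phi^{(2)}$, it applies an algorithmic Bogolyubov lemma to find a subspace $V_0$ of codimension $\poly(1/\e)$ \emph{inside} $4A^{(2)}$. The homomorphism property then guarantees (Claim~\ref{clm:zeta-linear}) that the map $x_1+\cdots+x_4 \mapsto \phi(x_1)+\cdots+\phi(x_4)$ is well-defined and genuinely linear on $V_0$, which is exactly the structure your step (ii) lacks. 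A further coset-finding step (Claim~\ref{clm:good-coset}) locates a single coset on which the linear choice function gives $\poly(\e)$ correlation. Finally, the symmetrization (Lemma~\ref{lem:findsymmetric}) and integration (Lemma~\ref{lem:FCtoCorr2}) are carried out \emph{relative to this subspace}, not globally as you suggest; this is where Samorodnitsky's characteristic-$2$ argument is merged with Green--Tao's local viewpoint. Your intuition that symmetrization is the delicate point is correct, but the resolution requires the local Fourier-analytic framework rather than a global matrix identity.
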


We briefly outline the key modifications in the proof that allow us to obtain this result. Recall that in the previous section we only obtained correlation $\eta =
\exp(1/\e^C)$ because we applied the Freiman-Ruzsa theorem to the set
$A_{\phi}^{(2)}$: we were only able to assert that 
$|< A_{\phi}^{(2)}>| \leq \exp(1/\e^C) |A_{\phi}^{(2)}|$. Because we had
correlation $\poly(\e)$ over $A_{\phi}^{(2)}$, we obtained correlation $\exp(-1/\e^C)$ with the
linear function we defined on $< A_{\phi}^{(2)} >$. 

They key difference in the new argument, which borrows heavily from Green and Tao \cite{GrTu3}, is that instead of looking for a subspace \emph{containing}
$A_{\phi}^{(2)}$, which we previously used to find a linear function, we will look for a subspace
\emph{inside} $4A_{\phi}^{(2)}$. Given the properties of $A_{\phi}^{(2)}$, we will be able to find
such a subspace by an application of Bogolyubov's lemma (described in more detail below), with the property that the
co-dimension of the subspace is $\poly(1/\e)$. We will also find a quadratic form such that
\emph{restricted to inputs from this subspace}, it has correlation $\poly(1/\e)$ with the function
$f$. We shall then show (Lemma \ref{lem:FCtoCorr2}) how to extend this quadratic form to all the
cosets of the subspace, by adding a \emph{different linear form for each coset} so that the
correlation of the resulting quadratic average is still $\poly(1/\e)$.

We begin by developing algorithmic version of some of the new ingredients in the proof.

\subsection{An algorithmic version of Bogolyubov's lemma} \label{sec:bogolyubov}
We follow Green and Tao in using a form of Bogolyubov's lemma, which has become a
standard tool in arithmetic combinatorics. Bogolyubov's lemma as it is usually stated allows one to
find a large subspace inside the 4-fold sumset of any given set of large size.
We briefly remind the reader of the  relationship between sumsets and convolutions, which is used in
the proof of the lemma.

For functions $h_1, h_2: \F_2^n \to \R$, we define their convolution as
$h_1 * h_2 (x) ~\defeq~ \Ex{y}{h_1 (y) h_2 (x-y)}$.
The Fourier transform diagonalizes the convolution operator, that is,
$\widehat{h_1 * h_2}(\alpha) = \widehat{h_1}(\alpha)\widehat{h_2}(\alpha)$ for any two functions
$h_1,h_2$ and any $\alpha \in \F_2^n$ , which is easy to verify
from the definition.
Also, if $1_A$ is the indicator function for a set $A \subseteq \F_2^n$, then
\[ 1_A * 1_A (x) ~=~ \Ex{y}{1_A (y) \cdot 1_A (x-y)} 
~=~ \abs{\{ (y_1,y_2)  \suchthat y_1,y_2 \in A ~\text{and}~ y_1 + y_2 = x\}}/2^n . \]
In particular, $1_A * 1_A$ is supported only on $A+A$ and gives the number of representations of $x$ as the sum of two elements in $A$. In general, the $k$-fold convolution is
supported on the $k$-fold sumset. 

The proof of Bogolyubov's lemma constructs an explicit
subspace by looking at the large Fourier coefficients (using the Goldreich-Levin theorem) and shows
that the 4-fold convolution is positive on this subspace. Since we will actually apply this lemma
not to a subset but to the output of a randomized algorithm, we state it for an arbitrary function
$h$ and its convolution. 

We will output a subspace $V \subseteq \F_2^n$ by specifying a basis for
the space $\ortho{V} \defeq \{ x : x^T y = 0 ~~\forall y \in V\} $. Since $\ortho{(\ortho{V})} = V$,
this will also give us a way of checking if $x \in V$: we simply test if $x^T y = 0$ for all basis
vectors $y$ of $\ortho{V}$.

\begin{lemma}[Bogolyubov's Lemma]\label{lem:bogolyubov}
There exists a randomized algorithm \bogolyubov with parameters $\rho$ and $\delta$ which, given
oracle access to a function $h: \F_2^n \ra \{0,1\}$ with $\E h \geq \rho$, outputs a subspace $V 
\leqslant \F_{2}^{n}$ (by giving a basis for $\ortho{V}$) of codimension at most $O(\rho^{-3})$ such
that with probability at least $1-\delta$, we have $h*h*h*h(x)>\rho^4/2$ for all  $x \in V$. The
algorithm runs in time $n^2 \log n\cdot \poly(1/\rho, \log(1/\d))$.
\end{lemma}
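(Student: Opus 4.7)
The plan is to translate the classical Fourier-analytic proof of Bogolyubov's lemma into an algorithm by using \LinearDecomposition to enumerate the large Fourier coefficients of $h$. By the convolution identity $\widehat{h*h*h*h}(\alpha) = \hat{h}(\alpha)^4$ and Fourier inversion over $\F_2^n$,
$$h*h*h*h(x) ~=~ \sum_{\alpha \in \F_2^n} \hat{h}(\alpha)^4 (-1)^{\langle \alpha, x\rangle}.$$
If $\Lambda \subseteq \F_2^n$ contains $0$ together with every $\alpha$ satisfying $|\hat{h}(\alpha)| \geq \gamma$, and we define $V = \{x : \langle \alpha, x\rangle = 0 \text{ for all } \alpha \in \Lambda\}$, then for every $x \in V$ each term with $\alpha \in \Lambda$ contributes $\hat h(\alpha)^4 \geq 0$, so the $\Lambda$-portion of the sum is at least $\hat h(0)^4 = (\E h)^4 \geq \rho^4$. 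The remaining terms I would bound using Parseval together with the identity $h^2 = h$:
$$\left| \sum_{\alpha \notin \Lambda} \hat h(\alpha)^4 (-1)^{\langle \alpha, x\rangle}\right| ~\leq~ \max_{\alpha \notin \Lambda} \hat h(\alpha)^2 \cdot \sum_{\alpha} \hat h(\alpha)^2 ~\leq~ \gamma^2 \cdot \E[h^2] ~=~ \gamma^2 \cdot \E h.$$
Choosing, say, $\gamma^2 = \rho^3/4$ makes this error at most $(\rho^3/4)\cdot \E h \leq (\E h)^4/2$ whenever $\E h \geq \rho$, and so $h*h*h*h(x) > (\E h)^4/2 \geq \rho^4/2$ on every $x \in V$, as required.

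To produce such a $\Lambda$ algorithmically, I would convert $h$ into the $\pm 1$-valued function $g = 1 - 2h$ (so that $\hat g(\alpha) = -2\hat h(\alpha)$ for $\alpha \neq 0$) and invoke \LinearDecomposition of Theorem \ref{thm:linear-decomposition} on $g$ with threshold parameter of order $\gamma$ and confidence $\delta$. With probability at least $1-\delta$ this returns a list $\alpha_1,\ldots,\alpha_k$ of size $k = O(1/\gamma^2) = O(\rho^{-3})$ which contains every non-zero $\alpha$ with $|\hat h(\alpha)| \geq \gamma$; adjoining $0$ to this list gives our set $\Lambda$. I then output a basis for $V^{\perp} = \mathrm{span}(\Lambda)$ via Gaussian elimination in time $O(n^2 k)$, which fixes the subspace $V$ of codimension at most $k = O(\rho^{-3})$. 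The total running time is dominated by the call to \LinearDecomposition, which costs $O(n^2\log n \cdot \poly(1/\gamma, \log(1/\delta))) = O(n^2\log n \cdot \poly(1/\rho, \log(1/\delta)))$, matching the claim.

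The only delicate point in the execution of the plan is matching the threshold used by \LinearDecomposition against the threshold used in the Parseval estimate above. Since the Goldreich--Levin routine only guarantees that returned coefficients are within $\gamma/2$ of the true Fourier coefficients, I would call it with a threshold a constant factor smaller than $\gamma = \Theta(\rho^{3/2})$ so that every $\alpha$ with $|\hat h(\alpha)| \geq \gamma$ is actually captured in $\Lambda$; otherwise some large coefficient could escape into the error term and invalidate the bound. This is a routine adjustment of constants and I do not anticipate any real obstacle beyond it.
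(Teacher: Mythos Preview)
Your proof is correct and follows essentially the same route as the paper's: run Goldreich--Levin on $h$ with threshold $\Theta(\rho^{3/2})$, let $V$ be the annihilator of the returned list, and bound the off-list contribution to $\sum_\alpha \hat h(\alpha)^4(-1)^{\langle\alpha,x\rangle}$ by $\gamma^2\sum_\alpha \hat h(\alpha)^2$ via Parseval. Your detour through $g=1-2h$ to match the $\pm1$-valued hypothesis of Theorem~\ref{thm:linear-decomposition} is in fact more careful than the paper, which applies \LinearDecomposition\ directly to the $\{0,1\}$-valued $h$.
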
 
\begin{proof}
We shall use the Goldreich-Levin algorithm \LinearDecomposition for the function $h$ with parameter
$\gamma=\rho^{3/2}/4$ and error $\delta$ to produce a list $K=\{\alpha_1, \dots, \alpha_k\}$ of
length $k=O(\gamma^{-2})=O(\rho^{-3})$.
We take $V$ to be the subspace 
$\{ x \in \F_2^n \suchthat \langle \alpha, x \rangle= 0 ~~\forall \alpha \in K\}$
and output $\langle K \rangle$. 
Clearly $\cod(V)\leq|K|$. We next consider the convolution
\[h*h*h*h(x)=\sum_{\alpha }|\widehat{h}(\alpha )|^{4}(-1)^{\langle \alpha, x\rangle}=\sum_{\alpha  \in
  K}|\widehat{h}(\alpha)|^{4}(-1)^{\langle\alpha, x\rangle}+\sum_{\alpha \not\in
  K}|\widehat{h}(\alpha)|^{4}(-1)^{\langle\alpha, x\rangle}.\]
If $x \in V$, then
\[\sum_{\alpha \in K}|\widehat{h}(\alpha)|^{4}(-1)^{\langle \alpha, x\rangle}+\sum_{\alpha \not\in
  K}|\widehat{h}(\alpha)|^{4}(-1)^{\langle \alpha, x\rangle} \geq |\widehat{h}(0)|^{4}-\sup_{\alpha \notin K}
|\widehat{h}(\alpha)|^{2}\cdot \rho\]
The final part of the guarantee in Theorem \ref{thm:linear-decomposition} states that the
probability of a Fourier coefficient being larger than $\gamma$ and not being on our list $K$ is at
most $\delta$. We conclude that with probability at least $1-\delta$, the expression $h*h*h*h(x)$ is
bounded below, for all $x \in V$, by
\[\rho^4-\rho\cdot\rho^{3}/2  \geq \rho^{4}/2,\]
and thus strictly positive.
\end{proof}

We will, in fact, need a further twist of the above lemma. The function $h$ to which will apply
Lemma \ref{lem:bogolyubov} will be defined by the output of a randomized algorithm. Thus, $h$ can be
thought of as a random variable, where we choose the value $h(x)$ on each input $x$ by running the
randomized algorithm. As in the case of \bsgtest, we will have the guarantee that there exist two sets
$A^{(1)} \subseteq A^{(2)}$ and $\delta' > 0$ such that \emph{for each input $x$}, with probability
  $1-\delta'$ (over the choice of $h(x)$) we have $1_{A^{(1)}}(x) \leq h(x) \leq 1_{A^{(2)}}(x)$. We
  will want to use this to conclude that \emph{for the entire subspace $V$} given by the algorithm
\bogolyubov, $V \subseteq 4A^{(2)}$. 

To argue this, it will be useful to consider the function $h'$ defined as 
$h' \defeq \min\{ 1_{A^{(2)}}, \max\{h, 1_{A^{(1)}}\} \}$. By definition, we always have that
$1_{A^{(1)}}(x) \leq h'(x) \leq 1_{A^{(2)}}(x)$. Also, if for each $x$, we have with probability $1-\delta'$
$1_{A^{(1)}}(x) \leq h(x) \leq 1_{A^{(2)}}(x)$, this means that
for each $x$,  $\prob{h(x) \neq h'(x)} \leq \delta'$.
The following claim gives the desired conclusion for the subspace given by the algorithm \bogolyubov.
\begin{claim} \label{clm:noisy-bogolyubov}
Let $h$ be a random function such that for $\delta' > 0$ and for sets $A^{(1)} \subseteq A^{(2)}
\subseteq \F_2^n$, we have that for every $x$ with probability at least $1-\delta'$, 
$1_{A_{(1)}}(x) \leq h(x) \leq 1_{A^{(2)}}(x)$. Also, let $\E 1_{A^{(1)}} \geq \rho$.
Let $h' = \min\{ 1_{A^{(2)}}, \max\{h, 1_{A^{(1)}}\} \}$
Let $V$ be the subspace returned by the algorithm
\bogolyubov when run with oracle access to $h$ and error parameter $\delta$. Then with probability
at least $1-\delta - \delta' \cdot n^2 \log n \cdot \poly(1/\rho, \log(1/\delta))$, we have that 
for all $x \in V$, $1_{A^{(2)}} * 1_{A^{(2)}} * 1_{A^{(2)}} * 1_{A^{(2)}} (x)\geq h' * h' * h' * h'
(x)> \rho^4/2$. In particular, with above probability, $V \subseteq 4A^{(2)}$.
\end{claim}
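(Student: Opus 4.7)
The plan is to couple the random oracle $h$ with the deterministic ``sandwich'' function $h' \defeq \min\{1_{A^{(2)}}, \max\{h, 1_{A^{(1)}}\}\}$ and transfer the guarantee of Lemma \ref{lem:bogolyubov}, applied hypothetically to $h'$, over to the actual run of \bogolyubov on $h$. Since by definition $h': \F_2^n \to \{0,1\}$ and $h' \geq 1_{A^{(1)}}$ pointwise, we have $\E h' \geq \E 1_{A^{(1)}} \geq \rho$, so $h'$ is a legitimate input for Lemma \ref{lem:bogolyubov}.

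First I would observe that if we ran \bogolyubov with oracle access to $h'$ (using the same internal randomness), then by Lemma \ref{lem:bogolyubov} the algorithm would output, with probability at least $1-\delta$, a subspace $V$ of codimension $O(\rho^{-3})$ such that $h' * h' * h' * h'(x) > \rho^4/2$ for every $x \in V$. Next I would compare this hypothetical run with the actual run on $h$. Each call to the oracle asks for the value at some point $x$, and by assumption $\Pr[h(x) \neq h'(x)] \leq \delta'$ for every fixed $x$. The algorithm makes at most $T \defeq n^2 \log n \cdot \poly(1/\rho, \log(1/\delta))$ such queries, so a union bound shows that with probability at least $1 - T\delta'$ every queried value of $h$ agrees with $h'$, in which case the two runs produce the same subspace $V$.

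Combining the two events, with probability at least $1 - \delta - T\delta'$ the subspace $V$ output by \bogolyubov on $h$ satisfies $h' * h' * h' * h'(x) > \rho^4/2$ for all $x \in V$. Since $0 \leq h'(x) \leq 1_{A^{(2)}}(x)$ pointwise and the convolution of nonnegative functions is monotone in each argument, we have the pointwise inequality
\[
h' * h' * h' * h'(x) ~\leq~ 1_{A^{(2)}} * 1_{A^{(2)}} * 1_{A^{(2)}} * 1_{A^{(2)}}(x).
\]
Because the fourfold convolution of $1_{A^{(2)}}$ is supported on the fourfold sumset $4A^{(2)}$, strict positivity on $V$ immediately forces $V \subseteq 4A^{(2)}$, yielding the claim. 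There is no serious obstacle here: the only point to be careful about is that the query bound $T$ for \bogolyubov is controlled independently of $\delta'$, so that the $T\delta'$ error absorbs cleanly into the probability bound stated in the claim; this is guaranteed by the explicit running time in Lemma \ref{lem:bogolyubov}.
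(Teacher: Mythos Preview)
Your proof is correct and follows essentially the same approach as the paper: couple the run of \bogolyubov on $h$ with a hypothetical run on $h'$ using the same internal randomness, invoke Lemma \ref{lem:bogolyubov} for $h'$, and then union-bound over the at most $T = n^2\log n\cdot\poly(1/\rho,\log(1/\delta))$ oracle queries to bound the probability that the two runs ever diverge. Your treatment is in fact slightly more explicit than the paper's about the coupling and about the pointwise monotonicity step $h'*h'*h'*h' \leq 1_{A^{(2)}}*1_{A^{(2)}}*1_{A^{(2)}}*1_{A^{(2)}}$.
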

\begin{proof}
Consider the behavior of the algorithm \bogolyubov when run with oracle access to $h'$ instead of
$h$. Since it is always true that $h' \leq 1_{A^{(2)}}$ and $\ex{h'} \geq \ex{1_{A^{(1)}}} \geq
\rho$, the algorithm outputs, with
probability $1-\delta$, a subspace $V$ such that for every $x \in V$, 
$1_{A^{(2)}} * 1_{A^{(2)}} * 1_{A^{(2)}} * 1_{A^{(2)}} (x)\geq h' * h' * h' * h' (x)> \rho^4/2$. Thus, with
probability $1-\delta$, it outputs a subspace $V$ such that $V \subseteq 4 A^{(2)}$.

Finally, we observe that the probability that the algorithm outputs different subspaces when run
with oracle access to $h$ and $h'$ is small. The probability of having different outputs is at most
the probability that $h$ and $h'$ differ on any of inputs queried by the algorithm
\bogolyubov. Since it runs in time $n^2 \log n\cdot \poly(1/\rho, \log(1/\d))$, this probability is
at most $\delta' \cdot n^2 \log n\cdot \poly(1/\rho, \log(1/\d))$. Thus, even when run with oracle
access to $h$, with probability at least $1 - \delta - \delta' \cdot n^2 \log n\cdot
\poly(1/\rho, \log(1/\d))$, the algorithm \bogolyubov outputs a subspace $V \subseteq 4A^{(2)}$. 
\end{proof}
Next we require a version of Pl\"unnecke's inequality in order to deal with the size of iterated
sumsets. For a proof we refer the interested reader to \cite{TaoVu}, or the recent short and elegant
proof by Petridis \cite{Petridis}.


\begin{lemma}[\Plunnecke's Inequality]\label{lem:plunnecke}
Let $B \subseteq \F_{2}^{n}$ be such that $|B+B| \leq K|B|$ for some $K>1$. Then for any positive
integer $k$, we have $|kB|\leq K^{k}|B|$.
\end{lemma}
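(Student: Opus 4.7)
The plan is to follow Petridis's elegant approach, which bypasses the classical graph-theoretic machinery used in Plünnecke's original argument. The strategy is to reduce the problem to bounding iterated sumsets of $B$ with a carefully chosen subset $X \subseteq B$, and then transfer the bound back to $|kB|$ via a Ruzsa-type triangle inequality. Working in $\F_2^n$ is a mild convenience, since $-A = A$ for any set $A$, so all statements involving differences become statements about sums.

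First, I would select a nonempty subset $X \subseteq B$ minimizing the ratio $|X+B|/|X|$, and set $K' := |X+B|/|X|$. By minimality applied to the choice $X = B$ itself, $K' \leq |B+B|/|B| \leq K$. Next, I would establish the key Petridis lemma: for every finite set $C \subseteq \F_2^n$,
\[ |X + B + C| \;\leq\; K' \cdot |X + C|. \]
The proof is by induction on $|C|$. The base case $|C|=1$ is immediate from the definition of $K'$. For the inductive step, one writes $C = C' \sqcup \{c\}$ and partitions $X$ into the part $X_0$ whose translate $X_0 + B + c$ is already contained in $X + B + C'$ and its complement $X \setminus X_0$; the latter's contribution can be bounded using that $(X \setminus X_0)$ is itself a subset of $X \subseteq B$, so by the extremal definition of $K'$ its sumset with $B$ has size at least $K'|X \setminus X_0|$, which when combined with a careful counting argument yields the desired bound.

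Applying this lemma repeatedly with $C = B, 2B, \ldots, (k-1)B$ gives $|X + kB| \leq K'^k \cdot |X|$. Finally, Ruzsa's triangle inequality, which in characteristic 2 reads $|X| \cdot |V + W| \leq |X+V| \cdot |X+W|$, applied with $V = B$ and $W = (k-1)B$ so that $V + W = kB$, yields
\[ |X| \cdot |kB| \;\leq\; |X+B| \cdot |X + (k-1)B| \;\leq\; K' \cdot K'^{k-1} \cdot |X|^2. \]
Dividing by $|X|$ and using $X \subseteq B$ and $K' \leq K$ gives $|kB| \leq K'^k |X| \leq K^k |B|$.

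The main obstacle is the Petridis lemma, specifically its inductive step. A naive induction that simply applied the hypothesis $|B+B| \leq K|B|$ at each stage would let the constant blow up much faster than $K^k$; the whole point of passing to the extremal subset $X$ is that the minimality of the ratio $K'$ is exactly the extremal property needed to propagate the same constant $K'$ cleanly through the induction on $|C|$. Once that lemma is in place the remaining steps are essentially bookkeeping.
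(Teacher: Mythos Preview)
The paper itself does not prove this lemma; it only cites Tao--Vu and Petridis. Your proposal follows the Petridis argument, which is precisely one of the references the paper points to, and the overall architecture---pass to an extremal $X \subseteq B$, establish $|X+B+C| \le K'|X+C|$ by induction on $|C|$, then iterate---is correct.

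There is, however, a slip in your sketch of the inductive step: extremality of $K'$ should be applied to $X_0$, not to $X \setminus X_0$. With $X_0 = \{x \in X : x + B + c \subseteq X + B + C'\}$ one has $X_0 + B + c \subseteq (X+B+c) \cap (X+B+C')$, whence
\[
|X+B+C| \;\le\; |X+B+C'| + |X+B| - |X_0 + B| \;\le\; K'|X+C'| + K'|X| - K'|X_0|,
\]
using the inductive hypothesis on the first term and the minimality bound $|X_0+B| \ge K'|X_0|$ on the last. The lower bound $|(X\setminus X_0)+B| \ge K'|X\setminus X_0|$ that you invoke points the wrong way for an upper bound on $|X+B+C|$. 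One then finishes by checking $|X+C'| + |X| - |X_0| \le |X+C|$, which follows since $\{x \in X : x + c \in X + C'\} \subseteq X_0$. Finally, your appeal to Ruzsa's triangle inequality is unnecessary: once $|X+kB| \le (K')^k|X|$ is in hand, fixing any $x_0 \in X$ gives $|kB| = |x_0 + kB| \le |X+kB| \le K^k|B|$ directly.
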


\subsection{Finding a good model set} \label{sec:model}

Again, as in Section \ref{sec:correlations} we may assume that $\phi$ is a good function satisfying
the guarantee in Lemma \ref{lem:sample-phi}.
Recall that $A_\phi=\{(x,\phi(x)): x \in A\}$, where $A$ was defined to be 
$A=\{x: |\widehat{f_x}(\phi(x))| \geq \gamma\}$. 
We will use the routine \bsgtest described in Section \ref{sec:correlations}.
We assume we have chosen a good vertex $u$ 
and parameters $\gamma_1,\gamma_2,\gamma_3$ satisfying the guarantee in Lemma \ref{lem:bsgtest} for
\bsgtest.

We will need to restrict the sets $A_{\phi}^{(1)}$ and $A_{\phi}^{(2)}$  given by Lemma
\ref{lem:bsgtest} a bit more before we can
apply Bogolyubov's lemma to find an appropriate subspace. Because the subspace sits inside the sumset
$4 A_{\phi}^{(2)}$, an element of the subspace is of the form 
$(x_1 + x_2 + x_3 + x_4, \phi(x_1) + \phi(x_2) + \phi(x_3) + \phi(x_4))$. However, unlike tuples of the form $(x,\phi(x))$, the second half of the tuple ($\phi(x_1) +
\phi(x_2) + \phi(x_3) + \phi(x_4)$) may not uniquely depend on the first ($x_1 + x_2 + x_3 +
x_4$). 

Since we will require this uniqueness property from our subspace, we restrict our sets to get new sets  
$A_{\phi}'^{(1)} \subseteq A_{\phi}'^{(2)}$. These restrictions will satisfy the following property:
for all tuples $x_1, x_2, x_3, x_4$ and $x_1', x_2', x_3', x_4'$ satisfying $x_1 +  x_2  +  x_3  +
x_4 = x_1'+ x_2'+ x_3'+ x_4'$, we also have $\phi(x_1) + \phi(x_2) + \phi(x_3) + \phi(x_4) = \phi(x_1') +
\phi(x_2') + \phi(x_3') + \phi(x_4)'$. In other words, $\phi$ is a Freiman 4-homomorphism on the
first $n$ coordinates of $A_{\phi}'^{(2)}$.
We will, in fact, need to ensure that it is a Freiman 8-homomorphism in order to obtain a truly linear map.

We shall obtain these restrictions by intersecting the original sets with a subspace, which will be defined
using a random linear map $\Gamma: \F_2^n \ra \F_2^m$ and a random element $c \in \F_2^m$ 
(for $m = O(\log(1/\e))$). This step is often called \emph{finding a good model}, and appears (in
non-algorithmic form) as Lemma 6.2 in \cite{GrTu3}.
We shall apply the restriction $\Gamma(\phi(x)) = c$ to the elements $v = (x,\phi(x))$ on which
\bsgtest outputs 1. Since we assume we have already chosen good parameters 
$u, \rho_1,\rho_2,\gamma_1, \gamma_2, \gamma_3$ for the routine
\bsgtest, we hide these parameters in the description of the procedure below.

\medskip

\fbox{
\begin{minipage}{0.9\textwidth}

\smallskip

\ModelTest(v, $\Gamma$, c)
\begin{itemize}
\item[-] Let $v=(y,\phi(y))$. 
\item[-] Answer 1 if \bsgtest returns 1  on $v$ and $\Gamma(\phi(y))=c$, and 0 otherwise. 
\end{itemize}

\end{minipage}
}

\medskip

We shall first show that there exist \emph{good} choices of $\Gamma$ and $c$ for our purposes. Let
$A_\phi^{(2)}$ be the set provided by Lemma \ref{lem:bsgtest} for a good choice of parameters. Let
$B \subseteq \F_2^n\setminus\{0\}$ be the set of all $t$ such that $(0,t) \in 16A_\phi^{(2)}$.


\begin{claim}\label{clm:sizeb}
Let $\theta'=\eps^{2448}/2^{487}$. The set $B$ has size at most $\theta'^{-1}$.
\end{claim}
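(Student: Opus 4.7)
My plan is to bound $|B|$ using Plünnecke's inequality together with a simple shift embedding. The claim is equivalent to a bound on the ``fiber over zero'' of the $16$-fold sumset $16A_\phi^{(2)}\subseteq \F_2^{2n}$, and the trick is that shifting $\{(0,t):t\in B\}$ by an arbitrary element $(x_0,\phi(x_0))\in A_\phi^{(2)}$ sends it into $17A_\phi^{(2)}$ while keeping all the shifts distinct.

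First, I would extract the input provided by Lemma~\ref{lem:bsgtest}: since $A_\phi^{(1)}(u)\subseteq A_\phi^{(2)}(u)$, we have $|A_\phi^{(2)}|\geq (\rho/6)\cdot N$ and $|A_\phi^{(2)}+A_\phi^{(2)}|\leq (2/\rho)^8\cdot N$, so the doubling constant of $A_\phi^{(2)}$ in $\F_2^{2n}$ is $K\leq (2/\rho)^8\cdot(6/\rho) = 1536/\rho^9$. Plünnecke's inequality (Lemma~\ref{lem:plunnecke}) then gives $|17A_\phi^{(2)}|\leq K^{17}\cdot|A_\phi^{(2)}|$.

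The key observation is that $A_\phi^{(2)}$ is contained in the graph $\{(x,\phi(x)):x\in\F_2^n\}$, so projection onto the first coordinate restricts to a bijection from $A_\phi^{(2)}$ onto some set $A\subseteq\F_2^n$ with $|A|=|A_\phi^{(2)}|$. For every pair $(x_0,t)\in A\times B$ the element
\[
(x_0,\phi(x_0))+(0,t) \;=\; (x_0,\phi(x_0)+t)
\]
lies in $17A_\phi^{(2)}$, and distinct pairs $(x_0,t)$ produce distinct such elements (the first coordinate recovers $x_0$, then the second recovers $t$). Therefore $|A|\cdot|B|\leq|17A_\phi^{(2)}|\leq K^{17}|A|$, which yields $|B|\leq K^{17}$.

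It then remains to substitute $\rho=\eps^{16}/4$ to verify that $K^{17}\leq 2^{487}/\eps^{2448}=1/\theta'$, which is a routine bookkeeping computation (using $\rho^{-9}=2^{18}/\eps^{144}$ and $1536^{17}<2^{180}$, one gets $K^{17}\leq 2^{486}/\eps^{2448}$). The structural heart of the argument is contained entirely in the short shift-and-Plünnecke embedding $A\times B\hookrightarrow 17A_\phi^{(2)}$, so I do not anticipate any real obstacle beyond checking the constants.
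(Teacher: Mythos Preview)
Your proposal is correct and essentially identical to the paper's proof: the paper also observes that $A_\phi^{(2)}+(0,B)\subseteq 17A_\phi^{(2)}$ with $|A_\phi^{(2)}+(0,B)|=|A_\phi^{(2)}|\cdot|B|$ (since $A_\phi^{(2)}$ is a graph), applies Pl\"unnecke with the same doubling constant $3(2/\rho)^9=1536/\rho^9$, and concludes $|B|\leq (1536/\rho^9)^{17}\leq 2^{181}/\rho^{153}=\theta'^{-1}$. Your constant tracking is even slightly sharper than the paper's.
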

\begin{proof}
Write $(0,B)$ for the set of all $(0,b), b\in B$. Since $A_\phi^{(2)}$ is of the form $(x,\phi(x))$ for some function $\phi$, we have $|A_\phi^{(2)}+(0,B)|=|A_\phi^{(2)}||B|$, but at the same time $A_\phi^{(2)}+(0,B) \subseteq 17A_\phi^{(2)}$. By Lemma \ref{lem:plunnecke} we have $|17A_\phi^{(2)}| \leq (3(2/\rho)^9)^{17} |A_\phi^{(2)}| \leq (2^{181}/\rho^{153}) |A_\phi^{(2)}|$ since $A_\phi^{(2)}$ has small sumset, and therefore $|B| \leq 2^{181}/\rho^{153}=\theta'^{-1}$, since $\rho=\eps^{16}/4$.
\end{proof}


\begin{claim}\label{clm:probb}
Let $m=2\lceil\log_2 \theta'^{-1}\rceil$. Then with probability at least 1/2 a random linear map
$\Gamma: \F_2^n \ra \F_2^m$ is non-zero on all of $B$.
\end{claim}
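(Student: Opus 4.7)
The plan is to use a simple first-moment / union bound argument. The key observation is that for any fixed nonzero $b \in \F_2^n$, if $\Gamma : \F_2^n \to \F_2^m$ is a uniformly random linear map (equivalently, an $m \times n$ matrix with entries chosen i.i.d.\ uniformly in $\F_2$), then $\Gamma(b)$ is uniformly distributed over $\F_2^m$. Indeed, picking any coordinate $i$ with $b_i = 1$, the $j$-th entry of $\Gamma(b)$ is the sum of the $(j,i)$-th matrix entry with an independent random bit, hence itself uniform, and the $m$ coordinates of $\Gamma(b)$ are independent because they come from disjoint rows of the matrix. Therefore $\Pr[\Gamma(b) = 0] = 2^{-m}$.

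Next I would apply a union bound over $b \in B$. Using Claim \ref{clm:sizeb}, which gives $|B| \leq \theta'^{-1}$, we obtain
\[
\Pr[\exists b \in B : \Gamma(b) = 0] ~\leq~ |B| \cdot 2^{-m} ~\leq~ \theta'^{-1} \cdot 2^{-m}.
\]
With the choice $m = 2\lceil \log_2 \theta'^{-1} \rceil$, we have $2^m \geq \theta'^{-2}$, so the right-hand side is at most $\theta' \leq 1/2$ (in fact vastly smaller, given that $\theta' = \eps^{2448}/2^{487}$). Hence with probability at least $1/2$, we have $\Gamma(b) \neq 0$ for every $b \in B$, which is exactly the claim.

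There is no real obstacle here; the only point to be careful about is justifying that $\Gamma(b)$ is genuinely uniform on $\F_2^m$ for every fixed nonzero $b$, which is immediate from the independence and uniformity of the entries of the random matrix representing $\Gamma$. Note that the factor of $2$ in the exponent $m = 2\lceil \log_2 \theta'^{-1} \rceil$ is more than enough for the union bound; the weaker choice $m = \lceil \log_2 \theta'^{-1} \rceil + 1$ would also suffice, but the stated bound leaves generous slack that will presumably be exploited elsewhere in the argument (e.g., when one needs $m$ small enough for the model to be useful, yet with $\Gamma$ injective on $B$ by a comfortable margin).
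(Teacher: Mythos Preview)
Your proof is correct and takes essentially the same approach as the paper: both argue that $\Pr[\Gamma(t)=0]=2^{-m}$ for each fixed $t\in B$ and then apply a union bound over $|B|\leq \theta'^{-1}$ to conclude. Your write-up is in fact slightly more detailed in justifying the uniformity of $\Gamma(b)$, but the argument is the same.
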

\begin{proof}
Let $\Gamma: \F_2^n \ra \F_2^m$ be a randomly chosen linear transformation. Let $E_t$ be the event
that $\Gamma(t)=0$. Clearly $\Psymb(E_t)\leq 2^{-m}$ for each $t \in B$, and thus the probability
that $\Gamma$ is non-zero on all of $B$ is $\Psymb(\cap_t (E_t^C))=\Psymb((\cup_t
E_t)^C)=1-\Psymb(\cup_t E_t)\geq 1-\sum_t \Psymb(E_t) \geq 1-|B|2^{-m} \geq 1/2$ by choice of
$m$. So with probability at least $1/2$ we have a map $\Gamma$ that is non-zero on $B$. 
\end{proof}


\begin{claim}\label{clm:subsetsize}
Let $\theta=\theta'^2 \rho/12$, where $\theta'$ is the constant obtained in Claim \ref{clm:sizeb}, that is, we set $\theta=\eps^{4912}/(3\cdot 2^{977})$. Fix a map $\Gamma$ as in Claim \ref{clm:probb}. Then with probability at least $\theta$ a randomly chosen element $c \in\F_2^m$ is such that the set
\[A_\phi'^{(1)}\defeq\{(x,\phi(x)) \in A_\phi^{(1)}: \Gamma(\phi(x)) =c \}\]
has size at least $\theta N$.
\end{claim}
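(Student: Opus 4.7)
The plan is to carry out a straightforward averaging argument over $c \in \F_2^m$ and then convert the resulting expected-density bound into a lower bound on the probability in the statement via a one-sided (``reverse'') Markov inequality.

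First, I would let $N_c = |\{(x,\phi(x)) \in A_\phi^{(1)} : \Gamma(\phi(x)) = c\}|$ for each $c \in \F_2^m$. Because the fibers $\{x : \Gamma(\phi(x)) = c\}$ partition $\F_2^n$ as $c$ ranges over $\F_2^m$, one has $\sum_c N_c = |A_\phi^{(1)}|$, and Lemma \ref{lem:bsgtest} supplies $|A_\phi^{(1)}| \geq \rho N/6$ for the good parameters fixed earlier. Dividing by $2^m$ and using the choice $m = 2\lceil \log_2 \theta'^{-1} \rceil$ from Claim \ref{clm:probb} (so that $2^m = O(\theta'^{-2})$), I would conclude that the expected value of $N_c/N$ over a uniform $c \in \F_2^m$ is at least $\Omega(\rho\theta'^2)$, i.e.~a constant multiple of $\theta$.

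Next, I would pass from this expectation bound to the desired probability statement using the elementary ``reverse Markov'' fact that for a $[0,1]$-valued random variable $Y$ with mean $\mu$ and any $\lambda \in (0,1)$,
\[\mu \;=\; \ex{Y \cdot \mathbf{1}_{Y \geq \lambda\mu}} + \ex{Y \cdot \mathbf{1}_{Y < \lambda\mu}} \;\leq\; \prob{Y \geq \lambda\mu} + \lambda\mu,\]
so that $\prob{Y \geq \lambda\mu} \geq (1-\lambda)\mu$. Applied with $Y = N_c/N$ for uniform $c$ and $\lambda$ close to $1/2$, this yields $\prob{N_c/N \geq \theta} \geq \theta$ for the stated value $\theta = \theta'^2\rho/12$.

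I do not anticipate any substantive obstacle here; the argument is elementary once Lemma \ref{lem:bsgtest} has provided the lower bound on $|A_\phi^{(1)}|$ and Claims \ref{clm:sizeb}--\ref{clm:probb} have pinned down $2^m$. The only care required is in tracking the numerical constants: the stated constant $1/12$ is tight when $\theta'^{-1}$ is exactly a power of two (so that $2^m = \theta'^{-2}$), and in the general case one may have to enlarge it by a factor of at most $4$ to absorb the slack in the bound $2^m \leq 4\theta'^{-2}$. Either way, the scaling $\theta = \Theta(\rho\theta'^2)$ used downstream is unaffected.
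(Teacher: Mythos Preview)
Your proposal is correct and matches the paper's own argument essentially line for line: the paper's proof is the one-sentence observation that the expected size over uniform $c$ is $|A_\phi^{(1)}|/2^m \geq (\rho N/6)\cdot\theta'^{2} = 2\theta N$, followed by the same reverse Markov step you spell out. Your remark about the possible factor-of-$4$ slack from the ceiling in $m = 2\lceil \log_2 \theta'^{-1}\rceil$ is a fair point---the paper tacitly takes $2^m = \theta'^{-2}$---but, as you note, it is immaterial for the downstream use.
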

\begin{proof}
The expected size of this set is at least $|A_\phi^{(1)}|/2^m \geq (\rho N/6)/ (\theta'^{-2}) \geq (\theta'^{2} \rho/6) N$, so with probability $\theta$ we can get it to be of size at least $\theta N$.
\end{proof}

We shall of course also define
\[A_\phi'^{(2)}\defeq\{(x,\phi(x)) \in A_\phi^{(2)}: \Gamma(\phi(x)) =c \},\]
and since $ A_\phi^{(1)}\subseteq  A_\phi^{(2)}$, we have a similar containment for the new subsets,
immediately giving a similar lower bound on the size of $A_\phi'^{(2)}$.

We summarize the above claims in the following refinement of Lemma \ref{lem:bsgtest}.


\begin{lemma}\label{lem:modeltest}
Let the calls to \bsgtest in \ModelTest be with a good choice of parameters
$u,\rho_1,\rho_2,\gamma_1,\gamma_2,\gamma_3$ and with error parameter $\delta > 0$. Then, there
exist two sets  $A_{\phi}'^{(1)} \subseteq A_{\phi}'^{(2)}$,  the output of \ModelTest on input 
$v = (y,\phi(y))$ satisfies the
following with probability $1-\delta$.
\begin{itemize}
\item $\ModelTest(v,\Gamma,c) = 1 
\quad\Longrightarrow\quad v \in A_{\phi}'^{(2)}$.
\item $\ModelTest(v,\Gamma,c) = 0 
\quad\Longrightarrow\quad v \notin A_{\phi}'^{(1)}$.
\end{itemize}
Moreover, with probability $\theta/2$ over the choice of $\Gamma$ and $c$ , we have
\[ |A_{\phi}'^{(1)}| \geq \theta N \quad \text{and} 
\quad \phi \text{ is a Freiman 8-homomorphism on }   A^{(2)},\quad\]
where we denote the projection of $A_\phi'^{(2)}$ onto the first $n$ coordinates by $A^{(2)}$.
\end{lemma}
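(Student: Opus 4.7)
The plan is to combine the three preceding claims with a short structural argument for the Freiman 8-homomorphism property. I would take as my definitions the sets
\[A_\phi'^{(i)} \defeq \{(x,\phi(x)) \in A_\phi^{(i)} : \Gamma(\phi(x)) = c\}, \quad i=1,2,\]
where $A_\phi^{(1)} \subseteq A_\phi^{(2)}$ are the sets supplied by Lemma \ref{lem:bsgtest} for the fixed good choice of $u,\rho_1,\rho_2,\gamma_1,\gamma_2,\gamma_3$. The containment $A_\phi'^{(1)} \subseteq A_\phi'^{(2)}$ is then immediate. The sandwich property for \ModelTest follows almost formally from the sandwich property for \bsgtest together with the observation that the extra check $\Gamma(\phi(y))=c$ is deterministic: if \ModelTest outputs $1$ on $v=(y,\phi(y))$ then \bsgtest output $1$ (so $v \in A_\phi^{(2)}$ with probability at least $1-\delta$ by Lemma \ref{lem:bsgtest}) \emph{and} $\Gamma(\phi(y))=c$, placing $v$ in $A_\phi'^{(2)}$; if \ModelTest outputs $0$, then either \bsgtest output $0$ (so $v \notin A_\phi^{(1)}$, hence $v \notin A_\phi'^{(1)}$) or the $\Gamma$-check failed (so $v \notin A_\phi'^{(1)}$ by definition).

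For the quantitative size bound I would chain Claims \ref{clm:sizeb}--\ref{clm:subsetsize}: with probability at least $1/2$ over $\Gamma$, $\Gamma$ is non-zero on the set $B$ of ``bad difference'' elements; conditional on such a good $\Gamma$, Claim \ref{clm:subsetsize} gives $|A_\phi'^{(1)}| \geq \theta N$ with probability at least $\theta$ over $c$. Taking a product yields the desired probability $\theta/2$ over the joint choice of $(\Gamma,c)$.

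The main structural step, and the whole reason for introducing $\Gamma$ and $c$, is the Freiman 8-homomorphism property, which I expect to be the one piece of genuine content (the rest is bookkeeping). Suppose $x_1,\dots,x_8,x_1',\dots,x_8' \in A^{(2)}$ with $\sum_i x_i = \sum_i x_i'$, and set $t = \sum_i \phi(x_i) + \sum_i \phi(x_i')$ (recall subtraction and addition coincide in characteristic $2$). Then
\[(0,t) \;=\; \sum_i (x_i,\phi(x_i)) + \sum_i (x_i',\phi(x_i')) \;\in\; 16 A_\phi^{(2)},\]
since each of the sixteen summands lies in $A_\phi'^{(2)} \subseteq A_\phi^{(2)}$. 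Moreover, each $(x_i,\phi(x_i))$ and $(x_i',\phi(x_i'))$ is in $A_\phi'^{(2)}$, so $\Gamma(\phi(x_i)) = \Gamma(\phi(x_i')) = c$; summing the sixteen equalities gives $\Gamma(t) = 16c = 0$ in $\F_2^m$. If $t \neq 0$, then by the definition of $B$ we would have $t \in B$, contradicting the fact (from Claim \ref{clm:probb}) that $\Gamma$ is non-zero on all of $B$. Hence $t=0$, which is precisely the statement that $\phi$ is a Freiman 8-homomorphism on $A^{(2)}$.
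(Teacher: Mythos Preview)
Your proposal is correct and follows essentially the same approach as the paper: the sandwich property is deduced from Lemma~\ref{lem:bsgtest} together with the deterministic $\Gamma$-check, the probability bound is obtained by combining Claims~\ref{clm:probb} and~\ref{clm:subsetsize}, and the Freiman 8-homomorphism property is established via the observation that any $(0,t) \in 16A_\phi^{(2)}$ arising from elements of $A_\phi'^{(2)}$ has $\Gamma(t)=16c=0$, forcing $t \notin B$ and hence $t=0$.
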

\begin{proof}
If \ModelTest outputs 1, then $v=(y, \phi(y)) \in A_\phi^{(2)}$ with probability $1-\delta$ and
$\Gamma(\phi(y)) =c$, so $v \in A_\phi'^{(2)}$. Similarly, if \ModelTest outputs 0 then either
\bsgtest gave 0 or $\Gamma(\phi(y)) \neq c$, so in any case $v \not\in A_\phi'^{(1)}$.

By Claims \ref{clm:subsetsize} and \ref{clm:probb}, with probability at least $\theta/2$ over the
choice of $\Gamma$ and $c$, $|A_{\phi}'^{(1)}| \geq \theta N$ and $\Gamma$ is non-zero on all of
$B$.  It remains to verify that $\phi$ is a Freiman 8-homomorphism on $A^{(2)}$ in this case. 

For any $(0,t) \in 16A_\phi'^{(2)}$, we have $t \neq 0 \implies t \in B$ by definition.
Also $\Gamma(t)=16 c = 0$ by linearity of $\Gamma$. Since $\Gamma$ is non-zero on all of $B$, we
must have $t=0$. 
We also have $16A_\phi'^{(2)}=8A_\phi'^{(2)}+8A_\phi'^{(2)}$, and so if we take
$(0,t)=(x_1+\dots+x_8+x_1'+ \dots x_8',\phi(x_1)+\dots+\phi(x_8)+\phi(x_1')+ \dots \phi(x_8'))$, we
have that $x_1+\dots+x_8+x_1'+ \dots x_8'=0$ implies $\phi(x_1)+\dots+\phi(x_8)+\phi(x_1')+ \dots
\phi(x_8')=0$, making $\phi$ a Freiman 8-homomorphism on $A^{(2)}$.
\end{proof}

\subsection{Obtaining a linear choice function on a subspace} \label{sec:lin-locchoice}

As before, we now identify a linear transform (actually, an affine transform) that selects large
Fourier coefficients in derivatives. However, as opposed to Section \ref{sec:correlations} where we
defined a linear transform on the whole of $\F_2^n$, here we will just define it on a
\emph{coset a subspace $V$} such that $\cod(V) = \poly(1/\e)$. 

In particular, we will prove the following local version of Lemma \ref{lem:lin-choice}.


\begin{lemma}\label{lem:lin-locchoice}
Let $\phi$ be as above and let the parameters for \bsgtest and \ModelTest be so that they satisfy
the guarantees of lemmas \ref{lem:bsgtest} and \ref{lem:modeltest}. Let $\delta > 0$ and $\e$ be as above. 
Then there exists an  algorithm running in time 
$O(n^4 \log^2 n \cdot \exp(1/\e^K) \cdot \log^2(1/\delta))$
which outputs with probability at least $1-\delta$ a subspace $V$ of codimension at most $\eps^{-C}$ as well as
a linear linear map $x \mapsto Tx$ and $c_1,c_2 \in \F_2^n$
satisfying $\Ex{x \in V+c_1}{  \widehat{f_x}^2(Tx+Tc_1+c_2
)}\geq \eps^{C}$.
\end{lemma}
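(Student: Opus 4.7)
The plan is to mimic the proof of Lemma \ref{lem:lin-choice}, replacing the use of the Freiman-Ruzsa theorem (which enclosed $A_\phi^{(2)}$ in an exponentially larger subspace) by the Bogolyubov-type argument of Lemma \ref{lem:bogolyubov} together with Claim \ref{clm:noisy-bogolyubov}, which instead locates a subspace of codimension only $\poly(1/\e)$ \emph{inside} the sumset $4A^{(2)}$. First, fix good parameters $u, \gamma_1, \gamma_2, \gamma_3, \Gamma, c$ so that Lemmas \ref{lem:bsgtest} and \ref{lem:modeltest} apply, producing sets $A_\phi'^{(1)} \subseteq A_\phi'^{(2)}$ with $|A_\phi'^{(1)}| \geq \theta N$ for $\theta = \poly(\e)$ and with $\phi$ a Freiman 8-homomorphism on the first-coordinate projection $A^{(2)}$. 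Then define the noisy oracle $h(x) \defeq \ModelTest((x,\phi(x)), \Gamma, c)$ and call \bogolyubov on $h$ with density parameter $\theta$ (and internal error chosen small enough to survive every query). By Claim \ref{clm:noisy-bogolyubov}, with high probability one obtains a subspace $V \leqslant \F_2^n$ of codimension $O(\theta^{-3}) = \e^{-C}$ satisfying $V \subseteq 4 A^{(2)}$; moreover each $v \in V$ admits at least $\theta^4 N^3/2$ representations $v = x_1 + x_2 + x_3 + x_4$ with $x_i \in A^{(2)}$.

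Next, define $T' : V \to \F_2^n$ by $T'(v) = \phi(x_1) + \phi(x_2) + \phi(x_3) + \phi(x_4)$ for any such 4-fold representation; well-definedness follows at once from the Freiman 4-homomorphism property. A short computation using Freiman 8-homomorphism on padded identities (realizing $v_1 + v_2 = (v_1+v_2) + 0$ as an equality of two 8-sums in $A^{(2)}$) then shows $T'(v_1) + T'(v_2) + T'(v_1+v_2) + T'(0) = 0$ for all $v_1, v_2 \in V$, so the map $T(v) \defeq T'(v) + T'(0)$ is linear on $V$, and we extend $T$ linearly to all of $\F_2^n$ by Gaussian elimination. Algorithmically, $T$ is evaluated on a basis of $V$ by sampling $O(\theta^{-4} \cdot \log(1/\delta))$ random triples $(x_1, x_2, x_3)$ and testing via \bsgtest whether all of $x_1, x_2, x_3, v + x_1 + x_2 + x_3$ lie in $A^{(2)}$; each trial succeeds with probability at least $\theta^4/2$ by the convolution lower bound above.

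Finally, enumerate the $2^{\cod V} = \exp(1/\e^K)$ cosets of $V$. For each representative $c_1$, set $c_2 \defeq T'(0) + \phi(c_1)$ and estimate $\Ex{x \in V+c_1}{\widehat{f_x}^2(Tx + Tc_1 + c_2)}$ by sampling; output the quadruple $(V, T, c_1, c_2)$ with the largest estimated correlation. For correctness, observe that for any $a \in (V+c_1) \cap A^{(2)}$, writing $a + c_1 = y_1+y_2+y_3+y_4$ with $y_i \in A^{(2)}$ and padding with $w + w = 0$ for arbitrary $w \in A^{(2)}$ gives the 8-tuple identity $a + c_1 + y_1 + y_2 + y_3 + y_4 + w + w = 0$; applying 8-homomorphism and simplifying yields $\phi(a) = Ta + Tc_1 + c_2$. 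Hence $\widehat{f_a}^2(Ta + Tc_1 + c_2) \geq \gamma^2$ for every $a \in (V+c_1) \cap A^{(1)}$, and by pigeonhole some coset contains $A^{(1)}$ elements of relative density at least $\theta$, yielding correlation at least $\theta \gamma^2 = \e^C$. The running time is dominated by $O(n^4 \log^2 n \cdot \poly(1/\e, \log(1/\delta)))$ for \bogolyubov together with its nested oracle calls, plus $\exp(1/\e^K) \cdot n \cdot \poly(1/\e, \log(1/\delta))$ for the coset enumeration, matching the claimed bound.

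\textbf{Main obstacle.} The chief delicacy is the careful bookkeeping of Freiman-homomorphism identities: one must repeatedly pad sums of differing lengths with null pairs $w + w$ (exploiting characteristic 2) so that the 8-homomorphism property can be applied, both to prove linearity of $T$ on $V$ and to derive $\phi(a) = Ta + Tc_1 + c_2$ on a coset. The approximate, noisy nature of the oracle $h$ is handled cleanly by Claim \ref{clm:noisy-bogolyubov}, though one must set the internal error parameters of \ModelTest small enough to absorb all $n^2 \log n \cdot \poly(1/\e)$ queries made by \bogolyubov.
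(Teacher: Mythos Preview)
Your approach mirrors the paper's closely: both apply \bogolyubov to the output of \ModelTest to locate a subspace $V \subseteq 4A^{(2)}$ of codimension $O(\theta^{-3})$, use the Freiman 8-homomorphism property of $\phi$ on $A^{(2)}$ to show that $v \mapsto \sum_i \phi(x_i)$ is well-defined and linear on $V$, and then pass to a coset of $V$ rich in $A^{(1)}$-points. The paper differs only in mechanics: it builds $V$ as the span of randomly sampled accepted 4-sums landing in the Bogolyubov subspace $V_0$ (Claims~\ref{clm:acceptance-subspace}--\ref{clm:loc-large-span}) rather than computing $T$ on a fixed basis, and it locates the good coset by sampling \ModelTest-accepted points and bucketing them into cosets (Claim~\ref{clm:find-coset}, with \Plunnecke's inequality in Claim~\ref{clm:good-coset} bounding the number of relevant cosets) rather than brute-force enumerating all $2^{\cod V}$ cosets. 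Both variants fit inside the stated running time.

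There is, however, a genuine gap in your coset step. Your derivation of $\phi(a) = Ta + Tc_1 + c_2$ applies the 8-homomorphism to the padded identity $a + c_1 + y_1 + y_2 + y_3 + y_4 + w + w = 0$, which requires \emph{all eight} summands to lie in $A^{(2)}$. You have $a, y_i, w \in A^{(2)}$ by construction, but $c_1$ is an arbitrary coset representative of $V$ and need not lie in $A^{(2)}$; when it does not, $\phi(c_1)$ is just a randomly sampled Fourier coefficient of $f_{c_1}$ carrying no structural information, and your formula $c_2 = T'(0) + \phi(c_1)$ yields the wrong affine shift, so the estimated correlation on the ``good'' coset may well be tiny. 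The fix is to choose each $c_1$ from within $A^{(2)}$: for each coset, sample a few elements and run \ModelTest on them, taking an accepted one as $c_1$ (and skipping the coset if none is accepted, since it then has negligible $A^{(2)}$-density). This is precisely what the paper does in Claim~\ref{clm:find-coset}, drawing the candidate $c = (c_1,c_2)$ directly from the \ModelTest-accepted set so that $c \in A_\phi'^{(2)}$ automatically. A smaller slip in the same spirit: when sampling 4-representations of basis vectors you must test membership via \ModelTest, not \bsgtest, since the 8-homomorphism is only guaranteed on the \emph{restricted} set $A^{(2)}$, not on the larger $A_\phi^{(2)}$.
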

Throughout the argument that follows, we shall assume that we have already chosen good parameters for
\bsgtest and \ModelTest so that the conclusions of Lemmas \ref{lem:bsgtest} and \ref{lem:modeltest}
hold. We also assume we have access to a good function $\phi$ as given by Lemma \ref{lem:sample-phi}.

To find the subspace $V$ we will apply Bogolyubov's lemma to the set identified by the procedure
\ModelTest. We shall look at the second half of the tuples in this subspace (coordinates $n+1$ to
$2n$) to find a linear choice function.

Let $h:\F_2^n \ra \{0,1\}$ be the (random) function defined by $h(y)=1$ if
$\ModelTest(u,(y,\phi(y)),\Gamma,c)=1$ and 0 otherwise. The error parameter $\delta'$ for \ModelTest
is taken to be $\delta/n^3$.
We shall apply the algorithm \bogolyubov 
from Lemma \ref{lem:bogolyubov} with queries to $h$ and with error parameter $\delta_1 = \delta/20$.

Note that the function $h$ is defined on points in $\F_2^n$.  Let $A^{(1)}$ and $A^{(2)}$ denote
projection on the first $n$ coordinates of the sets $A_{\phi}'^{(1)}$ and $A_{\phi}'^{(2)}$ given by
Lemma \ref{lem:modeltest}. 

Since the last $n$ coordinates are a function (namely $\phi$) of the first $n$ coordinates, we also have 
$|A_{\phi}'^{(1)}|\geq \theta N$, for $\theta$ a function of $\eps$ as defined in Claim
\ref{clm:subsetsize}. Also, with probability $1-\delta'$ for each input $x$, the inequality 
$1_{A^{(1)}}(x) \leq h(x) \leq 1_{A^{(2)}}(x)$ holds.

By Claim \ref{clm:noisy-bogolyubov},  we obtain a subspace $V_0$  of codimension $\theta^{-3}$ 
such that with probability at least 
$1-\delta_1 - \delta' \cdot n^2 \log n \cdot \poly(1/\theta,\log(1/\delta_1))  > 1-\delta/10$ , 
we have $V_0 \subseteq 4A^{(2)}$. Thus, each element $x \in V_0$ can we written as 
$x_1 + x_2 + x_3 + x_4$ for $x_1,x_2,x_3,x_4 \in A^{(2)}$. We next show that the set
\[ Z_0 \defeq
\inbraces{(x_1 + x_2 + x_3 + x_4, \phi(x_1) + \phi(x_2) + \phi(x_3) + \phi(x_4)) 
~\left\lvert~
\begin{array}{c}
x_1 + x_2 + x_3 +x_4 \in V_0, \\
x_1,x_2,x_3,x_4 \in A^{(2)}
\end{array}
\right.
}\]
is also a subspace of $\F_2^{2n}$. Observe that the value of $\phi(x_1) + \phi(x_2) + \phi(x_3) + \phi(x_4)$ is
uniquely determined by $x_1 + x_2 + x_3 + x_4$.
\begin{claim}\label{clm:zeta-linear}
There exists a linear map $\zeta: V_0 \to \F_2^n$ satisfying for any $x_1,x_2,x_3,x_4 \in A^{(2)}$ such
that $x_1+x_2+x_3+x_4 \in V_0$, we have 
$\phi(x_1)+\phi(x_2)+\phi(x_3)+\phi(x_4) = \zeta(x_1+x_2+x_3+x_4)$. Thus, the set $Z_0$ can be written
as   $Z_0 = \inbraces{(x,\zeta(x)) \suchthat x \in V_0}$ and is a subspace of $\F_2^n$.
\end{claim}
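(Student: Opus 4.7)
The plan is to define $\zeta$ directly from a $4$-fold representation of $v \in V_0$, verify it is unambiguous, then check linearity; once both are in hand, $Z_0$ is the graph of $\zeta$ over $V_0$ and hence automatically a subspace of $\F_2^{2n}$. Everything rests on two inputs already established: the containment $V_0 \subseteq 4A^{(2)}$ (from Claim \ref{clm:noisy-bogolyubov}) and the fact that $\phi$ is a Freiman $8$-homomorphism on $A^{(2)}$ (from Lemma \ref{lem:modeltest}).

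For well-definedness of $\zeta(v)$, I would take two representations $v = x_1+x_2+x_3+x_4 = x_1'+x_2'+x_3'+x_4'$ with all $x_i, x_i' \in A^{(2)}$. The hypothesis of the Freiman $8$-homomorphism asks for an identity with $8$ elements of $A^{(2)}$ on each side, so I would pad both sides by $a+a+a+a$ for some fixed $a \in A^{(2)}$; in characteristic $2$ this sum is zero, so the identity is preserved. Applying the homomorphism property and using $4\phi(a)=0$ yields $\phi(x_1)+\phi(x_2)+\phi(x_3)+\phi(x_4) = \phi(x_1')+\phi(x_2')+\phi(x_3')+\phi(x_4')$. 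Thus $\zeta(v) \defeq \phi(x_1)+\phi(x_2)+\phi(x_3)+\phi(x_4)$ is a well-defined function on $V_0$.

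For linearity, given $v,w \in V_0$, I would fix representations $v = x_1+x_2+x_3+x_4$, $w = y_1+y_2+y_3+y_4$, and — using that $v+w \in V_0 \subseteq 4A^{(2)}$ — also $v+w = z_1+z_2+z_3+z_4$ with $z_j \in A^{(2)}$. The resulting relation $x_1+x_2+x_3+x_4+y_1+y_2+y_3+y_4 = z_1+z_2+z_3+z_4$ has eight $A^{(2)}$-elements on the left and only four on the right, so I would pad the right-hand side with four copies of $z_1$ (again summing to zero in characteristic $2$). Applying the Freiman $8$-homomorphism property now gives $\phi(x_1)+\cdots+\phi(y_4) = \phi(z_1)+\phi(z_2)+\phi(z_3)+\phi(z_4)$, i.e.\ $\zeta(v)+\zeta(w)=\zeta(v+w)$. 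Hence $\zeta$ is linear, $Z_0 = \{(x,\zeta(x)) : x \in V_0\}$, and $Z_0$ is a subspace.

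The only real subtlety is the mismatch between the $4$-fold representation that $V_0 \subseteq 4A^{(2)}$ supplies and the $8$-fold hypothesis in the Freiman property. I expect this to be the main obstacle in general, but in characteristic $2$ it is resolved by the above padding trick, which crucially exploits $4a = 0$ for any $a$. (Note that it is precisely for this reason that Lemma \ref{lem:modeltest} was formulated with the stronger $8$-homomorphism conclusion rather than a $4$-homomorphism; the latter would not leave enough room for the padding step in the linearity argument.)
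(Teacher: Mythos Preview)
Your proposal is correct and follows essentially the same approach as the paper. The paper phrases the two steps slightly more abstractly---for well-definedness it simply notes that a Freiman $8$-homomorphism is automatically a Freiman $4$-homomorphism, and for linearity it observes that an $8$-homomorphism on $A^{(2)}$ induces a $2$-homomorphism $\zeta$ on $V_0 \subseteq 4A^{(2)}$, then uses $x+y=0+(x+y)$ together with $\zeta(0)=0$---but your explicit padding arguments are exactly what underlies these facts in characteristic $2$, so the content is the same.
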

\begin{proof}
We first show that the value of $\phi(x_1)+\phi(x_2)+\phi(x_3)+\phi(x_4)$ is uniquely determined by 
$x_1+x_2+x_3+x_4$. By Lemma \ref{lem:modeltest}, we know that $\phi$ is a Freiman 8-homomorphism on
$A^{(2)}$ and hence it is also a Freiman 4-homomorphism. In particular, if for $x_1,x_2,x_3,x_4 \in
A^{(2)}$  and $x_1',x_2',x_3',x_4' \in A^{(2)}$, we have that 
$x_1+x_2+x_3+x_4 = x_1'+x_2'+x_3'+x_4'$, then it also holds that 
$\phi(x_1)+\phi(x_2)+\phi(x_3)+\phi(x_4) = \phi(x_1')+\phi(x_2')+\phi(x_3')+\phi(x_4')$. Thus, we
can write the set $Z_0$ as $\{ (x,\zeta(x)) \suchthat x \in V_0\}$, where $\zeta$ if some
function on $V$. We next show that $\zeta$ must be a linear function.

We first show that $\zeta(0) = 0$. Since $0 \in V_0$, we must have elements $x_1,x_2,x_3,x_4 \in A^{(2)}$
with the property that $x_1+x_2+x_3+x_4=0$, in other words, $x_1+x_2=x_3+x_4$. But since $\phi$ is also a Freiman 
2-homomorphism, we get that $\phi(x_1)+\phi(x_2) =\phi(x_3)+\phi(x_4)$, which implies that
$\phi(x_1)+\phi(x_2)+\phi(x_3)+\phi(x_4) = \zeta(0) = 0$.

Since $\phi$ is a Freiman 8-homomorphism on $A^{(2)}$ and $V_0 \subseteq 4A^{(2)}$, it follows
that $\zeta$ is a Freiman 2-homomorphism on $V_0$. Since $V_0$ is closed under addition, for 
$x,y \in V_0$ we can write $x + y = 0 + (x+y)$ with all four summands in $V_0$. Since $\zeta$ is
2-homomorphic, we get that $\zeta(x) + \zeta(y) = \zeta(0) + \zeta(x+y) = \zeta(x+y)$.
\end{proof}

We would like to use the linear map $\zeta$ to obtain the choice function on a coset of the space
$V_0$. However, the problem is that we do not \emph{know} the function $\zeta$. We get around this obstacle 
by generating random tuples $(x_1+x_2+x_3+x_4,\phi(x_1)+\phi(x_2)+\phi(x_3)+\phi(x_4))$ such that
$x_1+x_2+x_3+x_4$ and each $x_i \in A^{(2)}$. We show that for sufficiently many samples, the
sampled points span a large subspace $V$ of $V_0$. Since $\phi(x_1)+\phi(x_2)+\phi(x_3)+\phi(x_4) =
\zeta(x_1+x_2+x_3+x_4)$ on $V_0$, we will be able to obtain the desired linear map on the subspace $V$.

We sample a point as follows. For the $j^{th}$ sample, we generate four pairs 
$(x_1^{j}, \phi(x_1^j)), \ldots, (x_4^{j}, \phi(x_4^j))$. We accept the sample if all four pairs are
accepted by \ModelTest and if $x_1^j + x_2^j + x_3^j + x_4^j \in V$. If a sample is accepted, we
store the point $y^{j} = x_1^j + x_2^j + x_3^j + x_4^j$ and 
$\zeta(y^{j}) = \phi(x_1^j) + \phi(x_2^j) + \phi(x_3^j) + \phi(x_4^j)$.

Note that membership in $V_0$ can be tested efficiently since we know the basis for
$\ortho{V}_0$. We first estimate the probability that a point $(y,\zeta(y))$ for $y \in V_0$ is accepted
by the above test. This also gives a bound on the number of samples to be tried so that at least 
$t=O(n^2)$ samples are accepted.

\begin{claim}\label{clm:acceptance-subspace}
For a $y \in V_0$, the probability that a sample is accepted by the above procedure 
and the stored pair is equal to $(y,\zeta(y))$ is at least $\theta^4/4N$. Moreover, for some sufficiently large constant $C$, the probability that out of $C \exp(1/\theta^3) \cdot (1/\theta^4) \cdot t
\cdot \log(10/\delta)$ samples fewer than $t$ are accepted is at most $\delta/10$.
\end{claim}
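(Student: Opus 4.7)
The plan is to establish the per-sample probability first, and then apply a Chernoff bound to obtain the concentration statement. Fix $y \in V_0$. I would express the desired probability as
\[
\Pr[\text{stored}=(y,\zeta(y))] \;=\; \frac{1}{N^4}\sum_{(x_1,\ldots,x_4)\,:\,\sum_i x_i=y} \Pr[\text{all four \ModelTest calls return }1],
\]
where the inner probability is over the (fresh, independent) internal randomness of the four \ModelTest calls. Whenever all four calls accept, Lemma \ref{lem:modeltest} ensures that each $x_i \in A^{(2)}$ (except with probability $\delta'$ per call), so Claim \ref{clm:zeta-linear} forces the stored second coordinate to equal $\zeta(y)$. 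Restricting the sum to tuples with $x_i \in A^{(1)}$ for every $i$ and using that each such call returns $1$ with probability $\geq 1-\delta'$ independently, the expression is at least $\frac{(1-\delta')^4}{N}\,1_{A^{(1)}}^{*4}(y)$.

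The key ingredient is a lower bound on $1_{A^{(1)}}^{*4}(y)$ on $V_0$. I would invoke the behaviour of \bogolyubov on the random oracle $h$: by Claim \ref{clm:noisy-bogolyubov} together with the choice $\delta'=\delta/n^3$, with probability at least $1-\delta/10$ the subspace $V_0$ returned has codimension $O(\theta^{-3})$ and satisfies the Bogolyubov conclusion for the sandwiched function $h'$. Because \bogolyubov only depends on its oracle through the large Fourier coefficients produced by the Goldreich--Levin step, and the function $p(x):=\Pr[\text{\ModelTest returns }1\text{ on }(x,\phi(x))]$ satisfies $p \geq (1-\delta')\,1_{A^{(1)}}$ pointwise, for $\delta'$ small enough the same subspace $V_0$ is also a Bogolyubov subspace for $1_{A^{(1)}}$, which yields $1_{A^{(1)}}^{*4}(y) \geq \theta^4/2$ for $y \in V_0$. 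Combining this with $(1-\delta')^4 \geq 1/2$ gives the per-sample bound $\Pr[\text{stored}=(y,\zeta(y))] \geq \theta^4/(4N)$, which is the first statement of the claim.

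For the concentration statement, summing the per-sample bound over $y \in V_0$ shows that the probability a single sample is accepted at all is at least
\[
|V_0|\cdot \frac{\theta^4}{4N} \;\geq\; \frac{\theta^4}{4}\cdot\exp\!\bigl(-O(\theta^{-3})\bigr),
\]
using $\cod V_0 \leq \theta^{-3}$. A standard Chernoff bound for i.i.d.\ Bernoulli indicators then implies that, out of $T = C\exp(1/\theta^3)\cdot\theta^{-4}\cdot t\cdot\log(10/\delta)$ independent samples, the number accepted is at least $t$ except with probability at most $\delta/10$, provided $C$ is a sufficiently large absolute constant.

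The main obstacle is the transfer of Bogolyubov's convolution lower bound from the randomised oracle $h$ (which takes values between $1_{A^{(1)}}$ and $1_{A^{(2)}}$) to the deterministic function $1_{A^{(1)}}$: Claim \ref{clm:noisy-bogolyubov} as stated only produces a lower bound on $1_{A^{(2)}}^{*4}$, whereas the calculation above requires the same lower bound on $1_{A^{(1)}}^{*4}$. The handle one exploits is that, at the Fourier level, $h$ and $(1-\delta')\,1_{A^{(1)}}$ agree up to an $\ell_\infty$ error of order $\delta'$, so the Goldreich--Levin subroutine inside \bogolyubov identifies the same collection of large Fourier coefficients whether it is run on $h$ or on $1_{A^{(1)}}$, up to the $\delta/10$ failure probability already absorbed in our accounting.
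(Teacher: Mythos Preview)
Your overall structure is right, and the second half (summing over $y \in V_0$ and applying a Chernoff/Hoeffding bound) is exactly how the paper proceeds. The gap is in the first half, precisely where you flag it yourself: the transfer of the Bogolyubov convolution lower bound to $1_{A^{(1)}}^{*4}$ does not go through.

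Your proposed fix asserts that ``at the Fourier level, $h$ and $(1-\delta')\,1_{A^{(1)}}$ agree up to an $\ell_\infty$ error of order $\delta'$''. This is not true. The only guarantee on the random oracle $h$ is that for each $x$, with probability $1-\delta'$, one has $1_{A^{(1)}}(x) \leq h(x) \leq 1_{A^{(2)}}(x)$. The set $A^{(2)}$ may be substantially larger than $A^{(1)}$ (indeed, $|A^{(2)}|$ can be up to a $\poly(1/\rho)$ factor larger), and on $A^{(2)}\setminus A^{(1)}$ the value of $h$ is essentially unconstrained. So neither $h$ nor the expectation $p(x)=\Pr[h(x)=1]$ need be close to $1_{A^{(1)}}$ in any norm, and the large Fourier coefficients picked up by Goldreich--Levin inside \bogolyubov can be genuinely different from those of $1_{A^{(1)}}$. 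Equivalently, the pointwise inequality $h' \geq 1_{A^{(1)}}$ gives $h'^{*4} \geq 1_{A^{(1)}}^{*4}$, which is the wrong direction for what you need.

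The paper avoids this detour entirely: it does not restrict the sum to tuples with $x_i \in A^{(1)}$. Instead it keeps the full expectation $\E_{h}[h(x_1)h(x_2)h(x_3)h(x_4)]$ over the randomness of \ModelTest, couples it to $h'$ via $\Pr[h(x_i)=h'(x_i)\text{ for all }i] \geq 1-4\delta'$, and then invokes the conclusion of Claim~\ref{clm:noisy-bogolyubov} directly, which already says $h'*h'*h'*h'(y) > \theta^4/2$ on $V_0$. This yields the $\theta^4/4N$ bound without ever needing control of $1_{A^{(1)}}^{*4}$. In short: do not discard the contribution of points in $A^{(2)}\setminus A^{(1)}$; the Bogolyubov guarantee you have is for $h'$, so compare $h$ to $h'$, not to $1_{A^{(1)}}$.
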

\begin{proof}
Since the function $h(x) = 1$ exactly when \ModelTest accepts $(x,\phi(x))$, the probability
that a sample $(x_1,\phi(x_1)), \ldots, (x_4,\phi(x_4))$ is accepted and that $x_1+x_2+x_3+x_4=y$,
is equal to
\[
\prob{\bigwedge_{i=1}^4 (h(x_i) = 1) \wedge (x_1+x_2+x_3+x_4=y)}
~=~ (1/N) \cdot \Ex{h,x_1+x_2+x_3+x_4=y}{h(x_1) h(x_2) h(x_3) h(x_4)}
\]
As in Claim \ref{clm:noisy-bogolyubov}, we define the function 
$h' = \max\{1_{A^{(1)}}, \min\{h,1_{A^{(2)}}\}\}$. As before, we have that for each $x$, 
$\prob{h(x) \neq h'(x)} \leq \delta'$, and that $h'*h'*h'*h'(x) > \theta^4/2$ for 
each $x \in V_0$. We can now estimate the above expectation as
\begin{align*}
&~ \Ex{h,x_1+x_2+x_3+x_=y}{h(x_1) h(x_2) h(x_3) h(x_4)} \\
&~\geq~ \Prob{h,x_1+x_2+x_3+x_4=y}{\wedge_{i=1}^4 (h(x_i) = h'(x_i))} 
\cdot \Ex{h,x_1,x_2,x_3}{h'(x_1) h'(x_2) h'(x_3) h'(y+x_1+x_2+x_3)} \\
&~\geq~ (1-4\delta') \cdot h'*h'*h'*h'(y) \\
&~\geq~ (1-4\delta')\cdot (\theta^4/2) ~\geq~ \theta^4/4.
\end{align*}
The last inequality exploited the fact that $h'*h'*h'*h'(y) \geq \theta^4/2$ for $y \in V_0$. 

The probability that a sample is accepted is equal to the probability that one selects a 
pair $(y,\zeta(y))$ for \emph{some} $y \in V_0$. This is  
least $(|V_0|/N) \cdot (\theta^4/2) = \exp(-1/\theta^{3}) \cdot (\theta^4/2)$. 
The bound on the probability of accepting fewer than $t$ samples is then given by a Hoeffding bound.
\end{proof}

Let $(y^1,\zeta(y^1)), \ldots, (y^t,\zeta(y^t))$ be $t$ stored points corresponding to $t$ samples
accepted by the above procedure. 
The following claim analogous to Claim \ref{clm:large-span} shows that for $t=O(n^2)$, the 
projection on the first $n$ coordinates of these points must span a large subspace of $V_0$.

\begin{claim}\label{clm:loc-large-span}
Let $(y^1,\zeta(y^1)), \ldots, (y^t,\zeta(y^t))$ be $t$ points stored according to the above
procedure. For $t = n^2 + \log(10/\delta)$, the probability that 
$\cod(< y^1,\ldots,y^t >) \geq \cod(V_0) + \log(4/\theta^4)$ is at most $\delta/10$.
\end{claim}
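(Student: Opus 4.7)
The proof will mirror the dimension bound in Claim \ref{clm:large-span}, the new wrinkle being that the accepted samples $y^j$ are not uniformly distributed on $V_0$. The first step will therefore be to control the density ratio between their conditional distribution and the uniform distribution on $V_0$.

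By Claim \ref{clm:acceptance-subspace}, for every $y \in V_0$ we have $\Pr[\text{accept} \wedge y^j = y] \geq \theta^4/(4N)$, so $\Pr[\text{accept}] \geq |V_0|\theta^4/(4N)$. On the other hand, since $x_1^j+x_2^j+x_3^j+x_4^j$ is uniform on $\F_2^n$, we trivially have $\Pr[\text{accept} \wedge y^j = y] \leq 1/N$ for every $y$. Dividing yields
\[\Pr[y^j = y \mid \text{accept}] \;\leq\; \frac{4}{|V_0|\theta^4},\]
so the conditional distribution of an accepted sample is pointwise at most $4/\theta^4$ times the uniform distribution on $V_0$. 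Since the trials are carried out independently, conditioning on $t$ of them being accepted produces $t$ i.i.d.\ draws from this conditional distribution.

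Next I would fix an integer $k$ slightly larger than $\log(4/\theta^4)$, chosen so that $2^{-k}\cdot 4/\theta^4 \leq 1/2$. For any subspace $S \leq V_0$ of codimension $k$ in $V_0$, summing the density bound over $y \in S$ gives $\Pr[y^j \in S \mid \text{accept}] \leq 2^{-k}\cdot 4/\theta^4 \leq 1/2$, so by independence the probability that all $t$ samples lie in a given such $S$ is at most $2^{-t}$. Taking a union bound over all subspaces of $\F_2^n$, of which there are at most $2^{n^2}$, bounds the probability that the span $\langle y^1,\dots,y^t\rangle$ lies inside any $k$-codimensional subspace of $V_0$---which contains the event $\cod(\langle y^1,\dots,y^t\rangle) \geq \cod(V_0) + \log(4/\theta^4)$---by $2^{n^2}\cdot 2^{-t}$. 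For $t = n^2 + \log(10/\delta)$ this evaluates to at most $\delta/10$, as required.

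The only mildly delicate step is the first one: combining the lower bound from Claim \ref{clm:acceptance-subspace} with the trivial upper bound $1/N$ to extract a usable pointwise density bound on the conditional distribution. Once this is in hand, the remainder is a routine subspace-counting argument entirely analogous to the dimension part of Claim \ref{clm:large-span}, with the density $\rho/6$ of $A_\phi^{(1)}$ in that argument replaced here by the density ratio $\theta^4/4$.
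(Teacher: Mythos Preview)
Your approach is essentially the same as the paper's: bound the conditional distribution of an accepted sample by $O(1/\theta^4)$ times uniform on $V_0$, then union-bound over subspaces. One small slip: you take $k$ \emph{strictly larger} than $\log(4/\theta^4)$ and then assert that ``span lies in some $k$-codimensional subspace of $V_0$'' contains the target event $\cod_{V_0}(\text{span}) \geq \log(4/\theta^4)$. The containment goes the wrong way for that choice of $k$: a span of codimension exactly $\lceil \log(4/\theta^4)\rceil$ in $V_0$ satisfies the target event but need not sit inside any subspace of the larger codimension $k$. The paper avoids this by taking $k$ equal to (the ceiling of) $\log(4/\theta^4)$ and using the slightly sharper acceptance lower bound $(|V_0|/N)\cdot(\theta^4/2)$ established inside the proof of Claim~\ref{clm:acceptance-subspace}, rather than the $\theta^4/4$ recorded in its statement; this makes the per-sample probability come out to $\leq 1/2$ without needing to enlarge $k$. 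With that one-line fix your argument is identical to the paper's.
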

\begin{proof}
Let $k = \cod(V_0) + 4\log(4/\theta)$ and let $S$ be any subspace of codimension $k$.
The probability that a sample $(x_1,\phi(x_1)), \ldots, (x_4,\phi(x_4))$ is  accepted and has 
$x_1+x_2+x_3+x_4 = y$ for a specific $y \in S$ is at most $1/N$. Thus, the probability that
an accepted sample $(y^j,\zeta(y^j))$ has $y^j \in S$, conditioned on being accepted, is at most
$(|S|/N)/((|V_0|/N) \cdot (\theta^4/2))$. Thus, the probability that all $t$ stored points lie
in \emph{any} subspace of co-dimension $k$ is at most
\[ \inparen{\frac{|S|/N}{(|V_0|/N) \cdot (\theta^4/2)}}^t \cdot \#\{\text{suspaces of
  co-dimension}~k\}
~=~ \inparen{\frac{\theta^4/4}{\theta^4/2}}^t \cdot 2^{n(n-k)}
~\leq~ 2^{-t} \cdot 2^{n^2},
\]
which is at most $\delta/10$ for $t = n^2 + \log(10/\delta)$.
\end{proof}

Let $V = <y^1,\ldots,y^t>$. The above claim shows that with high probability, the codimension of $V$ satisfies
$\cod(V) = \exp(1/\theta^3)$.
From the way the samples were generated, we also know $\zeta(y^1),
\ldots, \zeta(y^t)$. Since $\zeta$ is a linear function by Claim \ref{clm:zeta-linear}, we can extend
it to a linear transform $x \mapsto Tx$ such that $\forall x \in V$, $Tx = \zeta(x)$ (as in Section
\ref{sec:correlations}). 

We now show that there is a coset of $V$ on which $Tx$ identifies large Fourier coefficients of
the derivative $f_x$.  We define the set $Z \defeq \inbraces{(x,Tx) \suchthat x \in V}$. We will
find a coset of $Z$ such that a significant fraction of points in this coset are of
the form $(x,\phi(x)) \in A_{\phi}'^{(2)}$. Recall that a point $(x,\phi(x))$ in 
$A_\phi'^{(2)}$ satisfies $|\fxhat(\phi(x))| \geq \gamma = O(\e^{16})$. Thus, $Tx$ will be a
linear function selecting large Fourier coefficients for a significant fraction of points in this coset.

The following claim shows the existence of such a coset.

\begin{claim}\label{clm:good-coset}
The sets $Z+A_{\phi}'^{(1)}$ and $Z+A_{\phi}'^{(2)}$ both consist of at most $(1/\theta) \cdot
(N/\abs{Z})$  cosets of $Z$. Hence, for some $c \in A_{\phi}'^{(1)}$ we have
$|(Z+c) \cap A_{\phi}'^{(2)}| \geq |(Z+c) \cap A_{\phi}'^{(1)}| \geq \theta^{2} \cdot \abs{Z}$.
\end{claim}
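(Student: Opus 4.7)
The plan is to prove the claim in two stages: first bound the number of cosets of $Z$ meeting $A_\phi'^{(2)}$ (and hence $A_\phi'^{(1)}$), and then apply pigeonhole to extract a heavy coset. The crux is a structural observation that converts the $V$-coset structure on $\F_2^n$ into the $Z$-coset structure on $\F_2^{2n}$.

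The structural step: I claim that for any two $(x_1,\phi(x_1)), (x_2,\phi(x_2)) \in A_\phi'^{(2)}$ with $x_1+x_2 \in V$, the element $(x_1+x_2,\phi(x_1)+\phi(x_2))$ lies in $Z$. Since $V \subseteq V_0 \subseteq 4A^{(2)}$ by Claim~\ref{clm:noisy-bogolyubov} (applied to the function $h$ produced by \ModelTest), one can write $x_1+x_2 = w_1+w_2+w_3+w_4$ with $w_i \in A^{(2)}$. Claim~\ref{clm:zeta-linear} together with the extension $T$ of $\zeta$ give $T(x_1+x_2) = \phi(w_1)+\phi(w_2)+\phi(w_3)+\phi(w_4)$. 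Rewriting the defining equation as $x_1+w_1+w_2 = x_2+w_3+w_4$ (balanced on both sides) and invoking the Freiman $3$-homomorphism property of $\phi$ on $A^{(2)}$, which follows from the Freiman $8$-homomorphism property guaranteed by Lemma~\ref{lem:modeltest}, yields $\phi(x_1)+\phi(x_2) = \phi(w_1)+\phi(w_2)+\phi(w_3)+\phi(w_4) = T(x_1+x_2)$, so $(x_1+x_2,\phi(x_1)+\phi(x_2)) \in Z$.

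Given this, the map $A_\phi'^{(2)} \to \F_2^{2n}/Z$ factors through the projection $(x,\phi(x)) \mapsto x+V \in \F_2^n/V$: elements of $A_\phi'^{(2)}$ lying in the same coset of $V$ (after projecting to the first $n$ coordinates) automatically lie in the same coset of $Z$. The number of $Z$-cosets hit by $A_\phi'^{(2)}$ is therefore at most the number of $V$-cosets of $\F_2^n$, namely $N/|V|=N/|Z|$, which is in particular at most $(1/\theta)\cdot(N/|Z|)$; the same bound applies to $A_\phi'^{(1)}\subseteq A_\phi'^{(2)}$. Pigeonhole then finishes the argument: from $|A_\phi'^{(1)}| \geq \theta N$ (Lemma~\ref{lem:modeltest}) and the coset bound, some coset $Z+c$ contains at least $(\theta N)\big/\bigl((1/\theta)(N/|Z|)\bigr) = \theta^2|Z|$ elements of $A_\phi'^{(1)}$. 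Choosing $c$ from this nonempty intersection forces $c\in A_\phi'^{(1)}$, and $|(Z+c)\cap A_\phi'^{(2)}| \geq |(Z+c)\cap A_\phi'^{(1)}|$ is immediate from $A_\phi'^{(1)}\subseteq A_\phi'^{(2)}$.

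The main obstacle is executing the structural step cleanly: one needs \emph{both} $V \subseteq 4A^{(2)}$ \emph{and} the Freiman $8$-homomorphism property of $\phi$ on $A^{(2)}$ for the argument to go through. Neither ingredient alone suffices --- the first ensures the necessary $4$-fold decomposition of $v\in V$ inside $A^{(2)}$, while the second provides the linearity to transport this decomposition onto the $\phi$-coordinates. Both are engineered precisely by the ``good model'' construction of Section~\ref{sec:model}, so the claim should be regarded as the payoff of that construction. Once the structural step is in place, the remainder is a routine coset-counting and pigeonhole argument.
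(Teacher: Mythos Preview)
Your proof is correct but follows a different route from the paper's. The paper argues via sumset growth: since $Z \subseteq 4A_\phi'^{(2)}$, one has $Z + A_\phi'^{(2)} \subseteq 5A_\phi'^{(2)} \subseteq 5A_\phi^{(2)}$, and then \Plunnecke's inequality (Lemma \ref{lem:plunnecke}) applied to the doubling bound on $A_\phi^{(2)}$ from Lemma \ref{lem:bsgtest} gives $|5A_\phi^{(2)}| \leq (6/\rho)^{45}N \leq (1/\theta)N$, from which the coset count follows. Your argument instead exploits the Freiman $8$-homomorphism property of $\phi$ on $A^{(2)}$ directly to show that two elements of $A_\phi'^{(2)}$ whose first coordinates lie in the same $V$-coset must lie in the same $Z$-coset, so the $Z$-coset count is bounded by $|\F_2^n/V| = N/|Z|$. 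This is more elementary (it avoids \Plunnecke\ entirely) and in fact yields the sharper bound $N/|Z|$ rather than $(1/\theta)(N/|Z|)$; correspondingly your pigeonhole step would actually give $|(Z+c)\cap A_\phi'^{(1)}| \geq \theta|Z|$ rather than $\theta^2|Z|$. The paper's approach is the more standard additive-combinatorics manoeuvre and does not rely on the homomorphism property, while yours makes fuller use of the good-model construction and shows that the factor $1/\theta$ in the claim is not actually needed.
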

\begin{proof}
Since $Z \subseteq 4 A_{\phi}'^{(2)}$ and $A_{\phi}'^{(1)} \subseteq A_{\phi}'^{(2)}$, we have that
\[Z + A_{\phi}'^{(1)} ~\subseteq~ Z + A_{\phi}'^{(2)} ~\subseteq~ 5 A_{\phi}'^{(2)} ~\subseteq~ 5
A_{\phi}^{(2)} .\]  
The last inclusion follows from
the fact that $A_{\phi}'^{(2)}$ was obtained by intersecting $A_{\phi}'^{(2)}$ (given by Lemma 
\ref{lem:bsgtest}) with a subspace.

We know from Lemma \ref{lem:bsgtest} that 
$\smallabs{A_{\phi}^{(2)} + A_{\phi}^{(2)}} \leq (2/\rho)^8 \cdot N \leq (2/\rho)^8 \cdot (6/\rho)
\cdot \smallabs{A_{\phi}^{(2)}}$. Lemma \ref{lem:plunnecke} (\Plunnecke's inequality) then gives
that $\smallabs{5 A_{\phi}^{(2)}} \leq (6/\rho)^{45} \cdot \smallabs{A_{\phi}^{(2)}} \leq (1/\theta)
\cdot \smallabs{A_{\phi}^{(2)}} \leq (1/\theta) \cdot N$. Thus, 
$\smallabs{Z + A_{\phi}'^{(2)}} \leq (1/\theta) \cdot N$ and it is the union of at most
$(1/\theta) \cdot (N/|Z|)$ cosets.

Since $A_{\phi}'^{(1)} \subseteq Z + A_{\phi}'^{(1)}$, there must exist at least one coset $Z+c$ for
$c \in A_{\phi}'^{(1)}$, such that 
\[\abs{(Z+c) \cap A_{\phi}'^{(1)}} ~\geq~ \frac{\smallabs{A_{\phi}'^{(1)}}}{(1/\theta) \cdot
  (N/|Z|)} ~\geq~ \theta^2 \cdot |Z|,\]
where the last inequality used the fact that $\smallabs{A_{\phi}'^{(1)}} \geq \theta N$, as
guaranteed by Lemma \ref{lem:modeltest}.
\end{proof}

We now show how to computationally identify this coset of $Z$. We will simply sample a sufficiently large number of points on
which \ModelTest answers 1. We will then divide the points into different cosets of $Z$ and pick the coset with the most number of elements. The following claim shows that this procedure succeeds in
finding the desired coset with high probability.

\begin{claim}\label{clm:find-coset}
Let $s =  C \cdot (N/|Z|) \cdot (\log(1/\delta)/\theta^5) \leq 
C \cdot \exp(1/\theta^3) \cdot (\log(1/\delta)/\theta^5)$ for a sufficiently large constant $C$. 
There exists an algorithm which runs in time $O(n^3 \cdot s^2)$ and finds, with
probability at least $1-\delta/5$, a point $c \in A_{\phi}'^{(2)}$ such that 
$\smallabs{(Z+c) \cap A_{\phi}'^{(2)}} \geq (\theta^2/2) \cdot |Z|$.
\end{claim}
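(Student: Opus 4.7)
\textbf{Proof plan for Claim \ref{clm:find-coset}.} The plan is to implement the obvious ``majority coset'' algorithm: sample $s$ points $x^1,\ldots,x^s$ uniformly from $\F_2^n$, query $\phi$ at each (obtaining $(x^i,\phi(x^i))$), run \ModelTest on each pair with error parameter $\delta' = \delta/(20s)$, partition the accepted pairs into cosets of $Z$, and output any representative $c$ of the coset containing the largest number of accepted samples. Two accepted pairs $(x,\phi(x))$ and $(x',\phi(x'))$ lie in the same coset of $Z$ iff $x-x' \in V$ and $\phi(x)-\phi(x') = T(x-x')$; the first condition is checked against the basis of $V^{\perp}$ returned by \bogolyubov, and the second by one linear-map evaluation. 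Pairwise coset comparisons across $s$ points cost $O(n^2 s^2)$ elementary operations, comfortably inside the $O(n^3 s^2)$ budget, and the $s$ invocations of \ModelTest contribute only lower-order additive terms given their polynomial-in-$n,1/\theta,\log(1/\delta)$ cost.

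By a union bound over the $s$ invocations of \ModelTest with error $\delta'$, all calls obey the conclusion of Lemma \ref{lem:modeltest} except with probability $\delta/20$. Conditioning on this, every accepted sample lies in $A_\phi'^{(2)}$ and every sample that happens to land in $A_\phi'^{(1)}$ is accepted. I would then invoke Claim \ref{clm:good-coset} to fix a ``good'' coset $Z+c_0$ with $|(Z+c_0)\cap A_\phi'^{(1)}| \geq \theta^2 |Z|$, so that each uniform sample lands in this coset and is accepted with probability at least $p_0 := \theta^2 |Z|/N$.

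Now apply Chernoff bounds twice. First, the number of accepted samples in $Z+c_0$ is at least $sp_0/2$ except with probability $\exp(-\Omega(sp_0))$. Second, for any coset $Z+c'$ with $|(Z+c')\cap A_\phi'^{(2)}| < (\theta^2/2)|Z|$, the expected number of accepted samples landing in $Z+c'$ is strictly less than $sp_0/2$, so the actual count exceeds $3sp_0/4$ with probability at most $\exp(-\Omega(sp_0))$. By Claim \ref{clm:good-coset} the set $Z+A_\phi'^{(2)}$ is covered by at most $(1/\theta)(N/|Z|) \leq \exp(O(1/\theta^3))$ cosets of $Z$, so a union bound over all potentially-bad cosets costs a factor of $\exp(O(1/\theta^3))$ in the failure probability. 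Choosing $s = C \cdot (N/|Z|) \cdot \log(1/\delta)/\theta^5$ for a sufficiently large constant $C$ gives $sp_0 = C\theta^{-3}\log(1/\delta)$, enough to drive the total failure probability below $\delta/5$ even after the union bound. Combining the two tails then forces the winning coset to satisfy $|(Z+c)\cap A_\phi'^{(2)}| \geq (\theta^2/2)|Z|$, as desired.

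The main technical obstacle is the union bound over cosets: the number of cosets of $Z$ in $Z+A_\phi'^{(2)}$ is as large as $\exp(1/\theta^3)$ because $\cod(V)$ itself is $\Theta(1/\theta^3)$, and one has to absorb this log-factor into $sp_0$. This is precisely what forces the extra $\theta^{-3}$ slack in the sample size and yields the $\theta^{-5}$ in the final bound; everything else (correctness of ModelTest, pairwise coset testing, running time accounting) is routine.
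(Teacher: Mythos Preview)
Your approach is essentially identical to the paper's: sample $s$ points, filter through \ModelTest, partition the survivors into cosets of $Z$, and return the heaviest coset, with the analysis going via concentration plus a union bound over the at most $(1/\theta)(N/|Z|)$ cosets from Claim \ref{clm:good-coset}. The only thing to tidy up is the mismatched thresholds: you show the good coset gets at least $sp_0/2$ samples while only ruling out bad cosets exceeding $3sp_0/4$, which leaves room for a bad coset in the gap to win; simply use a common threshold (e.g.\ show the good coset exceeds $3sp_0/4$, which follows since its mean is at least $sp_0$) and the argument closes.
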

\begin{proof}
We sample $s$ independent elements of the form $(x,\phi(x))$ and reject all the ones on which
\ModelTest outputs 0, where we run \ModelTest with error parameter $\delta' = \delta/(10s)$.
For some $r \leq s$, let $(x_1,\phi(x_1)), \ldots, (x_r,\phi(x_r))$ be the
accepted elements. 

For each $i,j \leq r$, we test if $(x_i, \phi(x_i))$ and $(x_j, \phi(x_j))$ lie in the same coset of
$Z$, by checking if $(x_i-x_j, \phi(x_i) - \phi(x_j)) \in Z$. This takes time $O(n^3)$ for each
$i,j$ as we need to check if $(x_i-x_j, \phi(x_i) - \phi(x_j))$ can be expressed as a linear
combination of the basis vectors for $Z$, which  requires solving a system of linear equations.

Lying in the same coset is an equivalence relation, which divides the points $(x_1,\phi(x_1)), \ldots,
(x_r,\phi(x_r))$ into equivalence classes. We pick the class with the maximum number of
elements. Since $(0,0) \in Z$, for any element $(x_i,\phi(x_i))$ in this class,  we can write the
coset as $Z + (x_i,\phi(x_i))$. We thus pick an arbitrary element of the form $(x_i,\phi(x_i))$
in the largest class and output $c = (x_i,\phi(x_i))$.

The running time of the above algorithm is $O(s^2 \cdot n^3)$. We need to argue that with
probability at least $1-\delta/5$, the coset $Z+c$ with the maximum number of samples satisfies
$|(Z+c) \cap A_{\phi}'^{(2)}| \geq (\theta^2/2) \cdot |Z|$. 

With probability at least $1 - \delta' \cdot s = 1 - \delta/10$, \ModelTest answers 1
on all elements in  $A_{\phi}'^{(1)}$ and 0 on all elements outside $A_{\phi}'^{(2)}$. For any coset
of the form $Z+c$, let $N(Z+c)$ be the number of samples that land in the coset.
Conditioned on the correctness of \ModelTest, we have that for any coset of the form $Z+c$,
\[
 s \cdot \frac{\smallabs{(Z+c) \cap A_{\phi}'^{(1)}}}{N} ~\leq~ 
\ex{N(Z+c)} ~\leq~  
s \cdot \frac{\smallabs{(Z+c) \cap A_{\phi}'^{(2)}}}{N},\]
which by definition of $s$ implies that
\[C \cdot \frac{\log(1/\delta)}{\theta^5} \cdot \frac{\smallabs{(Z+c) \cap A_{\phi}'^{(1)}}}{|Z|} ~\leq~ 
\ex{N(Z+c)} ~\leq~  
C \cdot \frac{\log(1/\delta)}{\theta^5} \cdot \frac{\smallabs{(Z+c) \cap A_{\phi}'^{(2)}}}{|Z|} .\]

By a Hoeffding bound, the probability that $N(Z+c)$ deviates by an additive 
$(C/4) \cdot (\log(1/\delta)/\theta^3)$ from the expectation is
at most $\delta \cdot \exp( - C' (1/\theta^3))$ for any fixed coset. Since the number of cosets is at most $(1/\theta)
\cdot \exp(1/\theta^3)$ by Claim \ref{clm:good-coset}, the probability that on \emph{any} coset $N(Z+c)$
deviates from the expectation by the above amount is at most 
$\delta \cdot \exp( - C' (1/\theta^3)) \cdot (1/\theta) \cdot \exp(1/\theta^3) < \delta/10$ for an
appropriate value of $C'$.

By Claim \ref{clm:good-coset}, we know that there is a coset $Z+c$ with 
$|(Z+c) \cap   A_{\phi}'^{(1)}| \geq \theta^2|Z|$ and hence 
$\ex{N(Z+c)} \geq C \cdot (\log(1/\delta)/\theta^3)$. By the above deviation bound, we should have that $N(Z+c) \geq (3C/4) \cdot (\log(1/\delta)/\theta^3)$ for this coset.
Thus, the coset with the maximum number of samples, say $Z+c'$, will certainly also satisfy
$N(Z+c') \geq (3C/4) \cdot (\log(1/\delta)/\theta^3)$. Again, by the deviation bound, it must be true that $\ex{N(Z+c')} \geq (C/2) \cdot (\log(1/\delta)/\theta^3)$,
and hence $|(Z+c) \cap   A_{\phi}'^{(2)}| \geq \theta^2|Z|/2$.
\end{proof}

We can now combine the previous argument to prove Lemma \ref{lem:lin-locchoice}.

\begin{proofof}{of Lemma \ref{lem:lin-locchoice}}
We follow the steps described above to find the subspace $V_0$, and subsequently the subspace $V$ together with the
transformation $T$. This immediately yields the subspace $Z = \{(x,Tx) \suchthat x \in V\}$. 
Claim \ref{clm:find-coset} finds $c = (c_1,c_2) \in \F_2^n$  such that a fraction of at least
$\theta^2/2$ of points $(y+c_1,Ty+c_2)$ in the coset $Z+(c_1,c_2)$
are of the form $(x,\phi(x))$ for $(x,\phi(x)) \in  A_{\phi}'^{(2)}$, and so
$|\fxhat^2(\phi(x))| \geq \gamma = O(\e^{16})$. Since $(y,Ty+c_2) = (x+c_1,\phi(x))$ for these
points, we have $T(x+c_1) + c_2 = \phi(x)$. This implies
\begin{equation}\label{ave}
\Ex{x \in c_1+V}{  \widehat{f_x}^2(Tx+Tc_1 + c_2)} ~\geq~ (\theta^2/2) \cdot \gamma^2 ~\geq~ \e^C.
\end{equation}
The errors in the application of Bogolyubov's lemma and in Claims \ref{clm:acceptance-subspace},
\ref{clm:loc-large-span} and \ref{clm:find-coset} add up to $\delta/2 < \delta$. The running time is
dominated by the $C \exp(1/\theta^3) \cdot (1/\theta^4) \cdot t \cdot \log(10/\delta)$ 
calls to \ModelTest in Claim \ref{clm:acceptance-subspace} for $t = O(n^2)$. Since each call to
\ModelTest takes $O(n^2 \log n \cdot \poly(1/\e) \cdot \log(\delta/n^3) )$ time, the total running
time is $O(n^4 \log^2 n \cdot \exp(O(1/\theta^3)) \cdot \log^2(1/\delta))$.
\end{proofof}

\subsubsection*{Fourier analysis over a subspace} 
To begin with we collect some basic facts about Fourier analysis over a
subspace of $\F_2^n$, which will be required for the remaining part of the argument.
Let $f: \F_2^n \to \R$ be a function and let $W
\subseteq \F_2^n$ be a subspace. We define the Fourier coefficients of $f$ with respect to the
subspace as the correlation with a linear phase over the subspace. 

As in the case of Fourier analysis over $\F_2^n$, it is easy to verify that the functions 
$\inbraces{\chi_{\alpha}}_{\alpha  \in W}$ with $\chi_{\alpha}(x) \defeq (-1)^{\langle\alpha,
  x\rangle}$ form an orthonormal basis for functions from $W$ to $\R$
with respect to the inner product $\ip{f_1,f_2}_W ~\defeq~ \Ex{x \in W}{f_1(x) f_2(x)}$. Thus the
dual group $\hat{W}$ of these basis functions is isomorphic to $W$. As in the case of $\F_2^n$, we
have Parseval's identity saying that $\sum_{\alpha \in W} \ip{f,\chi_{\alpha}}_W^2 = \Ex{x\in
  W}{f^2(x)}$.

It is easy to modify the proof of the Goldreich-Levin theorem so that it can be used to identify the
linear functions $\chi_{\alpha}$ for $\alpha \in W$ that have large correlation with a Boolean
function $f$ over a subspace $W$. We omit the details.

\begin{theorem}[Goldreich-Levin theorem for a subspace]\label{thm:gl-subspace}
Let $\gamma,\delta>0$ and $W \subseteq \F_2^n$ be a given subspace.
There is a randomized algorithm which, given oracle access to a function $f: \F_2^n \to \pmone$,
runs in time $O(n^2\log n \cdot \poly(1/\gamma, \log(1/\delta)))$ and 
outputs a list $L = \{\alpha_1,\ldots,\alpha_k\}$ with each $\alpha_i \in W$ such that 
\begin{itemize}
\item $k = O(1/\gamma^2)$.
\item $\prob{\exists \alpha_i \in L~ \smallabs{\ip{f,\chi_{\alpha_i}}_W} \leq \gamma/2} \leq \delta$.
\item $\prob{\exists \alpha \notin L~ \smallabs{\ip{f,\chi_{\alpha_i}}_W} \geq \gamma} \leq \delta$.
\end{itemize}
\end{theorem}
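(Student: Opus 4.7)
The plan is to reduce Theorem \ref{thm:gl-subspace} to the standard Goldreich-Levin theorem (Theorem \ref{thm:linear-decomposition}) applied to a derived function on $\F_2^d$, where $d = \dim W$. First, I would compute a basis $v_1,\ldots,v_d$ of $W$ by Gaussian elimination (in time $O(n^3)$), fix the isomorphism $\iota : \F_2^d \to W$ given by $\iota(a) = \sum_{i=1}^d a_i v_i$, and define $g : \F_2^d \to \{-1,1\}$ by $g(a) \defeq f(\iota(a))$. Each query to $g$ can then be answered using one query to $f$, after $O(nd)$ additional time to evaluate $\iota(a)$.

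The heart of the argument is the correspondence between the $W$-Fourier coefficients of $f$ and the ordinary Fourier coefficients of $g$. Given any $\alpha \in \F_2^n$, define $b(\alpha) \in \F_2^d$ by $b_i(\alpha) \defeq \langle \alpha, v_i\rangle$. A direct computation gives
\[
\ip{f,\chi_\alpha}_W ~=~ \Ex{x \in W}{f(x)(-1)^{\langle \alpha, x\rangle}} ~=~ \Ex{a \in \F_2^d}{g(a)(-1)^{\langle b(\alpha), a\rangle}} ~=~ \hat{g}(b(\alpha)).
\]
Since $\alpha \mapsto b(\alpha)$ is $\F_2$-linear with kernel exactly $W^{\perp}$, it descends to a bijection $\F_2^n / W^\perp \leftrightarrow \F_2^d \cong \widehat{W}$. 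Hence the characters of $W$ are parametrized precisely by $\F_2^d$, and large $W$-Fourier coefficients of $f$ correspond one-to-one to large ordinary Fourier coefficients of $g$, with matching magnitudes.

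I would then invoke Theorem \ref{thm:linear-decomposition} on $g$ with parameters $\gamma$ and $\delta$ to obtain a list $\{b_1,\ldots,b_k\} \subseteq \F_2^d$ with $k = O(1/\gamma^2)$, together with its two probabilistic guarantees on $\hat{g}(b_j)$. For each $b_j$, I would recover the corresponding $\alpha_j \in W$ by solving the linear system $\langle \alpha, v_i \rangle = (b_j)_i$ over $W$ (Gaussian elimination, $O(n^3)$ per $b_j$); equivalently, $\alpha_j$ is the image of $b_j$ under the basis-induced identification $\widehat{W} \cong W$. The two correctness statements of Theorem \ref{thm:gl-subspace} then follow immediately from those of Theorem \ref{thm:linear-decomposition} via the Fourier correspondence above. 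For the running time, the dominant cost is the Goldreich-Levin call on $g$, which issues $O(d \log d \cdot \poly(1/\gamma, \log(1/\delta)))$ queries resolvable against $f$ in $O(n)$ amortized time, yielding the stated bound $O(n^2 \log n \cdot \poly(1/\gamma, \log(1/\delta)))$ since $d \leq n$.

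The only mildly delicate step will be the final identification of an abstract character of $W$ with an actual element $\alpha_j \in W$: when the standard bilinear form restricts non-degenerately to $W$ (i.e.\ $W \cap W^\perp = 0$) the map $W \to \widehat{W}$ is already a bijection and $\alpha_j$ is uniquely determined by $b_j$; otherwise one simply fixes an abstract $\F_2$-linear isomorphism $\widehat{W} \cong W$ via the chosen basis and outputs the corresponding element. No new ideas beyond those in the standard Goldreich-Levin argument are needed, and I expect the write-up to consist mostly of verifying that the reduction is benign with respect to the query and time complexities.
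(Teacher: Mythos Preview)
The paper does not actually give a proof of this statement; it simply remarks that ``it is easy to modify the proof of the Goldreich-Levin theorem'' and omits the details. Your reduction via the isomorphism $\iota:\F_2^d\to W$ is a correct and natural way to supply those details, and the Fourier correspondence $\ip{f,\chi_\alpha}_W=\hat g(b(\alpha))$ is exactly the right observation.

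Two small points worth tightening. First, the one-time $O(n^3)$ Gaussian elimination you invoke (both for the basis and for recovering each $\alpha_j$) is not strictly dominated by the stated $O(n^2\log n\cdot\poly(1/\gamma,\log(1/\delta)))$ bound; since $W$ is ``given'' you should assume it comes with a basis, and you can precompute once an appropriate change-of-basis matrix so that each $\alpha_j$ is obtained by a single $O(nd)$ matrix-vector product rather than a fresh linear solve. Second, your discussion of the ``mildly delicate'' identification $\widehat W\cong W$ is apt: the paper itself is casual here, asserting that $\{\chi_\alpha\}_{\alpha\in W}$ is an orthonormal basis, which tacitly identifies $W$ with an abstract copy of $\F_2^d$ equipped with its own nondegenerate form. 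Your fix (choose any linear isomorphism via the basis) is the right way to make this honest, and nothing downstream in the paper depends on $\alpha_j$ literally lying in $W\subseteq\F_2^n$ under the ambient pairing.
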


\subsection{Finding a quadratic phase on a subspace} \label{sec:local-quadratic}

In order to deduce the refined inverse theorem (Theorem \ref{thm:localinverse}) for $p=2$, we need
to redo the symmetry argument and integration phase with this local expression obtained in Lemma \ref{lem:lin-locchoice}. The modifications to
Samorodnitsky's approach are relatively minor but we give complete proofs nonetheless. One
significant difference is that we will need to take Fourier transforms relative to subspaces.

We begin by obtaining a subspace $W \leqslant V$ on which the matrix $T$ obtained in the previous step is symmetric, thereby providing the ``local" analogue of Lemma \ref{lem:symmetrization}.


\begin{lemma}[Symmetry Argument]\label{lem:findsymmetric}
Given a subspace $V$ and a linear map $T$ with the property that
\[\E_{x \in c_1+ V}  \widehat{f_x}^2(Tx+z_c)\geq \eps^{C},\]
we can output a subspace $W \leqslant V$ of codimension at most $\log(\eps^{-C})$ inside $V$ together with a symmetric matrix $B$ on $W$ with zero diagonal such that
\[\E_{x \in c_1+ W}\widehat{f_x}^2(Bx+z_c)\geq\eps^{C}\]
in time $O(n^3)$.
\end{lemma}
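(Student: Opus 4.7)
The plan is to adapt Samorodnitsky's symmetry argument (the proof of his Theorem 2.3, already cited for the global Lemma~\ref{lem:symmetrization}) to the coset setting $c_1+V$. I would start from the hypothesis
\[
\eps^C \;\leq\; \E_{x\in c_1+V}\widehat{f_x}^2(Tx+z_c)
\;=\;\E_{x\in c_1+V}\E_{y,h}\, f(y)f(y+x)f(y+h)f(y+x+h)\,(-1)^{\langle Tx+z_c,\,h\rangle},
\]
writing $x=c_1+v$ with $v\in V$ so that $\langle Tx+z_c,h\rangle=\langle Tv,h\rangle+\langle Tc_1+z_c,h\rangle$. The constant-in-$v$ phase $(-1)^{\langle Tc_1+z_c,h\rangle}$ can be tracked through every subsequent manipulation and, in the end, will just contribute a new offset $z_c'$.

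Next I would execute the symmetrization step: substitute $y\mapsto y+v$ inside the inner expectation, rearrange, and apply Cauchy--Schwarz in the variable $v\in V$. This is exactly Samorodnitsky's trick for exposing the transpose: the identity $\langle Tv,h\rangle=\langle v,T^{T}h\rangle$ combined with the Cauchy--Schwarz step forces an appearance of $T^{T}$ alongside $T$, and a standard rearrangement yields
\[
\eps^{C'} \;\leq\;\E_{x\in c_1+V}\widehat{f_x}^2\bigl((T+T^{T})x+z_c'\bigr),
\]
with only a polynomial loss in the correlation. Everything in this chain is linear in $v$, so the substitutions preserve $V$ and its coset, and the offsets $c_1,z_c$ only produce additional scalar phases absorbed into $z_c'$.

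I would then handle the zero-diagonal requirement. The matrix $B_0 := T+T^{T}$ (viewed as a map on $V$ after fixing a complement to $V^\perp$) is symmetric, but over $\F_2$ it may have nonzero diagonal $d$, and the quadratic form $\langle x,B_0 x\rangle$ then picks up the extra linear term $\langle d,x\rangle$. Writing $B_0=B+D$ with $B$ symmetric zero-diagonal and $D$ diagonal, the offending term $\langle Dx,h\rangle$ contributes a linear-in-$x$ character. The natural remedy is to restrict to the kernel of this character inside $V$. After tracking all the phase losses created by the Cauchy--Schwarz steps above, which introduce up to $O(\log \eps^{-C})$ ``bad'' directions along which the correlation could be hidden, one iterates this restriction argument (each step costing one codimension and at most a constant factor in $\eps^{C}$) to obtain a subspace $W\leqslant V$ of codimension at most $\log(\eps^{-C})$ inside $V$ on which $B$ is symmetric with zero diagonal and
\[
\E_{x\in c_1+W}\widehat{f_x}^2(Bx+z_c)\;\geq\;\eps^{C}.
\]
All the computations are linear-algebraic: forming $T+T^{T}$, extracting its diagonal, computing the kernel of a linear functional, extending to a basis of $W$, and restricting $B_0$. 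Each of these is done by Gaussian elimination in time $O(n^3)$.

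The main obstacle will be step two: verifying that Samorodnitsky's Cauchy--Schwarz symmetrization, which in the global case enjoys the full translation-invariance of $\F_2^n$, still closes up cleanly when the outer average is only over a coset $c_1+V$. The bookkeeping of the affine offsets $c_1$ and $z_c$ through the substitutions is routine once one realizes that they appear only as constant-in-$v$ characters and therefore can be collected into a single new offset at the end; nevertheless this is the point where one must be careful not to introduce an unwanted dependence on $x$ that would break the symmetric structure of $B$.
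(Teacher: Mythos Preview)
Your proposal has a genuine gap: the Cauchy--Schwarz manoeuvre you describe is the Green--Tao symmetrization, and it does \emph{not} produce $T+T^{T}$ over $\F_2$. In odd characteristic one shows that both $T$ and $T^{T}$ work and then averages to $(T+T^{T})/2$; that division by $2$ is exactly the step that fails here, as the paper itself remarks. Your sentence ``a standard rearrangement yields $\E_{x\in c_1+V}\widehat{f_x}^2((T+T^{T})x+z_c')$'' is the crux, and it is not justified---no standard rearrangement does this in characteristic $2$. (Incidentally, your later worry that $T+T^{T}$ ``may have nonzero diagonal'' is also off: in $\F_2$ the diagonal of $T+T^{T}$ is identically zero, which is another symptom of the char-$2$ obstruction you are running into.)

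The paper follows Samorodnitsky's route instead. Set $g(x)=(-1)^{\langle x,Tx+z_c\rangle}$ and $F(x)=\widehat{f_x}^2(Tx+z_c)$. Samorodnitsky's Lemma~6.11 gives the sign trick $g(x)=-1\Rightarrow F(x)=0$, so one may insert $g$ for free: $\eps^C\le \E_{x\in c_1+V} g(x)F(x)$. Now take the Fourier transform \emph{relative to $V$}, apply Cauchy--Schwarz and Parseval, and bound by $\E_{x\in V}\,g^{c_1}*_V g^{c_1}(x)$. Computing this local convolution explicitly shows it equals $g^{c_1}(x)(-1)^{\langle c_1,c_2\rangle}$ times the indicator of
\[
W'=\{x\in V:\ \langle (T+T^{T})x,y\rangle=0\ \text{for all }y\in V\},
\]
i.e.\ the subspace of $V$ on which $T$ already acts symmetrically. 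One takes $B=T|_{W'}$ (not $T+T^{T}$), and the codimension bound $\le\log(\eps^{-C})$ falls out immediately from $|\E_{x\in V}1_{W'}(x)g^{c_1}(x)|\ge\eps^C$. The zero-diagonal fix is then a single codimension-one cut: with $v$ the diagonal of $B$, intersect $W'$ with (the appropriate coset of) $\langle v+z_c\rangle^{\perp}$, again using Lemma~6.11 to see that the discarded half contributes nothing. There is no iteration and no ``$O(\log\eps^{-C})$ bad directions'' from Cauchy--Schwarz; the codimension loss comes entirely from the density of $W'$.
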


\begin{proof}
We let $g(x)=(-1)^{\langle x,Tx+z_c\rangle}$ and $F(x)=\widehat{f_x}^2(Tx+z_c)$, and begin by noting that by Lemma 6.11 in \cite{Samorodnitsky07}, we have that $g(x)=-1$ implies $F(x)=0$. 
Therefore we have
\[\eps^C \leq \E_{x \in c_1+ V}  \widehat{f_x}^2(Tx+z_c)=\E_{x \in c_1+V} g(x)F(x)=\E_{x \in V} g^{c_1}(x)F^{c_1}(x),\]
we have written $h^{y}(x)$ for the shift $h(x+y)$. Taking the Fourier transform relative to the subspace $V$, we obtain
\[\eps^C \leq (\sum_{\alpha \in \widehat{V}} \widehat{g^{c_1}}(\alpha)\widehat{F^{c_1}}(\alpha))^2,\]
and by the Cauchy-Schwarz inequality and Parseval's theorem this is bounded above by 
\[\sum_{\alpha \in \widehat{V}} \widehat{g^{c_1}}(\alpha)^2 \sum_{\alpha\in \widehat{V}} \widehat{F^{c_1}}(\alpha)^2\leq \E_{x \in V} g^{c_1}*_V g^{c_1}(x).\]
The latter (local) convolution can easily be computed:
\[ g^{c_1}*_V g^{c_1}(x)=\E_{y \in V} (-1)^{ \langle x+y+c_1, T(x+y)+c_2\rangle} (-1)^{\langle y+c_1, Ty+c_2\rangle} = g^{c_1}(x) (-1)^{\langle c_1,c_2 \rangle} \E_{y \in V} (-1)^{\langle (T+T^T) x, y \rangle}.\]
The final expectation gives the indicator function of the subspace 
\[W'=\{x \in V : \langle(T+T^T)x,y \rangle=0 \mbox{ for all } y \in V\},\]
that is, $W'$ is a linear subspace on which $T$ is symmetric. Note that $W'$ is the space of solutions of a linear system of equations, a basis of which can be computed by Gaussian elimination in time $O(n^3)$. 

We denote the map that takes $x$ to $Tx$ for $x \in W'$ by $B$. We have just shown that  
\[|\E_{x \in V} 1_{W'}(x) g^{c_1}(x)|\geq\eps^{C},\]
and in particular since $g$ is bounded, we quickly observe that $W'$ has density at least $\eps^{C}$ inside $V$. This means the codimension can have gone up by at most $\log(\eps^{-C})$, which is negligible in the grand scheme of things. 

It remains to ensure that $B$ has zero diagonal. Again this can be rectified in a small number of steps. Denote this diagonal by $v \in \F_2^n$. Let $W=W' \cap< v+z_c>^{\perp}$ if $\langle c_1,c_2\rangle=0$, otherwise intersect $W'$ with the (unique) coset of $< v+z_c>^{\perp}$. Since $\langle x,Bx\rangle=\langle x,v\rangle$ over $\F_{2}$, we have that $\langle x+c_1,v+z_c\rangle = \langle x, Bx +z_c \rangle +\langle c_1 ,c_2 \rangle$, and thus by Lemma 6.11 in \cite{Samorodnitsky07} if $x+c_1 \in W'$ but $\notin W$, that is, $x+c_1 \notin <v+z_c>^\perp$, then $\widehat{f_x}^2(Bx+z_c)=0$.

Hence we obtain
\[2 \E_{x \in c_1+ W}\widehat{f_x}^2(Bx+z_c) =\E_{x \in c_1+ W'} \widehat{f_x}^2(Bx+z_c),\]
which yields the desired conclusion.
\end{proof}

Finally, we need to perform the integration. The procedure is very similar to Lemma \ref{lem:integration}, but again we have to work relative to a subspace. 


\begin{lemma}[Integration Step]\label{lem:FCtoCorr2}
Let $f:\F_{2}^{n} \ra [-1,1]$. Let $B$ be a symmetric $n \times n$ matrix with zero diagonal such
that $\E_{x \in c_1 + W} \widehat{f_x}^2(Bx+z_c) \geq \eps^C$. Let $A \in \F_2^{n \times n}$ be a
matrix such that $B = A + A^T$.
Then there exist, for every $y \in \F_{2}^{n}$, a vector $r_{y} \in W$ 
such that
\[\E_{y \in \comp{W}}|\E_{x \in y+W} f(x)(-1)^{\langle x,Ax\rangle + \ip{By,x}+ \langle r_y, x\rangle}|
\geq \eps^C.\]
\end{lemma}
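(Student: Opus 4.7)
The plan is to localise Samorodnitsky's integration step (Lemma~\ref{lem:integration}) to the subspace $W$. Let $g(x) = (-1)^{\langle x, Ax\rangle}$ and $F = fg$. The identity $A + A^T = B$ translates into the multiplicative rule
\[
g(x+y) \;=\; g(x)\,g(y)\,(-1)^{\langle By, x\rangle},
\]
which I use repeatedly below.

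First I would rewrite the inner absolute value appearing in the conclusion as a local Fourier coefficient on $W$. Substituting $x = y+w$ with $w \in W$ and expanding $\langle y+w, A(y+w)\rangle = \langle y, Ay\rangle + \langle w, Aw\rangle + \langle By, w\rangle$, the term $\langle By, w\rangle$ cancels with one half of $\langle By, x\rangle = \langle By, y\rangle + \langle By, w\rangle$, while the purely $y$-dependent constants disappear inside the absolute value. This gives
\[
\abs{\E_{x \in y+W}\, f(x)(-1)^{\langle x, Ax\rangle + \langle By, x\rangle + \langle r_y, x\rangle}}
\;=\; \abs{\E_{w \in W}\, F(y+w)(-1)^{\langle r_y + By, w\rangle}}
\;\defeq\; \abs{\phi_y(r_y + By)},
\]
where $\phi_y(\alpha) \defeq \E_{w\in W} F(y+w)(-1)^{\langle \alpha, w\rangle}$. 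A direct Fourier expansion of $F$ on $\F_2^n$ gives $\phi_y(\alpha) = (-1)^{\langle \alpha, y\rangle} \sum_{\gamma \in W^\perp}\widehat{F}(\alpha+\gamma)(-1)^{\langle \gamma, y\rangle}$, so $\abs{\phi_y(\alpha)}$ depends on $\alpha$ only modulo $W^\perp$; for each coset $\kappa = \alpha + W^\perp$ one has $\abs{\phi_y(\alpha)} = \abs{\psi_\kappa(y)}$ with $\psi_\kappa(y) \defeq \sum_{\beta\in\kappa}\widehat{F}(\beta)(-1)^{\langle \beta, y\rangle}$. Set $M(\kappa) \defeq \sum_{\beta\in\kappa}\widehat{F}^2(\beta)$; Parseval yields $\sum_\kappa M(\kappa) \leq 1$, $\E_{y\in\comp{W}}\abs{\psi_\kappa(y)}^2 = M(\kappa)$, and $\abs{\psi_\kappa(y)} \leq 1$ for every $y$.

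The heart of the argument is a Plancherel identity that mirrors Samorodnitsky's global $\E_x\widehat{f_x}^2(Bx) = \sum_\alpha \widehat{F}^4(\alpha)$. Using the rule for $g$ above, one checks that $\widehat{f_x}^2(Bx + z_c) = \widehat{F_x}^2(z_c)$; expanding $\widehat{F_x}(z_c)$ in Fourier and averaging $x$ over the coset $c_1 + W$, a direct computation yields
\[
\E_{x \in c_1 + W}\widehat{f_x}^2(Bx + z_c) \;=\; \sum_\kappa Q(\kappa)^2,
\qquad
Q(\kappa) \;\defeq\; \sum_{\beta \in \kappa}\widehat{F}(\beta)\widehat{F}(\beta + z_c)(-1)^{\langle \beta, c_1\rangle}.
\]
Cauchy--Schwarz on the inner sum gives $Q(\kappa)^2 \leq M(\kappa)\,M(\kappa + z_c)$, and a second Cauchy--Schwarz on the outer sum over $\kappa$ yields $\sum_\kappa M(\kappa)^2 \geq \sum_\kappa Q(\kappa)^2 \geq \eps^C$. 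Combining this with $\sum_\kappa M(\kappa) \leq 1$ forces $\max_\kappa M(\kappa) \geq \sum_\kappa M(\kappa)^2 \geq \eps^C$.

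To finish, pick a coset $\kappa^*$ attaining this maximum and, for each $y \in \comp{W}$, choose $r_y \in W$ so that $r_y + By$ lies in $\kappa^*$ modulo $W^\perp$; then $\abs{\phi_y(r_y + By)} = \abs{\psi_{\kappa^*}(y)}$. Since $\abs{\psi_{\kappa^*}(y)} \leq 1$ implies $\abs{\psi_{\kappa^*}(y)} \geq \abs{\psi_{\kappa^*}(y)}^2$, we obtain
\[
\E_{y \in \comp{W}} \abs{\phi_y(r_y + By)} \;\geq\; \E_{y \in \comp{W}} \abs{\psi_{\kappa^*}(y)}^2 \;=\; M(\kappa^*) \;\geq\; \eps^C,
\]
as required. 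The main obstacle is the Plancherel identity of the third paragraph, whose proof requires several changes of variables and careful tracking of the shifts by $c_1$ and $z_c$. A minor subtlety is that realising an arbitrary character of $W$ by some $r_y \in W$ needs $W + W^\perp = \F_2^n$; when this fails one passes to a non-degenerate subspace of $W$ as in the proof of Lemma~\ref{lem:findsymmetric}, which changes the final codimension bound only by an additive $O(\log(1/\eps))$.
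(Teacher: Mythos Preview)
Your proof is correct but follows a genuinely different route from the paper's.

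The paper works entirely with Fourier analysis \emph{relative to the subspace} $W$: it introduces the auxiliary linear phase $l(z)=(-1)^{\langle z,z_c\rangle}$, rewrites $\widehat{f_x}(Bx+z_c)$ as an average of local inner products $\langle f_x,g_xl\rangle_{y+W}$, applies Cauchy--Schwarz and Parseval over $\widehat W$, and finally bounds everything by $\E_{y\in W^*}\sup_{\alpha\in\widehat W}|\widehat{(fg)^y}(\alpha)|$. It then simply picks, for each coset $y+W$ separately, the character $\alpha_y$ achieving the supremum.

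You instead expand $F=fg$ in the global Fourier basis of $\F_2^n$ and group the coefficients by cosets $\kappa$ of $W^\perp$. Your Plancherel identity $\E_{x\in c_1+W}\widehat{f_x}^2(Bx+z_c)=\sum_\kappa Q(\kappa)^2$ together with two Cauchy--Schwarz steps yields $\max_\kappa M(\kappa)\geq\eps^C$, and you select a \emph{single} coset $\kappa^*$ that works uniformly. This actually proves a bit more than the lemma asks: one may take $r_y+By$ to lie in the fixed coset $\kappa^*$ for every $y$, so the linear parts on the various cosets of $W$ agree modulo $W^\perp$. The paper's argument does not give this uniformity (and does not need it downstream), but it is conceptually closer to Samorodnitsky's global integration step and avoids the somewhat computational Plancherel identity you flag as the main obstacle. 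The ``$r_y\in W$'' subtlety you raise is present in both approaches and is harmless: what is really needed is a character of $W$, which is an element of $\F_2^n/W^\perp$, and the identification $\widehat W\cong W$ is only a choice of representatives.
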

\begin{proof}
Consider the quadratic phase $g(x)=(-1)^{\langle x,Ax\rangle}$ and the linear phase $l(z)=(-1)^{\langle z,z_c\rangle}$. (Note that this is where we require $B$ to have zero diagonal.) We shall first prove that  
 \[\E_{x \in c_1+ W}\widehat{f_x}^2(Bx+z_c)= \E_{x \in c_1+ W}(\E_{y \in \comp{W}}\langle f_x,g_x l \rangle_{y+W})^{2} \leq \E_{y \in \comp{W}}\sum_{\alpha\in \widehat{W}} \widehat{(fgl)^{y}}^{2}(\alpha)\widehat{(fg)^{y}}^{2}(\alpha),\] 
where again we have written $h^{y}(x)$ for the shift $h(x+y)$ and the final Fourier transform is taken with respect to $W$. The equality follows from the fact that 
\[\widehat{f_x}(Bx+z_c)=\E_yf_x(y)(-1)^{\langle y,Bx+z_c\rangle}=\E_{y\in \comp{W}} \E_{z \in y+W}f_x(z)(-1)^{\langle z, Bx+z_c\rangle}\]
and so
\[(-1)^{\langle x,Ax\rangle}\widehat{f_x}(Bx+z_c)=\E_{y\in \comp{W}} \E_{z\in y+W}f_x(z)(-1)^{\langle z+x,A(z+x)\rangle +\langle z,Az\rangle}l(z)=\E_{y\in \comp{W}}\langle f_x,g_x l\rangle_{y+W}, \]
where the inner product is taken over the translate $y+W$.
For the inequality write
\[ \E_{x \in c_1+ W}(\E_{y \in \comp{W}}\langle f_x,g_x l\rangle_{y+W})^{2} \leq \E_{y\in \comp{W}} \E_{x \in c_1+ W}\langle f_x,g_xl\rangle_{y+W}^{2},\]
which equals
\[\E_{y\in \comp{W}}\E_{x \in c_1+ W} (\E_{z \in y+W} fgl(z)fg(z+x))^2=\E_{y\in \comp{W}}\E_{x \in W} (\E_{z \in y+W} fgl(z)fg(z+x+c_1))^2,\]
which in turn can be reexpressed as
\[\E_{y\in \comp{W}}\E_{x \in W} (\E_{z \in W} (fgl)^{y}(z)(fg)^{y}(z+x+c_1))^2 =\E_{y\in \comp{W}}\E_{x \in W}  ((fgl)^{y}*_{W}(fg)^{y})(x+c_1)^2.\]
Taking the Fourier transform with respect to $W$, it can be seen that the latter expression equals
\[\E_{y\in \comp{W}}\sum_{\alpha \in \widehat{W}} \widehat{(fgl)^{y}}^{2}(\alpha)\widehat{(fg)^{y}}^{2}(\alpha),\]
completing the proof of the claim from the beginning. But since all functions involved are bounded,
\[\E_{y\in \comp{W}}\sum_{\alpha \in \widehat{W}} \widehat{(fgl)^{y}}^{2}(\alpha)\widehat{(fg)^{y}}^{2}(\alpha)\leq  \E_{y\in \comp{W}} \sup_{\alpha\in \widehat{W}}|\widehat{(fg)^{y}}(\alpha)|.\]
Now for each $y \in \comp{W}$, we fix a $\alpha_{y} \in \widehat{W}$ such that the supremum is attained. Then we have shown that 
\[\eps^C\leq \E_{y\in \comp{W}}|\widehat{(fg)^{y}}(\alpha_{y})| = \E_{y\in \comp{W}}| \E_{x \in W} f(x+y)(-1)^{\langle x+y,A(x+y)\rangle+\langle \alpha_{y},x\rangle}|,\]
which, after some rearranging of the phase, completes the proof.
\end{proof}

\subsection{Obtaining a quadratic average} \label{sec:quadratic-average}
Finally, we use the subspace $W$ from Section \ref{sec:local-quadratic} to obtain the required
quadratic average.

\begin{lemma}\label{lem:find-linear-parts}
Let $W \leqslant \F_2^n$ be a subspace with $\cod(V) \leq (1/\e^C)$. Let $A \in \F_2^{n \times
  n}$ and $B=A+A^T$ be such that there exist vectors $r_y \in W$ for each $y \in
\comp{W}$ satisfying
\[\Ex{y \in \comp{W}} {\abs{\Ex{x \in y+W}{f(x)(-1)^{\ip{x,Ax} +\ip{By,x} + \ip{r_y, x}}}}} \geq
\sigma.\]
Then for $\delta > 0$, one can find in time
 $n^2 \log n \cdot |\comp{W}| \cdot \poly(1/\sigma, \log(1/\delta))$
a quadratic average with a vector $l_y$ and a constant $c_y$ for each $y \in \comp{W}$ satisfying
\[\Ex{y \in \comp{W}} {\Ex{x \in y+W}{f(x)(-1)^{\ip{x,Ax} +\ip{l_y, x} + c_y}}} \geq  \sigma^2/10.\]
\end{lemma}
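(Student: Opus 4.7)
The plan is to reduce the problem to finding, for each coset $y+W$ separately, a single large Fourier coefficient of a Boolean function with respect to the subspace $W$, and then to lift the resulting characters back to $\F_2^n$. I would start with a change of variables: setting $x = x' + y$ with $x' \in W$ and using the identity
\[
\ip{x' + y,\, A(x'+y)} ~=~ \ip{x', Ax'} + \ip{By, x'} + \ip{y, Ay},
\]
which comes from $B = A + A^T$ together with characteristic $2$, a direct calculation gives
\[
\Ex{x \in y+W}{f(x)(-1)^{\ip{x, Ax} + \ip{l_y, x} + c_y}}
~=~ (-1)^{c_y + \ip{y, Ay} + \ip{l_y, y}} \cdot \ip{G_y,\, \chi_{By + l_y}}_W,
\]
where $G_y(x') := f(x' + y) \cdot (-1)^{\ip{x', Ax'}}$ is a Boolean function on $W$ and $\ip{\cdot,\cdot}_W$ is the subspace inner product from Section~\ref{sec:local-quadratic}. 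The hypothesis therefore becomes $\Ex{y \in \comp{W}}{\max_{\alpha \in W} \abs{\ip{G_y, \chi_\alpha}_W}} \geq \sigma$, and it will suffice, for each $y$, to find a character on $W$ whose \emph{signed} Fourier coefficient is large.

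The algorithm itself is then largely mechanical. For each coset $y + W$ I would invoke the subspace Goldreich-Levin algorithm (Theorem~\ref{thm:gl-subspace}) on $G_y$ with parameters $\gamma = \sigma/4$ and $\delta' = \delta / (3 \abs{\comp{W}})$, obtaining a list $L_y \subseteq W$ of size $O(1/\sigma^2)$. For every $\alpha \in L_y$ I would estimate the signed inner product $\ip{G_y, \chi_\alpha}_W$ within additive error $\sigma / 16$ by random sampling (Lemma~\ref{lem:hoeffding-sample}), with individual failure probability $\delta' / \abs{L_y}$; call this estimate $\widehat c_{y, \alpha}$. I would then let $\alpha_y^\star \in L_y$ maximise $\abs{\widehat c_{y, \alpha}}$, defaulting to $\alpha_y^\star = 0$ with estimate $0$ if $L_y$ is empty. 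Finally, set $l_y := \alpha_y^\star + By \in \F_2^n$ (so that $\chi_{By + l_y}$ and $\chi_{\alpha_y^\star}$ agree as characters on $W$), and pick $c_y \in \{0, 1\}$ so that the displayed expression with $\widehat c_{y, \alpha_y^\star}$ in place of the true Fourier coefficient is non-negative.

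For the analysis, set $v_y := \max_{\alpha \in W} \abs{\ip{G_y, \chi_\alpha}_W}$ and call $y$ \emph{good} when $v_y \geq \sigma/2$. A first-moment bound gives $\Ex{y}{v_y \cdot \mathbf{1}[y~\text{good}]} \geq \sigma/2$. For a good $y$, the Goldreich-Levin guarantee places the true maximiser in $L_y$, so the estimate at $\alpha_y^\star$ has magnitude at least $v_y - \sigma/16 \geq 7\sigma/16$; since this dwarfs the estimation error, the sign of the estimate matches the true sign and the contribution of this coset equals $+\abs{\ip{G_y, \chi_{\alpha_y^\star}}_W} \geq v_y - \sigma/8$. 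For a bad $y$, the sign of the estimate can disagree with the truth only if the estimate has magnitude at most $\sigma/16$, which forces the true coefficient to have magnitude at most $\sigma/8$; either way the contribution is at least $-\sigma/8$. Summing these bounds yields an average correlation of at least $\sigma/2 - \sigma/8 = 3\sigma/8 \geq \sigma^2/10$, using $\sigma \leq 1$ in the final step.

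The running time is dominated by the $\abs{\comp{W}}$ calls to Theorem~\ref{thm:gl-subspace} at cost $O(n^2 \log n \cdot \poly(1/\sigma, \log(\abs{\comp{W}}/\delta)))$ each, plus $O(\abs{\comp{W}}/\sigma^2)$ correlation estimates taking $\poly(1/\sigma, \log(\abs{\comp{W}}/\delta))$ queries each, which matches the stated bound. I expect the one real subtlety to be the sign-choice step: naively trusting the sign of the estimate on every coset could, in principle, accumulate systematic negative contributions from the bad cosets, but the key observation above is that a miscalled sign forces the true Fourier coefficient to be comparable to the estimation error, automatically limiting the damage and keeping the overall bound linear in $\sigma$.
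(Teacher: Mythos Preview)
Your approach is correct and in fact yields a \emph{stronger} conclusion than the paper's own proof: you obtain correlation $\Omega(\sigma)$ rather than $\Omega(\sigma^2)$. The key methodological difference lies in how the sign bit $c_y$ is chosen on the ``bad'' cosets. The paper first restricts to the set $S$ of cosets where a large Fourier coefficient is guaranteed, runs Goldreich--Levin only there, and then fixes the signs on $\comp{W}\setminus S$ \emph{globally} by trying $c_y\equiv 0$ and $c_y\equiv 1$ and estimating both; this yields $(\sigma/2)\cdot(\sigma/4)$ from the density of $S$ times the per-coset correlation. You instead estimate each candidate coefficient to additive accuracy $\sigma/16$ and choose $c_y$ per coset from the sign of that estimate, then control the bad cosets via the observation that a wrong sign forces the true coefficient to be at most $\sigma/8$ in magnitude. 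This avoids the multiplicative loss from the density of $S$.

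There is one small gap worth patching: in your default branch ($L_y=\emptyset$, $\alpha_y^\star=0$, ``estimate $0$''), the estimate is not an actual sample average, so your implication ``estimate $\le\sigma/16\Rightarrow$ true coefficient $\le\sigma/8$'' does not apply as written. Either also sample $\ip{G_y,\chi_0}_W$ in this case, or invoke the Goldreich--Levin guarantee that an empty list forces every coefficient below $\gamma=\sigma/4$, giving contribution $\ge -\sigma/4$; the final bound then becomes $\sigma/2-\sigma/4=\sigma/4\ge\sigma^2/10$, which still suffices. Finally, note that your union bound introduces a $\log|\comp{W}|$ factor in the per-call error; this is harmless here since $\log|\comp{W}|=\cod(W)\le 1/\eps^C=\poly(1/\sigma)$, but it is worth stating explicitly.
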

\begin{proof}
Let $h_y(x) \defeq f(x) (-1)^{\ip{x,Ax} + \ip{x,By}}$. By assumption we immediately find that
\[\Ex{y \in \comp{W}} {\abs{\Ex{x \in y+W}{h_y(x)(-1)^{\ip{r_y, x}}}}} 
~=~  \Ex{y \in \comp{W}} {\abs{\Ex{x \in W}{h_y^y(x)(-1)^{\ip{r_y, x}}}}}
~\geq~ \sigma.\]
Here $h_y^y(x) = h_y(x+y)$ as before. Without loss of generality, we may assume that the vectors $r_y$ 
maximize the above expression.
Thus, we know that on average (over $y$),
the functions $h_y^y$ have a large Fourier coefficient (that is, significant correlation with some vector $r_y \in W$) 
over the subspace $W$. For every $y \in \comp{W}$, we will use Theorem \ref{thm:gl-subspace} to find this Fourier coefficient when it is indeed large. For those $y$ for which the expression $\abs{\Ex{x \in W}{h_y^y(x)(-1)^{\ip{r_y, x}}}}$
is small for all $r_y \in W$, we will simply pick an arbitrary phase.

Let us describe this procedure in more detail. First, by an averaging argument we know that 
\[
\Ex{y \in \comp{W}} {\abs{\Ex{x \in W}{h_y^y(x)(-1)^{\ip{r_y, x}}}}} ~\geq~ \sigma
~\implies~
\Prob{y \in \comp{W}} {\abs{\Ex{x \in W}{h_y^y(x)(-1)^{\ip{r_y, x}}}} \geq \sigma/2} ~\geq~ \sigma/2
.
\]
Let $S \defeq \{y \in \comp{W} \suchthat \abs{\Ex{x \in W}{h_y^y(x)(-1)^{\ip{r_y, x}}}} \geq \sigma/2\}$.
The above inequality shows that $|S| \geq (\sigma/2) \cdot \comp{W}$. Now for each $y \in \comp{W}$, we run the Goldreich-Levin algorithm for the subspace $W$ from Theorem
\ref{thm:gl-subspace} with the function $h_y^y$, the parameter $\gamma = \sigma/2$ and error
probability $\delta^2/2$. 

For each $y \in S$ the algorithm finds, with probability $1-\delta^2$, an $r_y' \in W$ and a $c_y
\in \F_2$ satisfying $\Ex{x \in W}{h_y^y(x)(-1)^{\ip{r_y', x} + c_y}} ~\geq~ \sigma/4$. Thus, with
probability $1-\delta/2$, it finds such an $r_y'$ for at least a $1-\delta$ fraction of $y \in S$.
For $y \notin S$, that is for those $y$ for which the algorithm fails to find a good linear phase, we choose an $r_y'$
arbitrarily. If we can force the contribution of terms for $y \notin S$ to be non-negative, then we have that
with probability $1-\delta/2$
\[ \Ex{y \in \comp{W}} {1_{S}(y) \cdot \Ex{x \in W}{h_y^y(x)(-1)^{\ip{r_y', x} + c_y}}} ~\geq~ 
(1-\delta) \cdot (\sigma/2) \cdot (\sigma/8) \geq \sigma^2/9 .
\]
It remains to choose constants $c_y$ for $y \notin S$ in such a way that their contribution to the average is
non-negative. Consider the two potential assignments $c_y = 0 ~\forall y \notin S$ and $c_y = 1 ~\forall y \notin
S$. Clearly the contribution of the terms
for $y \notin S$ must be non-negative for at least one of the aforementioned assignments, in which case we obtain%
\[ \Ex{y \in \comp{W}} {\Ex{x \in W}{h_y^y(x)(-1)^{\ip{r_y', x} + c_y}}} ~\geq~ \sigma^2/9 .
\]
In order to determine which of the two assignments works, we can try both sets of signs and estimate the corresponding quadratic average using $O((1/\sigma^4)
\cdot \log(1/\delta))$ samples, and choose the set of signs for which the estimate is larger. By
Lemma \ref{lem:hoeffding-sample}, with probability at least $1-\delta/2$, we select a set of values
$c_y$ such that 
\[ \Ex{y \in \comp{W}} {\Ex{x \in y+W}{f(x)(-1)^{\ip{x,Ax} + \ip{x,By} + \ip{x,r'_y} + c_y}}} ~=~
\Ex{y \in \comp{W}} {\Ex{x \in W}{h_y^y(x)(-1)^{\ip{r_y', x} + c_y}}} ~\geq~ \sigma^2/10 .
\]
Choosing $l_y = By + r_y'$ then completes the proof.
\end{proof}

\subsection{Putting things together} \label{sec:final-quadratic-average}
We now give the proof of Theorem \ref{thm:FindQuadraticAverage}.

\begin{proofof}{of Theorem \ref{thm:FindQuadraticAverage}}
For the procedure \FindQuadraticAverage the function $\phi(x)$ will be sampled using Lemma 
\ref{lem:sample-phi} as required. We start with a random $u = (x,\phi(x))$
and a random choice of the parameters $\gamma_1,\gamma_2,\gamma_3$ as
described in the analysis of \bsgtest. We also choose the map $\Gamma$ and the value $c$ randomly
for \ModelTest. 
 We run the algorithm in Lemma \ref{lem:lin-locchoice} 
using \bsgtest and \ModelTest  with the above parameters, and with error parameter $1/4$.

Given a coset of the  subspace $V$ and the map $T$, we find a subspace $W \subseteq V$ and a
symmetric matrix $B$ with zero diagonal, using Lemma \ref{lem:findsymmetric}. We then use the
algorithm in Lemma \ref{lem:find-linear-parts} to obtain the required quadratic average, with
probability $1/4$.

Given a quadratic average $Q(x)$,
we estimate $\abs{\ip{f,Q}}$ using $O((1/\sigma^4) \cdot \log^2(\theta/\delta))$
samples. If the estimate is less than $\sigma^2/20$, we discard $Q$ and repeat
the entire process.  For a $M$ to be chosen later, if we do not find
a quadratic average in $M$ attempts, we stop and output $\bot$.

With probability $\rho/2$, all samples of $\phi(x)$ (sampled with error $1/n^5$)
correspond to a good function $\phi$. Conditioned on this, we have a good choice of
$u$ and $\gamma_1,\gamma_2,\gamma_3$ for \bsgtest with probability $\rho^3/24$.
Also, we have a good choice of the map $\Gamma$ and $c$ for \ModelTest with probability at least
$\theta/2 = \e^{O(1)}$.
Conditioned on the above, 
the algorithm in Lemma \ref{lem:lin-locchoice} finds a good transformation with probability
$3/4$ and thus the output of the algorithm in Lemma \ref{lem:find-linear-parts} is a good quadratic
average with probability at least $1/2$.

Thus, for $M = O((1/\rho^4 ) \cdot (1/\theta) \log(1/\delta))$, the algorithm stops in 
$M$ attempts with probability at least $1-\delta/2$. By choice of the number of
samples above, the probability that we estimate $\abs{\ip{f,(-1)^q}}$ incorrectly at any step is at most 
$\delta/2M$. Therefore we output a good quadratic average with probability at least $1-\delta$.

The complexity of the quadratic average obtained, which is equal to the co-dimension of the space
$W$, is at $O(1/\theta^3) = O(1/\e^C)$.
The running time of each of the $M$ steps is dominated by that of the algorithm in 
Lemma \ref{lem:lin-locchoice}, which is $O(n^4 \log^2 n \cdot \exp(1/\e^K))$.
We conclude that the total running time is $O(n^4 \log^2 n \cdot \exp(1/\e^K) \cdot \log(1/\delta))$.
\end{proofof}

\section{Discussion}
\label{sec:discussion}

One way in which one might want extend the results in this paper is to consider the cyclic group of integers modulo of prime $\Z_N$. A (linear) Goldreich-Levin algorithm exists in this context \cite{AkaviaGS03}, and some quadratic decomposition theorems have been proven (see for example \cite{GW4}). However, strong quantitative results involving the $U^3$ norm require a significant amount of effort to even state. 

For example, the role of the subspace relative to which the quadratic averages are defined will be played by so-called Bohr sets, which act as approximate subgroups in $\Z_N$. Moreover, it is no longer true that the inverse theorem can guarantee the existence of a globally defined quadratic phase with which the function correlates; instead, this correlation may be forced to be (and remain) local.

Since there is an informal dictionary for translating analytic arguments from $\F_p^n$ to $\Z_N$, it seems plausible that many of our arguments could be extended to this setting, at the cost of adding a significant layer of (largely technical) complexity to the current presentation.

\section{Acknowledgements}
 
The authors would like to thank Tim Gowers, Swastik Kopparty, Tom Sanders and Luca Trevisan for
helpful conversations.


\bibliographystyle{amsalpha}
\bibliography{macros,quadratic}


\end{document}